\newlength\tindent
\newcommand{\pr}[2][]{\ensuremath{\mathrm{Pr}_{#1}\!\left(#2\right)}}
\def \R{\mathbb{R}}
\def \union{\cup}
\newcommand{\given}{\ensuremath{\ |\ }}
\let\oldvec\vec
\renewcommand{\vec}[1]{\boldsymbol{#1}}
\newcommand{\geod}[2]{\ensuremath{\Gamma_{#1,#2}}}
\newcommand{\N}{\ensuremath{{N'}}}
\newcommand{\B}[2]{\ensuremath{B(#1,#2)}}
\newcommand{\BHVmetric}[2]{\ensuremath{d_{\text{BHV}}(#1,#2)}}
\newcommand{\splits}[1]{\ensuremath{\sigma(#1)}}
\newcommand{\BHV}[1]{\text{BHV}_{#1}}
\newcommand{\orthant}[1]{\mathcal{O}_{#1}}
\newcommand{\id}{\ensuremath{\textit{I}}}
\newcommand{\GGFdens}{\ensuremath{f_{\text{GGF}}}}
\newcommand{\GGF}{\ensuremath{\text{GGF}}}
\newcommand{\Wm}[2]{\ensuremath{W(#1,#2;m)}}
\newcommand{\varjm}[2]{\ensuremath{\tau_{#1,#2}}}
\newcommand{\varstar}{\ensuremath{\tau^*_{j-a,l,m}}}
\newcommand{\walkdens}{\ensuremath{f_{W}}}
\newcommand{\multfixedbridges}[2]{\oldvec{\vec{#1}}^{(#2)}}
\newcommand{\BB}[2]{\ensuremath{B(#1,#2)}}
\newcommand{\BBRPlus}[2]{\ensuremath{B_{\mathbb{R}_{\geq 0}^{N'}}(#1,#2)}}
\newcommand{\BR}[3]{\ensuremath{\phi(#3\given #1,#2)}}
\newcommand{\Bpath}[2]{\ensuremath{\mathcal{B}(#1,#2)}}
\newcommand{\BRPluspath}[2]{\ensuremath{\mathcal{B}_{\mathbb{R}_{\geq 0}^{N'}}(#1,#2)}}
\newcommand{\BBR}[2]{\ensuremath{\Phi(#1,#2)}}
\newcommand{\permGroup}{\ensuremath{G_{N'}}}
\newcommand{\logmle}{\hat{\ell}}
\DeclareMathOperator*{\argmin}{arg\,min}
\definecolor{DarkBlue}{rgb}{0.0, 0.08, 0.45}
\definecolor{BrickRed}{cmyk}{0, .89, .94, .28}
\renewcommand\subsubsection{\@startsection{subsubsection}{3}{\z@}%
                                   {-3.25ex \@plus -1ex \@minus -.2ex}%
                                   {1.5ex \@plus .2ex}%
                                   {\normalfont\bfseries}}
\newtheorem{result}{Result}[section]
\newtheorem{theorem}[result]{Theorem}
\newtheorem{lemma}[result]{Lemma}
\newtheoremstyle{defn}
  {3ex}
  {3ex}
  {\upshape}
  {}
  {\bfseries}
  {.}
  { }
  {}
\theoremstyle{defn}
\newtheorem{definition}[result]{Definition}
\newtheorem{algorithm}[result]{Algorithm}
\newtheorem*{remark*}{Remark}
\newtheorem*{algorithm*}{Algorithm}
\let\oldproofname=\proofname
\renewcommand{\proofname}{\rm\bf{\oldproofname}}
\newtheoremstyle{break}
  {3ex}
  {3ex}
  {\upshape}
  {}
  {\bfseries}
  {.}
  {\newline}
  {}
\theoremstyle{break}
\newtheorem{rmks}[result]{Remarks}
\newtheorem*{rmks*}{Remarks}
\newtheorem{exmps}[result]{Examples}
\newtheorem*{exmps*}{Examples}
\newenvironment{remarks*}[1][]{%
\begin{rmks*}[#1]\leavevmode\vspace{-\baselineskip}%
}{\end{rmks*}}
\newenvironment{examples*}[1][]{%
\begin{exmps*}[#1]\leavevmode\vspace{-\baselineskip}%
}{\end{exmps*}}
\newtheoremstyle{ex-sc}
  {3ex}
  {3ex}
  {\upshape}
  {}
  {\scshape}
  {.}
  {\newline}
  {}
\theoremstyle{ex-sc}
\newtheorem{excs}{Exercises}
\numberwithin{equation}{section}
\numberwithin{figure}{section}
\begin{document}

\title[Brownian motion in tree space]{Brownian motion, bridges and {B}ayesian inference in phylogenetic tree space}

\author{William M.~Woodman and Tom~M.~W.~Nye}
\email{w.m.woodman2@newcastle.ac.uk, tom.nye@newcastle.ac.uk}
\address{School of Mathematics and Statistics\\
        Newcastle University\\ Newcastle upon Tyne\\ NE1 7RU\\ UK}

\begin{abstract}
Billera-Holmes-Vogtmann (BHV) tree space is a geodesic metric space of edge-weighted phylogenetic trees with a fixed leaf set. 
Constructing parametric distributions on this space is challenging due to its non-Euclidean geometry and the intractability of normalizing constants. 
We address this by fitting Brownian motion transition kernels to tree-valued data via a non-Euclidean bridge construction. 
Each kernel is determined by a source tree \( x_0 \) (the Brownian motion's starting point) and a dispersion parameter \( t_0 \) (its duration). 
Observed trees are modelled as independent draws from the transition kernel defined by \( (x_0, t_0) \), analogous to a Gaussian model in Euclidean space. 
Brownian motion is approximated by an \( m \)-step random walk, with the parameter space augmented to include full sample paths. 
We develop a bridge algorithm to sample paths conditional on their endpoints, and introduce methods for sampling a Bayesian posterior for \( (x_0, t_0) \) and for marginal likelihood evaluation. 
This enables hypothesis testing for alternative source trees. 
The approach is validated on simulated data and applied to an experimental data set of yeast gene trees. 
These methods provide a foundation for future development of a wider class of probabilistic models of tree-valued data.
\end{abstract}

\maketitle

\section{Introduction}

Data sets consisting of trees arise in several contexts: for example
medical imaging of branching structures such as blood vessels \cite{ayd09,skw14} or lungs \cite{fera11a,fera11b}; and in molecular biology where phylogenetic analysis of aligned genomic sequences produces samples of evolutionary trees. 
Analysing such data is challenging, since the space containing the data is usually highly non-Euclidean. 
Spaces of trees are typically not vector spaces, nor manifolds, but combine combinatorial features, usually the discrete branching pattern or topology of each tree, with continuous features, for example edge lengths or the shape of edges on a tree in 3-dimensional space. 
Nonetheless, spaces of trees are often geodesic metric spaces \cite{bill01,speyer2004,fera11a,gav2016,lueg2024}, and this geometric structure can be exploited in order to analyse data. 
A common approach is least squares estimation, for example calculation of a Fr\'echet mean \cite{miller2015,brown2020} (which minimises the sum of squared distances to the data points) or first principal component \cite{nye11,fer13}. 
The least squares approach suffers from some disadvantages.  
First, it is not based fundamentally on probabilistic reasoning, so it can be difficult to assign a measure of uncertainty to the estimates~\cite{bard13,barden2018}.
Secondly, least-squares estimators have a tendancy to lie in low-dimensional strata in tree space \cite{hotz13}, meaning, for example, that the Fr\'echet mean for a collection of binary phylogenetic trees is very often not itself a strictly bifurcating tree -- a drawback in biological applications.

An alternative approach is to construct flexible parametric families of probability distributions and use these to develop models for which maximum likelihood or Bayesian inference can be performed. 
In this paper we consider a specific space of phylogenetic trees, the Billera-Holmes-Vogtmann (BHV) tree space \cite{bill01}; we construct parametric families of distributions which are transition kernels of Brownian motion; and we develop algorithms to sample from Bayesian posteriors in order to fit such distributions to samples of data. 
BHV tree space is a metric space consisting of all edge-weighted phylogenetic trees on a fixed set of taxa, with a unique geodesic between any pair of trees and globally non-positive curvature. 
These properties support convex optimization and ensure uniqueness of Fr\'echet means \cite{sturm2003}. 
Distance and geodesic computations are also tractable \cite{owen_fast_2011}, enabling practical implementations of statistical methods in the space.

However, specifying normalized density functions in BHV tree space is difficult, since even simple distributions have intractable normalizing constants. 
Consequently, given a random sample of trees, likelihood functions are usually impossible to construct and parameter inference is intractable. 
For example, the volume of a unit radius ball in BHV tree space varies with the location of the ball, and is very difficult to compute (see Section~\ref{sec:BHV}). 
As a result, a likelihood function cannot be evaluted for the family of uniform distributions on unit radius balls parametrized by the centre of each ball.

To address this, we simulate simple stochastic processes on tree space to construct distributions and perform inference. 
Given a point $x_0$ in tree space and dispersion parameter $t_0>0$, let $\B{x_0}{t_0}$ denote the transition kernel of Brownian motion from $x_0$ with duration $t_0$  \cite{nye2020random}. 
This distribution is analogous to a multivariate normal distribution in $\R^N$, and although the probability density function cannot be written down in closed form on BHV tree space, $\B{x_0}{t_0}$  can be approximated by a suitably-defined $m$-step random walk $\Wm{x_0}{t_0}$ \cite{nye2020random}.
We model a given data set of points $x_1,\ldots,x_n$ as i.i.d. samples from $\B{x_0}{t_0}$ and infer $x_0, t_0$ in a Bayesian framework. 
A key component is a bridge construction: we sample random walk paths between $x_0$ and each $x_i$ conditional on these end points, and combine this with Metropolis-Hastings Markov chain Monte Carlo (MCMC) to sample the posterior for $x_0,t_0$. 
Although kernels of Brownian motion were originally proposed as models in BHV tree space in \cite{nye14diffusion}, related probabilistic methods based on diffusions have been developed on Riemannian manifolds \cite{sommer2015anisotropic,sommer2017bridge,eltzner2023diffusion}, where the diffusion source $x_0$ is called a diffusion mean \cite{eltzner2023diffusion}.
Methods for constructing bridges on manifolds have been developed \cite{sommer2017bridge}, but constructing bridges in BHV tree space poses substantial additional challenges due to singularities in the space. 

Other parametric models on BHV tree space have been explored previously. 
In \cite{wey14}, distributions with probability density functions of the form 
\begin{equation}\label{equ:KDE}
f(x)\propto\exp -\frac{\BHVmetric{x}{x_0}^2}{2t_0}
\end{equation}
were proposed, where $x,x_0$ are points in BHV tree space, $d_{\text{BHV}}$ is the BHV metric, and $t_0>0$ is a dispersion parameter. 
Surprisingly, this distribution is not $\B{x_0}{t_0}$: analogs of different constructions of the Euclidean Gaussian yield distinct distributions in tree space.  
In \cite{wey14}, kernel density estimates were constructed by summing density functions of the form~$\eqref{equ:KDE}$ at each data point in a sample. 
This contrasts with the present paper, in which we fit a single `Gaussian' distribution to samples of points. 
The normalising constant of~$\eqref{equ:KDE}$ is very difficult to compute and varies with the parameters $x_0,t_0$. 
In~\cite{wey14} this was ignored, but in \cite{weyenberg2016normalizing} normalising constants were calculated via a rough approximation (see Section~\ref{sec:KDE} below).
Related kernel density estimates have also been developed in the so-called tropical tree space \cite{speyer2004,yoshida2022tropical}. 
More recently, log-concave distributions on BHV tree space have been proposed \cite{takazawa2024maximum}. 
The aim is similar to ours, in that they fit a single distribution to samples in BHV tree space. 
However, the methods in \cite{takazawa2024maximum} are not computationally feasible for more than a few taxa. 

The main contributions of this paper are as follows. 
In Section~\ref{sec:bridging} we describe bridge proposal algorithms and a MCMC sampler targeting the distribution of the $m$-step random walk conditional on the start and end points. 
Then, in Section~\ref{sec:inference}, we model samples of points in BHV tree space as i.i.d.~draws from the Brownian motion kernel $\B{x_0}{t_0}$. 
We describe a MCMC sampler which draws from the Bayesian posterior for $x_0,t_0$, and prove a consistency result. 
Finally, in Section~\ref{sec:marginalLikelihood} we specify algorithms for computing marginal likelihoods, effectively enabling computation of normalising constants for the Brownian motion kernels. 
The methodology is evaluated on simulated data (Section~\ref{sec:simulation}) and applied to a biological example (Section~\ref{sec:biologicalex}). 
In the basic form presented here, the methodology offers a novel Bayesian estimator of mean and dispersion summary statistics for a sample of phylogenetic trees.  
Beyond this, it provides a foundation for new probabilistic methods in tree space, such as hypothesis testing, regression and emulation.

\section{Background}\label{sec:background}
\subsection{BHV tree space}\label{sec:BHV}

In this section we fix notation and describe the geometry of the Billera-Holmes-Vogtmann tree space \cite{bill01}, emphasizing aspects relevant to our methodology. Since the stochastic processes we study avoid high co-dimensional singularities almost surely, full understanding of the details of BHV tree space is not required.

\textbf{Topological structure.} Let $N\geq 4$ be the number of taxa, and let $[N]=\{1,\ldots,N\}$ be the set of leaf labels. 
A phylogenetic tree on $[N]$ is an unrooted tree whose leaves are bijectively labelled by $[N]$, interior edges have strictly positive weight (also referred to as lengths), and no vertex has degree 2. 
Edges attached to leaves are called pendant edges; all others edges are called interior. 
Such trees contain at most $N-3$ interior edges; trees with fewer interior edges are called unresolved.  
The space $\BHV{N}$ is the set of all such trees, both resolved and unresolved, omitting pendant edge lengths as in \cite{bill01}. 
Our account considers unrooted trees, unlike \cite{bill01} where trees are rooted using an additional taxon. 
A rooted version of our methodology can be obtained by simply adding an additional taxon label to give a leaf set labelled $0,1,\ldots,N$, with taxon $0$ giving the position of the root. 

Cutting any internal edge in a tree induces a bipartition $\{A,A^c\}$ of $[N]$ where $A\subset [N]$ has between $2$ and $N-2$ elements and $A\union A^c=[N]$, $A\cap A^c=\emptyset$. 
Such a bipartition is called a split, and for any $x\in\BHV{N}$ we let $\splits{x}$ denote the set of interior splits in $x$, also called the topology of $x$. 
For any split $e$, let $x(e)$ be the edge length if $e \in \splits{x}$, and zero otherwise. 
There are  $M=2^{N-1}-(N+1)$ possible interior splits of $[N]$, and so fixing some ordering $e_1,\ldots,e_M$ of these splits, each $x \in \BHV{N}$ corresponds to a unique vector $\left(x(e_i)\right)_{i=1}^M\in\R^M$. 
Arbitrary sets of splits do not necessarily correspond to valid tree topologies: a compatibility condition must be met, since for example, the two splits $1,2|3,4,\ldots,N$ and $1,3|2,4,\ldots,N$ cannot both be contained in a tree. 
Thus, $\BHV{N}$ embeds as a subset of $\R^M$. 

\begin{figure}
\begin{center}
\includegraphics[width=0.8\textwidth,clip,trim={0.1cm 0cm 0.1cm 0.1cm}]{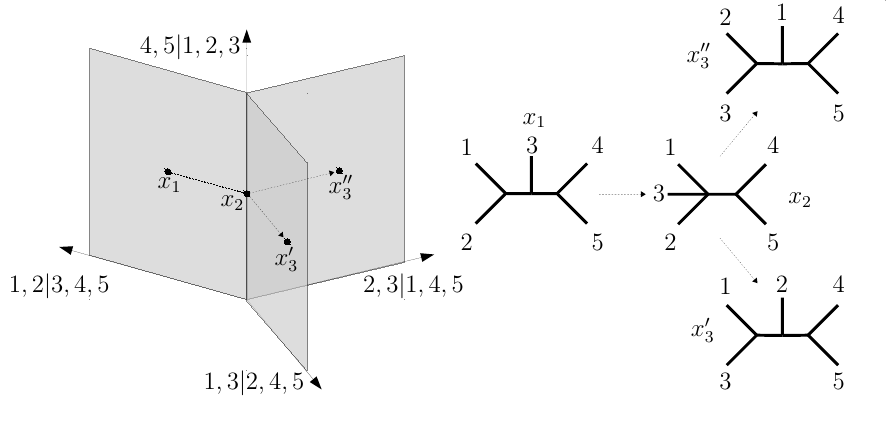}
\begin{caption}{
\label{fig:NNI}
Left: three orthants in $\BHV{5}$. The axes are labelled with the corresponding splits, and the position of a point in each orthant determines the length of the two corresponding internal edges. 
Right: trees corresponding to the points $x_1,x_2,x_3',x_3''$. 
Contracting split $1,2|3,4,5$ to length zero in tree $x_1$ yields the unresolved tree $x_2$. 
This tree can be resolved in two different ways from $x_2$, to give trees $x_3',x_3''$. 
}
\end{caption}
\end{center}
\end{figure}

The set of trees with a fixed fully resolved topology are parametrized by a positive orthant $\R^{N-3}_{>0}\subset\R^M$, and it can be shown that there are $1\times 3 \times 5\times\cdots\times(2N-5)$ such topologies. 
The points at the boundaries of the corresponding orthants, at which one or more split lengths have contracted to zero, correspond to unresolved trees. 
Each unresolved tree topology containing $0\leq s <N-3$ splits corresponds to an orthant $\R^s_{>0}\subset \R^M$ parametrising all trees $x\in\BHV{N}$ with that topology. 
The details of how these orthants meet, forming the stratification of $\BHV{N}$, is not vital for our methodology, but we note the following points. 
\begin{enumerate}
\item The origin of $\R^M$ corresponds to the star tree which contains the $N$ pendant edges but no interior edges. It lies in the closure of every orthant. 
\item Contracting a single interior edge in a fully resolved tree produces a degree-4 vertex. 
This vertex can be resolved by expanding out two alternative splits, as illustrated in Figure~\ref{fig:NNI}, and this operation is called Nearest Neighbour Interchange (NNI).
It follows every maximal orthant (a copy of $\R^{N-3}_{>0}$) is joined to two other maximal orthants at each of its codimension-1 boundaries. 
\item Orthants with $N-(k+3)$ splits are codimension-$k$ boundaries of maximal orthants, $k=1,\ldots, N-3$. 
\end{enumerate}

When $N=4$ each fully resolved tree contains a single internal split and there are three possibilities for this.
Then $\BHV{4}$ consists of the positive axes in $\R^3$, where the position along an axis gives the length of the corresponding split and the origin is the star tree. 
For $N=5$ each fully resolved tree contains two internal splits, and there are 10 possibilities for these. 
$\BHV{5}\subset \R^{10}$ contains 15 maximal orthants (copies of $\R^2_{>0}$) each joined to two neighbours along each codimension-1 boundary. 

\textbf{Metric structure.} For $x_1, x_2 \in \BHV{N}$ with the same topology, the distance $\BHVmetric{x_1}{x_2}$ is their Euclidean distance in $\R^M$. 
For trees with different topologies, we consider paths in $\BHV{N}\subset\R^M$ between $x_1$ and $x_2$ consisting of straight line segments within each orthant, and define the length of a path to be the sum of the lengths of these line segments. 
Then $\BHVmetric{x_1}{x_2}$ is the infimum of the length of these paths. 
This infimum is attained by a unique geodesic \cite{bill01}, computable in $O(N^4)$ time \cite{owen_fast_2011}.

Furthermore, it was shown that $\BHV{N}$ is a CAT$(0)$ space \cite{bill01}, a condition on the curvature of the space which implies a number of attractive geometrical properties, including existence and uniqueness of Fr\'echet means.  
For a sample $x_1,\ldots,x_n\in\BHV{N}$, the Fr\'echet sample mean defined by
\begin{equation}\label{equ:frechet}
\argmin_{x\in\BHV{N}} \frac{1}{n}\sum_{i=1}^n \BHVmetric{x}{x_i}^2.
\end{equation}
exists and is unique. 
This mean may lie in a lower-dimensional orthant even when all samples are resolved, and for some data, the mean remains unchanged under small perturbations of the data. 
This phenomenom is called \textit{stickiness} and has it important implications for the asymptotic theory of the Fr\'echet mean \cite{hotz13,bard13}.   

The geodesic $\Gamma_{x_1,x_2}$ between $x_1,x_2\in\BHV{N}$ deforms $x_1$ to $x_2$ by contracting and expanding edges. 
We will parametrize $\Gamma_{x_1,x_2}(t)$ with $t \in [0,1]$. 
For $x_1,x_2$ with different fully resolved topologies, $\Gamma_{x_1,x_2}$ will traverse singularities, namely regions with codimension $>1$, and in general, geodesics can traverse high-codimension strata over open subintervals of $[0,1]$. 

\begin{definition}
A geodesic $\Gamma_{x_1,x_2}$ is a \emph{cone path} if it consists of the straight line segments from $x_1$ to the origin, and from the origin to $x_2$. 
It is \emph{simple} if $x_1,x_2$ are fully resolved and the geodesic remains in codimension $\leq 1$ regions.
\end{definition} 

We equip $\BHV{N}$ with the Borel sigma algebra induced by its metric \cite{willis_confidence_2016}. 
On each orthant, this coincides with the usual Borel subsets of $\R^{N-3}_{>0}$. 
The reference Borel measure is obtained by summing Euclidean measures over maximal orthants.

\subsection{Random walks and Brownian motion in BHV tree space}

Brownian motion on $\BHV{N}$ was formally defined in \cite{nye2020random} as a Markov process $X(t)$ starting from $X(0) = x_0$, where $x_0$ is either fully resolved or in a codimension-1 orthant. 
Within a maximal orthant, $X(t)$ evolves as standard Brownian motion until it hits a codimension-1 boundary, at which point it moves continuously and uniformly at random to one of the three adjacent orthants. 
We write $\B{x_0}{t_0}$ for the distribution of $X(t_0)$ given $X(0)=x_0$.

For $N=4$, the density of $\B{x_0}{t_0}$ can be computed exactly, making it useful for validating later algorithms \cite{nye14diffusion}. 
In this case, $\BHV{4}$ consists of three rays joined at the origin, and the density is a superposition of Gaussian terms on each copy of $\R_{\geq 0}$, some with negative weights. 
As shown in \cite{nye14diffusion}, when data are modelled as i.i.d. from $\B{x_0}{t_0}$, the maximum likelihood estimator for $x_0$ does not exhibit the same stickiness property as the Fr\'echet mean, suggesting diffusion means (estimators for $x_0$) may be preferable.

A random walk on $\BHV{N}$ which approximates Brownian motion under a certain limit was also defined in \cite{nye2020random}. 
Starting from $x_0$ it proceeds for $m\times(N-3)$ steps by randomly perturbing one edge length at a time via an innovation with variance $t_0/m$. 
At codimension-1 boundaries, it moves uniformly at random to one of the three adjacent orthants. 
The induced distribution on the end point of the random walk was shown to converge to $\B{x_0}{t_0}$ as $m\rightarrow\infty$ (specifically, weak convergence of probability measures). 
However, it is advantageous to use a multivariate innovation to generate random walks, as described in Section~\ref{sec:GGF} below.

\section{Analogs of Gaussians in tree space}\label{sec:Gaussians}

The Euclidean Gaussian has several equivalent constructions, but in BHV tree space, these yield distinct distributions with no canonical choice. 
While the rest of the paper focuses on Brownian motion kernels as Gaussian analogs, this section introduces two alternatives: (i) the Gaussian kernel distribution and (ii) a distribution obtained by firing geodesics, the latter used later for the bridge proposal mechanism. 

\subsection{Gaussian kernel distribution}\label{sec:KDE}

The Gaussian kernel distribution is the distribution on $\BHV{N}$ with probability density function specified in Equation~\eqref{equ:KDE} with location parameter $x_0\in\BHV{N}$ and dispersion parameter $t_0>0$. 
In Euclidean space, the density is precisely that of an isotropic Gaussian distribution with mean $x_0$ and variance $t_0$. 
The distribution was used in \cite{wey14} to construct kernel density estimates. 
The normalising constant in~\eqref{equ:KDE} depends on $x_0$ and $t_0$.
If $x_0$ lies in the interior of a maximal orthant, most mass lies within that orthant, and the constant approximates that of an $(N-3)$-dimensional Gaussian. 
However, when $x_0$ has small edge lengths (relative to $t_0^{1/2}$), other orthants contribute significantly. 
Each contribution depends on the combinatorics of the geodesics from trees in that orthant to $x_0$, so exact computation is difficult. 
In \cite{weyenberg2016normalizing}, an approximate normalising constant was obtained by computing the integral of the Gaussian distribution over the orthant containing $x_0$, and computing the contribution from all other orthants using cone paths to $x_0$. 
This approximation is poor when $x_0$ contains a mixture of short edges and long edges.
Because of these issues, we propose the Brownian kernel distribution as a more practical alternative for inference.
Although simulation methods for the Gaussian kernel have not been explicitly developed, they can be implemented using a Metropolis-Hastings MCMC algorithm using random walks as proposals.

\subsection{Gaussian via geodesic firing}\label{sec:GGF}

Here we describe the Gaussian via Geodesic Firing, or GGF, distribution with location parameter $x_0 \in \BHV{N}$ and dispersion $t_0 > 0$, initially assuming $x_0$ is fully resolved.
The distribution is defined by the following sampling procedure. 
First, draw $\vec{v}\in\R^{N-3}$ from $N(0,t_0\id_{N-3})$ and fire a geodesic from $x_0$ in direction $\vec{v}$ within the orthant containing $x_0$.
If a codimension-1 boundary is reached, the geodesic continues into one of the two adjacent maximal orthants, chosen uniformly. 
This process continues until the geodesic reaches length $|\vec{v}|$, yielding a sample $x \in \BHV{N}$.

The direction $\vec{v}$ almost surely avoids higher-codimension regions, so $\Gamma_{x_0,x}$ is simple almost surely. 
It follows that GGF has regions of zero density in tree space: for example for certain maximal orthants $\orthant{}\subseteq\BHV{N}$, $\Gamma_{x_0,x}$ will be a cone path for all $x\in\orthant{}$, and so GGF will have zero density there.
The probability density function is:
\begin{equation}\label{equ:GGF-dens}
\GGFdens(x| x_0, t_0) =\begin{cases} \left(\frac{1}{2}\right)^{\nu(x,x_0)}\frac{1}{(2\pi)^\N t_0^{\N/2}}\exp-\frac{1}{2t_0}{\BHVmetric{x}{x_0}}^2 & \text{if $\Gamma_{x,x_0}$ is simple}\\
0 & \text{otherwise}
\end{cases}
\end{equation}
where $\N=N-3$, $\Gamma_{x,x_0}$ is the geodesic from $x$ to $x_0$ and $\nu(x,x_0)$ is the number of codimension-$1$ boundaries traversed by the geodesic.

At codimension-1 boundaries, GGF extends geodesics into the two adjacent maximal orthants with probability $1/2$ each. 
An alternative in \cite{nye2020random} extends with probability $1/3$ into each adjacent orthant and reflects off the boundary with probability $1/3$. 
Though this variant shares the same support as GGF, its associated random walk converges more slowly to Brownian motion and is more computationally complex. 
We therefore do not use it further.

The random walk in \cite{nye2020random} modifies one edge length at a time, which limits its use in bridge construction in $\BHV{N}$. 
A more flexible alternative is defined as follows:

\begin{algorithm}\label{alg:RW-GGF}
Let $x_0\in\BHV{N}$ be fully resolved, and let $y_0=x_0$. 
Then, for $j=1,2,\ldots,m$ sample $y_j$ from $\GGF(y_{j-1},t_0/m)$.
\end{algorithm}

The distribution of $y_m$ conditional on $x_0,t_0$ is denoted $\Wm{x_0}{t_0}$. 
We claim that this random walk also converges to Brownian motion. 

\begin{lemma}\label{lem:conv}
\begin{equation*}
\Wm{x_0}{t_0}\xrightarrow{w}\B{x_0}{t_0}
\end{equation*}
as $m\rightarrow\infty$, where $w$ denotes weak convergence.
\end{lemma}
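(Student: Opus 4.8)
The plan is to prove weak convergence through convergence of the associated Markov transition semigroups, taking as given the characterisation of $\B{x_0}{t_0}$ as the time-$t_0$ law of Brownian motion $X(t)$ from \cite{nye2020random}. Write $K_s$ for the one-step GGF transition operator, $(K_s f)(y)=\ex{f(y')}$ with $y'\sim\GGF(y,s)$. By Algorithm~\ref{alg:RW-GGF} the endpoint $y_m$ satisfies $\ex{f(y_m)}=(K_{t_0/m}^{\,m}f)(x_0)$, so $\Wm{x_0}{t_0}$ is characterised as the measure $f\mapsto (K_{t_0/m}^{\,m}f)(x_0)$. I would show that $K_s f=f+sLf+o(s)$ as $s\to 0$ on a suitable core of the generator $L$ of $X(t)$, and then apply a product-formula / generator-convergence theorem of Chernoff and Trotter--Kurtz type (the diffusion-approximation machinery of Ethier and Kurtz) to obtain $K_{t_0/m}^{\,m}\to e^{t_0 L}$ strongly. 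Evaluating at $x_0$ gives $(K_{t_0/m}^{\,m}f)(x_0)\to (e^{t_0L}f)(x_0)=\int f\,\mathrm{d}\B{x_0}{t_0}$ for all $f$ in the core, which is exactly the asserted weak convergence.

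The expansion $K_sf=f+sLf+o(s)$ has three regimes. In the interior of a maximal orthant the space is flat, so a GGF step is simply the Euclidean increment $y'=y+\vec v$ with $\vec v\sim N(0,s\,\id_{\N})$; a second-order Taylor expansion together with the Gaussian moments $\ex{v_i}=0$ and $\ex{v_iv_j}=s\delta_{ij}$ yields $(K_sf)(y)=f(y)+\tfrac{s}{2}\Delta f(y)+O(s^2)$, identifying $L=\tfrac12\Delta$ away from boundaries and matching the interior dynamics of $X(t)$.

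The delicate regime, and the main obstacle, is the codimension-$1$ boundary. My approach would be to split a GGF step near such a face into a radial part (the distance travelled) and an orthant-selection part. The radial increment is a reflected Gaussian step that is identical under the $1/2$ geodesic-firing rule of $\GGF$ and under the isotropic $1/3$ rule used to define $X(t)$; it converges to reflected Brownian radial motion, which in particular rules out any spurious stickiness at the boundary. For the selection part, the GGF dynamics are invariant under the permutation group of the three maximal orthants meeting along the face, which forces any weak limit to spread isotropically across them; this shows that the limiting boundary behaviour coincides with the uniform spreading in the definition of $X(t)$, so the discrepancy between the $1/2$-firing rule and the $1/3$-rule is immaterial in the limit. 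Making this precise amounts to verifying that $K_sf=f+sLf+o(s)$ persists, uniformly up to the boundary, for $f$ in the core of functions that are smooth within orthants and satisfy the gluing (Kirchhoff-type) conditions across codimension-$1$ faces characterising the domain of $L$. Because $\BHV{N}$ is not a manifold, standard diffusion-approximation results do not apply off the shelf, and the correct core of $L$ must be identified by hand; this identification, rather than the interior computation, is where the work lies.

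Finally, strata of codimension $\geq 2$ can be discarded: the firing direction $\vec v$ almost surely avoids them, as already noted for $\GGF$, and $X(t)$ visits them with probability zero, so they contribute no mass to either side and play no role in the generator expansion. Once the expansion is established on a core of $L$, the semigroup-convergence theorem delivers $\Wm{x_0}{t_0}\xrightarrow{w}\B{x_0}{t_0}$.
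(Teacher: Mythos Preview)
Your semigroup/generator approach is valid in principle, and your symmetry observation at codimension-$1$ faces is correct: the GGF transition kernel on the three adjacent maximal orthants is invariant under their permutation, which forces any weak limit to spread uniformly. This is the same phenomenon the paper invokes, phrased differently: the paper does not build a generator argument at all, but simply observes that the convergence proof in \cite{nye2020random} already handles a closely related random walk (one that reflects with probability $1/3$ at codimension-$1$ boundaries), and that the only discrepancy between that walk and the GGF walk is the $1/2$-versus-$1/3$ boundary rule. The paper then argues, as you do, that this discrepancy washes out in the limit because the boundary is crossed many times before the walk moves away; hence the existing proof carries over with this single modification. Your route reproves from scratch what the paper obtains by citation, and the step you identify as ``where the work lies''---identifying a core for $L$ on a stratified space and verifying the expansion uniformly up to the boundary---is precisely the machinery already developed in \cite{nye2020random}. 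So your plan is sound but unnecessarily heavy: the paper's argument is a two-line reduction to a known result, whereas yours would require rebuilding that result's technical apparatus.
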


While \cite{nye2020random} proves convergence if the random walk innovation includes reflection off codimension-1 boundaries, the GGF version does not satisfy this condition. 
However, near a codimension-1 boundary, in the limit that $m\rightarrow\infty$ the boundary will be crossed repeatedly before the random walk moves away. 
As a result, in the limit, the final orthant is uniformly selected from the three adjacent orthants. 
The convergence argument in \cite{nye2020random} therefore carries over to the GGF random walk with this minor modification.  


\section{Sampling bridges}\label{sec:bridging}

Let $\vec{Y}_{[0,m]}=(Y_0,\ldots,Y_m)$ be the random walk from Algorithm~\ref{alg:RW-GGF}, so that given $Y_0=x_0$, the probability density function of $\vec{Y}_{[1,m]}$ is
\begin{equation}\label{equ:RW-like}
f_{\{\vec{Y}_{[1,m]}|Y_0\}}(\vec{y}_{[1,m]} \;|\; x_0, t_0) = \prod_{j=1}^m \GGFdens(y_j \;|\; y_{j-1},t_0)
\end{equation}
where $\vec{y}_{[1,m]}=(y_1,\ldots,y_m)$ and $y_0=x_0$.  
Our inference methods require sampling $\vec{Y}_{[1,m-1]}$ conditioned on $Y_m=x_\star$ and $Y_0=x_0$, where $x_0,x_\star\in\BHV{N}$.  
The conditional probability density function is
\begin{equation}\label{equ:bridgedensity}
f_{\{\vec{Y}_{[1,m-1]}|Y_m,Y_0\}}\left(\vec{y}_{[1,m-1]} \;|\; x_\star,\, x_0,\, t_0\right) = \frac{f_{\{\vec{Y}_{[1,m]}|Y_0\}}\left((y_1,\ldots,y_{m-1},x_\star) \;|\; x_0, t_0\right)}{\walkdens(x_\star\;|\; x_0, t_0; m)} 
\end{equation}
where $\walkdens(\cdot\;|\; x_0, t_0; m)$ is the density function of $\Wm{x_0}{t_0}$.
In this section we describe proposal distributions for sampling such random walk paths, which we refer to as \emph{bridges}.
These proposals approximate the target distribution but, within our MCMC framework, are used to sample from it exactly.

A challenge in constructing bridges is that for every step $y_{j-1},y_{j}$ in a GGF random walk, $\Gamma_{y_{j-1},y_{j}}$ must be simple. 
Thus, a valid bridge will trace a sequence of NNI operations connecting $x_0$ to $x_\star$. 
Finding the shortest such sequence is NP-complete \cite{nni96}, and only when the geodesic $\geod{x_0}{x_\star}$ is simple does it directly provide a valid NNI sequence. 
In the worst case, the geodesic is a cone path and so gives no information about the sequence of NNI's equired to wind around the origin from $x_0$ to $x_\star$. 

We begin by introducing a basic proposal mechanism, to which we then add certain refinements to improve performance. 
The section concludes with an MCMC sampler that uses these proposals to draw from the target distribution \eqref{equ:bridgedensity}.

\subsection{The basic proposal}

We mimic the construction of a Euclidean Brownian bridge, for which exact sampling from the conditional is straightforward. 
Temporarily abusing notation, consider a Gaussian random walk $Y_0,Y_1,Y_2,\ldots,Y_m$ from $x_0$ to $x_\star$ in $\R^{N-3}$, with each step defined by $Y_{j}|y_0,\ldots,y_{j-1}\sim N\left( y_{j-1},\,\frac{t_0}{m}\id_{N-3} \right)$. 
Conditioned on $\vec{Y}_{[0,j-1]}=\vec{y}_{[0,j-1]}$ and $Y_m = x_\star$, the distribution of $Y_j$ is
\begin{equation}\label{equ:Brownianbridge}
N\left( \frac{m-j}{m-j+1}y_{j-1} + \frac{1}{m-j+1}x_\star, \frac{m-j}{m-j+1}\frac{t_0}{m}\id_{N-3} \right).
\end{equation}
The mean is a point a proportion $1/(m-j+1)$ along the line segment from $y_{j-1}$ to $x_\star$ in $\R^{N-3}$. 
In BHV tree space, the analogy is to sample $y_{j}\in\BHV{N}$ by randomly perturbing the point the same proportion along the geodesic $\Gamma_{y_{j-1},x_\star}$. 
This yields Algorithm~\ref{alg:basic-bridge}.

\begin{algorithm}\label{alg:basic-bridge}
Given $y_0 = x_0$ (fully resolved) and $x_\star \in \BHV{N}$, perform the following steps for $j=1,\ldots,m-1$:
\begin{enumerate}
\item Construct $\Gamma_{y_{j-1},x_\star}$ and let $\mu_{j}=\Gamma_{y_{j-1},x_\star}(1/(m-j+1))$.
\item Sample $y_{j}$ from $\GGF(\mu_j, \varjm{j}{m})$ where
\begin{equation}\label{equ:altertime}
\varjm{j}{m} = \frac{m-j}{m-j+1}\frac{t_0}{m}.
\end{equation}
\item If $\Gamma_{y_{j-1},y_{j}}$ is not a simple geodesic then stop the algorithm and reject the proposal.
\end{enumerate}
\end{algorithm}

The algorithm uses GGF for perturbation, which is computationally efficient and approximates Gaussian behavior locally within an orthant.
(Since $\mu_j$ might be unresolved, GGF must be extended to unresolved starting trees -- see Appendix \ref{app:GGFunresolved}.)
The resulting proposal density for $\vec{Y}_{[1,m-1]}$ is
\begin{equation*}
q(\vec{y}_{[1,m-1]} \,\mid\, y_0=x_0,y_m=x_\star,t_0) = \prod_{j=1}^{m-1}\GGFdens\left( y_{j} \;\middle|\; \Gamma_{y_{j-1},x_\star}\left(\frac{1}{m-j+1}\right) ,\,\varjm{j}{m}\right).
\end{equation*}
In the limit of all points lying within a single orthant and far from any boundaries, the proposal matches the true conditional distribution exactly.  
Note that Algorithm~\ref{alg:basic-bridge} is an independence proposal, but below we describe a proposal which resamples a subsection of a given bridge.
Step (3) of the algorithm is required since the target distribution (Equation~$\eqref{equ:bridgedensity}$) is only supported on paths $\vec{y}_{[1,m-1]}$ for which which $\Gamma_{y_{j-1},y_{j}}$ is simple $j=1,\ldots,m$.
Depending on the type of geodesic between $x_0$ and $x_\star$, this condition is often not met, and we address this in Sections~\ref{sec:singularitybudget} and~\ref{sec:partialbridge}.

An additional important issue is that when $y_{j}$ is sampled from GGF$(\mu_{j}, \varjm{j}{m})$ at step (2), the support of the proposal does not necessarily include all possible random walk paths between $x_0$ and $x_\star$. 
This is because the support of $\GGFdens(\cdot | \mu_{j}, \varjm{j}{m})$ might not contain all points in the support of the random walk step $\GGFdens(\cdot | y_{j-1}, t_0/m)$ from $y_{j-1}$. 
As a result, there are potential issues with convergence to the correct stationary distribution in the MCMC scheme for sampling bridges. 
To address this, a mixture distribution is used at step (2):
\begin{equation}\label{equ:mixtureprop}
Y_{j}\;|\;y_{j-1},x_0,x_\star,t_0\:\:\sim w(\mu_{j})\GGF\left(\mu_{j}, \varjm{j}{m}\right)
+[1-w(\mu_{j})]\GGF\left( y_{j-1}, {t_0}/{m} \right)
\end{equation}
where $0< w(\mu_{j})< 1$ is a weight depending on $\mu_{j}=\Gamma_{y_{j-1},x_\star}(1/(m-j+1))$. 
Any weight $w(\mu_{j})<1$ ensures the proposal includes all possible GGF random walks, but the GGF perturbation of $\mu_{j}$ is more affected by regions of zero density when $\mu_j$ is close to singularities with codimension $\geq 2$. 
To manage this, we choose $w(\mu_j)$ based on the distance from the origin (star tree), since it can be computed quickly:
\begin{equation}\label{equ:weightfunction}
w(\mu_j)= \max\left\{F_{\chi}\left(\frac{{\BHVmetric{\mu_j}{0}}^2}{\varjm{j}{m}},N-3\right),10^{-3}\right\},
\end{equation}
where $F_{\chi}(\cdot,N-3)$ is the CDF of the $\chi^2$ distribution with $N-3$ degrees of freedom.

\subsection{Dealing with singularities}\label{sec:singularitybudget}

\ifx\du\undefined\newlength{\du}\fi\setlength{\du}{22\unitlength}
\begin{figure}
    \centering
\includegraphics[width=0.6\textwidth]{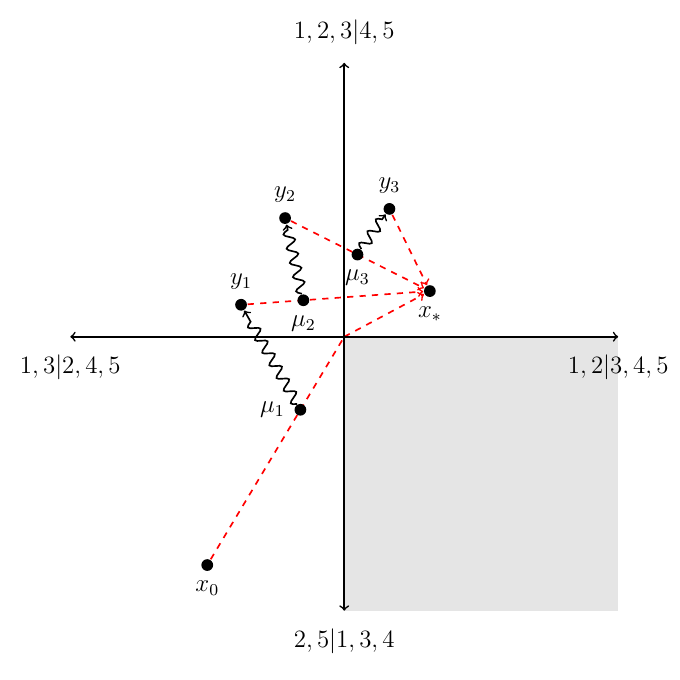}
\caption{
Bridge construction with $m=4$ steps in $\BHV{5}$ using Algorithm~\ref{alg:singularitybridge}. 
Three maximal orthants are shown; the shaded region is excluded from the space. 
Here we use a penalty function $f_p$ that returns $2$ if $\Gamma_{y_{j-1},x_\star}$ contains the origin, and zero otherwise. 
Dashed lines indicate geodesics; waves show GGF perturbations. 
Since $\Gamma_{x_0,x_\star}$ passes through the origin, $p_1=2$ and $\mu_1$ is placed at $\Gamma_{x_0,x_\star}(1/2)$. 
The remaining steps then wind around the origin to $x_\star$; each remaining step has $p_1=p_2=0$ since $y_1$ and $y_2$ are connected via simple geodesics to $x_\star$.
}
    \label{fig:BridgeWithPenalty}
\end{figure}

By design, Algorithm~\ref{alg:basic-bridge} mimics a Euclidean Brownian bridge and works well when $x_0$ and $x_\star$ lie in the interior of the same orthant, yielding high acceptance rates. 
However, departures from this regime result in a lower acceptance probability, often due to the condition in step (3) failing.
Consider the following example.
Suppose that $\Gamma_{x_0,x_\star}$ is a cone path; that $x_0$ is far from any boundary with codimension $\geq 1$; and that  $x_\star$ lies close to the origin. 
Then, if $t_0$ is small, Algorithm~\ref{alg:basic-bridge} will produce a sequence of points $y_1,y_2,\ldots$ lying close to the line segment connecting $x_0$ to the origin. 
The proposal is likely to fail at step (3) in later steps, as too few iterations will remain to wind around the origin to $x_\star$ using simple geodesics.
In order to remedy this and increase the probability of proposing a valid random walk path, at each iteration we allocate a certain budget of steps to wind around singularities on the geodesic $\Gamma_{y_{j-1},x_\star}$. 
This yields the following modified proposal, where the proposal density is given in Appendix~\ref{sec:bridgeMCMCdetails}. 

\begin{algorithm}\label{alg:singularitybridge}
Given $y_0 = x_0$ (fully resolved) and $x_\star \in \BHV{N}$, perform the following steps for $j=1,\ldots,m-1$:
\begin{enumerate}
    \item Construct $\Gamma_{y_{j-1},x_\star}$. Set $ p_j = f_p(\Gamma_{y_{j-1},x_\star})$, where $f_p$ is an integer-valued function called the penalty function.
    \item Let $\gamma$ be the geodesic segment $\Gamma_{y_{j-1},x_\star} [0,1/(m-j+1-p_j)]$.  Define $\mu_{j}$ as follows.
        \begin{enumerate} 
        \item Set $\mu_j=\Gamma_{y_{j-1},x_\star} (1/(m-j+1-p_j))$ if there is no boundary with codimension $> 1$ in $\gamma$; 
        \item otherwise set $\mu_{j}$ to be point on $\gamma$ with codimension $\geq 2$ closest to $y_{j-1}$. 
\end{enumerate}
    \item Sample $y_{j}$ from the mixture distribution in Equation~\ref{equ:mixtureprop}.
    \item If $\Gamma_{y_{j-1},y_{j}}$ is not simple, reject the proposal.
\end{enumerate}
\end{algorithm}

The penalty $p_j$ is the number of iterations budgeted to bypass singularities on $\Gamma_{y_{j-1},x_\star}$, with $f_p$ assigning higher penalties for higher codimension singularities. 
As $p_j$ increases, the size of the step from $y_{j-1}$ towards $x_\star$ increases. 
With the exception of step 2(b), Algorithm~\ref{alg:basic-bridge} is recovered by setting $p_j=0$ for all $j$. 
Step 2(b) is introduced to increase the probability that $\Gamma_{y_{j-1},y_{j}}$ is simple. 
Algorithm~\ref{alg:singularitybridge} is illustrated in Figure~\ref{fig:BridgeWithPenalty}.

In practice, we used the following penalty function, selected via tuning on simulated data.
For a geodesic $\Gamma$, let $\beta(\Gamma)$ be the number of orthants of codimension $>1$ traversed by $\Gamma$ and let $\kappa_i(\Gamma)$ be the codimension of the $i$th such orthant. 
Then  
\begin{equation*}
f_{p}(\Gamma) = \sum_{i=1}^{\beta(\Gamma)}\kappa_i(\Gamma).
\end{equation*}
If $f_{p}(\Gamma_{y_{j-1},x_\star})\geq (m-j-1)$ (so that the penalty exceeds the number of remaining steps), we set $p_j=0$.

\subsection{Partial bridges and MCMC sampling of bridges}\label{sec:partialbridge}

Even with the modifications in Algorithm~\ref{alg:singularitybridge}, the proposal can have a very low acceptance probability. 
We therefore use proposals for a random walk path $\vec{Y}_{[0,m]}^*$ conditional on an existing valid path $\vec{Y}_{[0,m]}=\vec{y}_{[0,m]}$ from $x_0$ to $x_\star$, which update only a segment of the bridge path, replacing steps $a+1,\ldots,a+l$ of $\vec{Y}_{[0,m]}$ rather than resampling the entire bridge. 
Formally, there is a different proposal for each fixed value of $a$ and $l$, each of which is in detailed balance with the target distribution. 
Given $a$ and $l$, new points $Y^*_{a+1},\ldots,Y^*_{a+l}$ are proposed using  Algorithm~\ref{alg:singularitybridge} but with $x_0$ replaced by $y_{a}$; $x_\star$ replaced by $y_{a+l+1}$; and $m$ replaced by $l+1$.

\begin{algorithm}\label{alg:bridgesampler}
We use MCMC to sample random walk paths $\vec{Y}_{[0,m]}$ from the conditional distribution in Equation~$\eqref{equ:bridgedensity}$, using the partial bridge proposal for $\vec{Y}_{[0,m]}^*$ at step $j$, given a valid path $\vec{Y}_{[0,m]}$ at step $j-1$. 
See Appendix~\ref{sec:bridgeMCMCdetails} for details, including the proposal density and a proof that the induced Markov chain has the intended stationary distribution.  
\end{algorithm}

\section{Bayesian inference of source and dispersion parameters}\label{sec:inference}

\subsection{Target distribution}

Suppose $\vec{x}=(x_1,\ldots,x_n)$ is a random sample of trees $x_i\in\BHV{N}$. 
We model each $x_i$ as an independent draw from $\Wm{x_0}{t_0}$, where $m$ is chosen large enough for the random walk to approximate the Brownian motion $\B{x_0}{t_0}$. 
We aim to infer the source and dispersion parameters $x_0 \in \BHV{N}$ and $t_0 > 0$.
Since the density of $\Wm{x_0}{t_0}$ is intractable, we augment the model with the latent random walk paths $\vec{Y}_{i,[1,m-1]}=(Y_{i,1},\ldots,Y_{i,m-1})$ between $x_0$ and $x_i$, $i=1,\ldots,n$. 
For simplicity of notation, for $i=1,\ldots,n$ we let $\vec{Y}_i$ denote the vector of random variables $\vec{Y}_{i,[1,m-1]}$, and let $\vec{y}_i=\vec{y}_{i,[1,m-1]}=(y_{i,1},\ldots,y_{i,m-1})$ be their realisations in $\BHV{N}$. 
Given a prior $\pi(x_0,t_0)$, the joint density of the parameters, random walk paths and data is
\begin{align*}
f(\vec{y}_1,\ldots,\vec{y}_n,\vec{x},x_0,t_0; m) &= \pi(x_0,t_0)\prod_{i=1}^n
\GGFdens(x_i\;|\;y_{i,m-1},t_0/m)\GGFdens(y_{i,1}\;|\;x_0,t_0/m)\\ &\qquad \times\prod_{j=2}^{m-1}\GGFdens(y_{i,j}\;|\;y_{i,j-1},t_0/m)
\\
&=
\pi(x_0,t_0)\prod_{i=1}^n \prod_{j=1}^{m}\GGFdens(y_{i,j}\;|\;y_{i,j-1},t_0/m)
\end{align*}
where the second equality follows by fixing the convention $y_{i,0}=x_0$ and $y_{i,m}=x_i$, $i=1,\ldots,n$. 
The posterior distribution is given by Bayes' theorem
\begin{equation}\label{equ:maintarget}
f(\vec{y}_1,\ldots,\vec{y}_n,x_0,t_0 | \vec{x}; m)=
\frac{\pi(x_0,t_0)\prod_{i=1}^n \prod_{j=1}^{m}\GGFdens(y_{i,j}\;|\;y_{i,j-1},t_0/m)}{f(\vec{x})}
\end{equation}
where the denominator is the marginal probability density function of the data, integrating over all latent paths and parameters. 
A Metropolis-within-Gibbs MCMC algorithm is used to sample from this distribution, avoiding the need to compute the intractable denominator. 
Marginalising over the latent paths yields posterior samples for $(x_0, t_0)$.

\subsection{Proposals}

At each iteration of the MCMC algorithm we sweep through the following proposals: (i) a proposal for $\vec{Y}_i$, $i=1,\ldots,n$; (ii) a proposal for $x_0$; and (iii) a proposal for $t_0$. 
Each proposal is specified below, and the acceptance probabilities are given in Appendix~\ref{sec:mainMCMCdetails}. 
Comments about the performance of the proposals are given in Section~\ref{sec:simulation}. 

\textbf{Proposal for the random walk path $\vec{Y}_i$.} In each iteration of the MCMC algorithm, a starting position $a$ and length $l$ for partial bridges are randomly sampled as in Algorithm~\ref{alg:bridgesampler}, and for $i=1,\dots,n$, the partial bridge proposal with parameters $(a,l)$ is used to propose a bridge $\vec{Y}_i^*$ given the current bridge $\vec{Y}_i=\vec{y}_i$. 

\textbf{Proposal for $x_0$.} This proposal works as follows: we first sample $x_0^*$ from $\GGF(x_0,\lambda_0^2)$, for some fixed parameter $\lambda_0>0$; then, conditional on $x_0^*$, we generate a new random walk path $\vec{Y}_i^*$ from $x_0^*$ to $x_i$, $i=1,\ldots,n$, by replacing the first $l$ steps of each bridge $\vec{Y}_i$ via the partial bridge proposal. 
Formally, there is a different proposal for each value of $l$. 
At each iteration in the MCMC sampler, a proposal is chosen by sampling $l$ from a truncated geometric distribution on $0,\ldots,m-1$ with parameter $\alpha_0\in(0,1)$, and the corresponding proposal is employed for all bridges $i=1,\ldots,n$.  
If $l>0$ we replace points $y_{i,1},\ldots,y_{i,l}$ on each bridge $\vec{Y}_i=\vec{y}_i$ using the partial bridge proposal from Section~\ref{sec:partialbridge} with start point $a=0$ and length $l$, and with $x_0$ replaced by $x_0^*$.
If $l=0$ then only a new value of $x_0$ is proposed, and the bridge steps are unchanged for all $i$. 

\textbf{Proposal for $t_0$.} A simple log-normal proposal is used to update $t_0$: 
\begin{equation*}
\left(t_0^*\;|\;t_0\right)\sim \exp\left( N(\log(t_0),\sigma_0^2) \right)   
\end{equation*}
for a fixed parameter $\sigma_0>0$. 
The bridges are retained. 

\subsection{Priors} 

We assume $x_0$ and $t_0$ to be independent under the prior.
The prior on $x_0$ is uniform over tree topology, with edge lengths chosen to reflect typical divergence levels in phylogenies.
In most biological data sets, path lengths between leaves are less than $1$, since a value of $1$ represents high sequence divergence. 
Letting $\hat{d}_\text{BHV}$ denote the metric between trees which is the product metric of $d_{\text{BHV}}$ and the Euclidean metric on the vector of pendant edge lengths, an upper bound for $\hat{d}_\text{BHV}(0,x)^2$ for realistic phylogenies occurs when $x$ is the star tree with pendant edges of length $1/2$, and we denote this as $D^2_N=N/4$. 
Since $\hat{d}_\text{BHV}\geq d_\text{BHV}$, we use this as an approximate upper bound on ${d}_\text{BHV}(0,x)^2$. 
We assume that $d(0,x_0)$ has a half-normal distribution so that
\begin{equation}\label{equ:x0Prior}
d(0,x_0)^2 \sim\text{Ga}(1/2, 3.3175/D_N^2)
\end{equation}
and the $99\%$ quantile of the distribution is $D_N^2$. 
Alternatively, any tree-valued prior used in phylogenetic inference could be used for $x_0$, such as the commonly used exponential prior on edge lengths.
However, this prior puts very little mass near the origin, and so is less suitable. 

Displacement through BHV tree space by a squared distance $D^2_N$ would correspond to loss of evolutionary signal from the source tree.
For a Brownian motion in $\R^{N-3}$, the expected squared distance from the source is $(N-3)t_0$. 
Using this as an approximation to BHV tree space, our prior on $t_0$ is
\begin{equation*}
t_0 \sim \text{Exp}(4.61(N-3)/D_N^2),
\end{equation*}
so that the $99\%$ quantile of $(N-3) t_0$ is $D_N^2$.

\subsection{Other details}\label{sec:otherdetails}

Initial values for $x_0$ and $t_0$ are based on an approximate Fréchet mean of the sample $x_1, \ldots, x_n$, calculated via a few iterations of an algorithm in~\cite{sturm2003}. 
We set $x_0$ to the data point closest to this mean and $t_0$ to the estimated Fréchet sample variance (the value of the sum in Equation~\eqref{equ:frechet}).

The choice for the number of steps $m$ used for the random walk is guided by a balance between computational speed and the desired level of approximation of the random walk to the Brownian motion. 
Since $\B{x_0}{t_0}$ is supported on all of $\BHV{N}$, $m$ must be large enough for the walk to reach any orthant. 
Known bounds on the minimum number of NNI operations between trees \cite{nni96} guide how $m$ should scale with $N$. 
Forward simulation of random walks for different values of $m$ for fixed $(x_0,t_0)$ can also be used to assess convergence on summary statistics of such samples (e.g. the number of topologies displayed).

\subsection{Bayesian consistency}

We conclude this Section by proving that the posterior distribution is consistent,
in the sense that in the limit of observing an infinite number of data points from the Brownian motion kernel, the posterior distribution for $x_0$ concentrates around the true source tree. 
We prove consistency under the assumption that $t_0$ is fixed and known, and that the data are independently and identically distributed according to $B(x_0,t_0)$. 
Specifically, we apply Doob’s theorem~\cite{DoobLJ1949ATM} in the general formulation of~\cite{MillerJeffreyW2018Adto}, which covers Borel subsets of Polish spaces for both data and parameter spaces.

\begin{theorem}\label{thm:consistency}
Let $\BHV{N}^{(0)}\subset\BHV{N}$ denote the union of the interior of all maximal orthants, and let $\mathcal{A}$ be the Borel $\sigma$-algebra on $\BHV{N}^{(0)}$.  
Let $K(x,\epsilon)$ denote the open ball in $\BHV{N}$ centred at $x$ with radius $\epsilon$. 
Suppose $X_0$ is distributed according to the prior $\pi$ for $x_0$, and that $X_i|X_0\sim B(X_0,t_0)$ independently, $i=1,2,\ldots$. 
Then there exists a set $A\in\mathcal{A}$ with $\pi(A)=1$, such that when $x_0\in A$ and $x_1,x_2,\ldots$ are a sequence of independent observations from $B(x_0,t_0)$, 
$$
\pr{X_0\in K(x_0,\epsilon)\;|\;(X_1=x_1,\ldots X_n=x_n)}\to 1
$$
for any $\epsilon>0$ as $n \to \infty$ almost surely over the measure $B(x_0,t_0)$ on the observations. 

\end{theorem}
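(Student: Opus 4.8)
The plan is to verify the hypotheses of Doob's consistency theorem in the formulation of~\cite{MillerJeffreyW2018Adto}, which requires: (i) that the data space and parameter space are Borel subsets of Polish spaces; (ii) that the family of sampling distributions $\{B(x_0,t_0)\}_{x_0}$ is dominated by a common $\sigma$-finite measure and the resulting map $x_0 \mapsto B(x_0,t_0)$ is suitably measurable; and (iii) that the parameter $x_0$ is \emph{identifiable} from the distribution $B(x_0,t_0)$, i.e.\ the map $x_0\mapsto B(x_0,t_0)$ is injective (modulo a null set). Doob's theorem then yields a set $A$ of full prior measure on which the posterior concentrates at the true parameter. First I would record that $\BHV{N}$ is a complete separable metric space (it is CAT$(0)$ and geodesically complete, hence Polish), so $\BHV{N}^{(0)}$ — the union of open maximal orthants — is an open, hence Borel, subset; this disposes of (i) for both data and parameter spaces, with the Borel $\sigma$-algebra $\mathcal{A}$ as stated.

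For (ii) I would use the reference Borel measure defined in Section~\ref{sec:BHV} (the sum of Euclidean measures over maximal orthants). Since Brownian motion started from a fully resolved $x_0$ has a transition density with respect to this measure on each orthant, the family $\{B(x_0,t_0)\}$ is dominated by it, and measurability of $x_0\mapsto B(x_0,t_0)$ follows from continuity of the heat kernel in its starting point together with the continuity of geodesic and distance computations. The restriction to $\BHV{N}^{(0)}$ in the statement is exactly what lets me avoid the high-codimension singularities, consistent with the remark at the start of Section~\ref{sec:BHV} that these are avoided almost surely.

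The main obstacle is step (iii), identifiability: I must show that distinct source trees $x_0\neq x_0'$ in $\BHV{N}^{(0)}$ induce distinct kernels $B(x_0,t_0)\neq B(x_0',t_0)$. The natural argument is a small-time / local one: within the maximal orthant containing $x_0$, the density of $B(x_0,t_0)$ agrees to leading order with an $(N-3)$-dimensional isotropic Gaussian centred at $x_0$, so its restriction to that orthant already determines $x_0$ (for instance via the location of the density's maximum, or by a short-time expansion of $-2t_0\log$ of the density recovering $\BHVmetric{\cdot}{x_0}^2$ and hence $x_0$ as the unique minimiser). Since $t_0$ is fixed and known, comparing the two densities on the interior of any single orthant where they are both smooth and Gaussian-like suffices to separate $x_0$ from $x_0'$. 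Care is needed because the kernel receives contributions from paths that wind through adjacent orthants, so the density is only \emph{approximately} Gaussian; the cleanest route is to argue that the leading Gaussian term dominates the reflected/negative-weight correction terms near $x_0$ (as is already known exactly for $N=4$ from~\cite{nye14diffusion}), pinning down the mode and thereby $x_0$.

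Having established (i)–(iii), I would invoke the cited form of Doob's theorem verbatim to produce the set $A\in\mathcal{A}$ with $\pi(A)=1$ such that for $x_0\in A$ the posterior satisfies
\[
\pr{X_0\in K(x_0,\epsilon)\;|\;X_1=x_1,\ldots,X_n=x_n}\to 1
\]
for every $\epsilon>0$, $B(x_0,t_0)$-almost surely over the observation sequence. The identifiability argument is the part I expect to require genuine care; the remaining hypotheses are essentially bookkeeping once the Polish/Borel and domination facts are in place.
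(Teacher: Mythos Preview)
Your overall strategy---invoke Doob's theorem in the form of~\cite{MillerJeffreyW2018Adto} after checking Polish/Borel structure, measurability of the kernel map, and identifiability---is exactly the paper's plan, and your treatment of the Polish-space hypothesis is fine.

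Where you diverge from the paper is in the two substantive conditions. For measurability you appeal to ``continuity of the heat kernel in its starting point''. No closed form for the $\BHV{N}$ heat kernel is available, and continuity across orthant boundaries is not obvious; the paper instead establishes measurability of $x_0\mapsto \Wm{x_0}{t_0}(A)$ by induction on $m$ (using that the GGF density is jointly measurable, plus Fubini) and then passes to the limit $m\to\infty$ via weak convergence. Your route may be salvageable, but as written it is an unproven claim.

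The larger gap is identifiability. Your argument rests on showing that the density of $B(x_0,t_0)$ is, near $x_0$, dominated by the isotropic Gaussian term, so that its mode (or a $-2t_0\log$ expansion) pins down $x_0$. But $t_0$ is fixed, not small, so a short-time expansion is unavailable; and ``the leading Gaussian term dominates the reflected correction terms'' is precisely the kind of statement that is hard to make rigorous without explicit control of contributions from paths through arbitrarily many adjacent orthants---control which is essentially as difficult as computing the kernel itself. The paper avoids this entirely. It uses the projection map $\mathcal{P}$ from~\cite{nye2020random}, which sends Brownian paths on $\BHV{N}$ to reflected Brownian paths on $\R^{N'}_{\geq 0}$, to obtain exact (not approximate) relations. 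First, $\mathcal{P}$ preserves distance to the origin, so if $d(x_0,0)\neq d(x_0',0)$ the two kernels assign different mass to balls $K(0,r)$. Second, when the distances agree but the orthants differ, the paper constructs an explicit ball far along the ray from the origin through $x_0$ and bounds its mass under each kernel via the reflection principle on $\R^{N'}_{\geq 0}$, showing the masses differ. This construction is concrete and does not rely on any Gaussian-approximation heuristics.

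In short: right framework, but both your measurability step and your identifiability step are sketches that the paper replaces with arguments grounded in the random-walk approximation and the projection to reflected Euclidean Brownian motion.
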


We give the proof of Theorem~\ref{thm:consistency} in  Section~\ref{sec:consistencyproof} of the Appendix, along with further comments.

\section{Marginal likelihood}\label{sec:marginalLikelihood}

One advantage of our probabilistic approach over existing least squares methods is that hypothesis tests can be performed very naturally via existing Bayesian methodology.   In particular, we want to compare different source trees for a given data set. 
In this section we consider $t_0$ to be fixed and known. Suppose we have data $\vec{x}=(x_1,\ldots,x_n)$, which we model as independent draws from a Brownian motion kernel $B(x_0,t_0)$, approximated by the random walk kernel $\Wm{x_0}{t_0}$ with some fixed $m$. We are interested in testing hypotheses of the form
$$
H_0: x_0=x_0'\text{ and }H_1:x_0=x_0'',
$$
by using Bayes factors.
For this we need to estimate the log marginal likelihood
\begin{align*}
    \log f(\vec{x}|x_0,t_0) =& \sum_{i=1}^n\log\walkdens(x_i\;|\; x_0,t_0; m)\\  =&\sum_{i=1}^n\log\int \prod_{j=1}^{m}\GGFdens(y_{i,j}\;|\;y_{i,j-1},t_0/m)d(y_{i,1},\ldots y_{i,m-1})
\end{align*}
where $\walkdens(\cdot\;|\; x_0, t_0; m)$ is the probability density function of $\Wm{x_0}{t_0}$, $y_{i,0}=x_0$ and $y_{i,m}=x_i$, $i=1,\ldots,n$.
We will adopt the notation $d\vec{y}_i=d(y_{i,1},\ldots y_{i,m-1})$.
The independence of the bridges conditioned on fixed values of $x_0$ and $t_0$ means that we can calculate estimates for the value of $\walkdens(x_i\;|\; x_0,t_0; m)$ separately for each $i=1,\dots,n$. 
Since the integral $\int \prod_{j=1}^{m}\GGFdens(y_{i,j}\;|\;y_{i,j-1},t_0/m)d\vec{y}_{i}$ decomposes as 
$$
\int \GGFdens(x_i\;|\;y_{i,m-1},t_0/m) \prod_{j=1}^{m-1}\GGFdens(y_{i,j}\;|\;y_{i,j-1},t_0/m)d\vec{y}_{i},
$$
it follows that the integral can be seen as a Bayesian marginal likelihood calculation where the prior density is an unconditioned $(m-1)$-step random walk density with dispersion $t_0(m-1)/m$ and the likelihood is the density of the last step of the $m$-step random walk to the fixed end point $x_i$. 

We estimate marginal likelihoods using three existing methods from Euclidean settings, allowing both accuracy checks and identification of the most effective approach in BHV tree space:
(i) Chib's method~\cite{ChibSiddhartha2001MLFt}; 
(ii) the `bridge' method from~\cite{MengXiao-Li1996SRON}; and
(iii) the method of (generalised) stepping stone sampling from~\cite{FanYu2011Capm}. 
Since the term `bridge' is already used in this paper to mean something different, we will instead use the term `tunnel sampling' to refer to method (ii). 
Chib's method and the tunnel sampling method had similar performance in terms of variance of the estimators and computation speed, and out-performed the stepping-stone method. 
As a result we will focus on Chib's method here, with details of the other methods in the Appendix. 
  
Chib's method, which is the single block method from~\cite{ChibSiddhartha2001MLFt}, has the following form in the simplest case of a single data point $x_\star$. 
First, for some fixed bridge $\tilde{\vec{y}}_{[1,m-1]}$ from $x_0$ to $x_\star$, an estimate of the left hand side of Equation~$\eqref{equ:bridgedensity}$ is calculated; then by rearranging Equation~$\eqref{equ:bridgedensity}$ an estimate of $\walkdens(x_\star\;|\; x_0, t_0; m)$ is obtained since the numerator is easy to calculate. 
The estimate of the left hand side of Equation~$\eqref{equ:bridgedensity}$ is obtained by combining samples from the conditional distribution of the bridge $\vec{y}_{[1,m-1]}$ between $x_0$ and $x_\star$, obtained via Algorithm~\ref{alg:bridgesampler}, and samples from the independence proposal distribution (Algorithm~\ref{alg:singularitybridge}). 
In fact, instead of using a single bridge $\tilde{\vec{y}}_{[1,m-1]}$ for this calculation, it is convenient to reduce the variance of the Chib estimator by repeating the calculation for a number of bridges drawn from the conditional distribution. 
This has the advantage of stabilising the estimate for the marginal likelihood, without increasing the number of samples from the conditional distribution that need to be simulated. 
The only increase in computational cost is that needed to calculate the estimate for each of the selected bridges. 
A detailed algorithm is given in the Appendix (Algorithm~\ref{alg:ML:Chib}).

\section{Simulation study}\label{sec:simulation}

In this section we validate the performance of our algorithms in three different ways. (i) We simulate a number of bridges between fixed points $x_0$ and $x_\star$ in $\BHV{10}$ using Algorithm~\ref{alg:bridgesampler}, and show that the MCMC mixes over paths that pass through different sets of topologies.
(ii) Marginal likelihoods are estimated in the scenario of fixed $x_0,t_0$ and a single data point $x_\star$. 
On $\BHV{4}$ the marginal likelihood can be calculated analytically and this is used to validate our algorithms. 
(iii) We perform the inference described in Section~\ref{sec:inference} on data sets in $\BHV{10}$ simulated via random walk. We show how the marginal posterior samples for $x_0$ and $t_0$ are concentrated near the true values used to simulate the data set. 
Finally, we discuss the numerous challenges presented by this kind of inference. 
These simulations form a small snapshot of much more extensive testing performed to validate the algorithms. 

\subsection{Bridge simulations}\label{sec:BridgeSims}

To assess the performance of Algorithm~\ref{alg:bridgesampler}, we randomly generated three pairs of trees $x_{0,i},x_{\star,i}$, $i=1,2,3$ with $N=10$ taxa, and simulated bridges between each pair. 
The pairs of trees were selected to yield geodesics $\Gamma_{x_{0,i}, x_{\star,i}}$ traversing increasingly complex (higher-codimension) regions; a description of each geodesic is given in Table~\ref{tab:bridgeSimsAcceptanceRates}.

We sampled bridges using Algorithm~\ref{alg:bridgesampler} with $m = 50$ steps for each pair. 
After a burn-in of $10^4$ iterations, the sampler ran for $4 \times 10^5$ iterations, thinning every 100th bridge. 
We used $\alpha_b = 0.01$  to select the proposed partial bridge length (tuned for the cone path case, $i=3$) and applied the same setting across all three cases to compare acceptance rates. 
On average, 23 steps were proposed for update at each iteration.

\begin{center}
\begin{table}[ht]
\centering
\begin{tabular}{llc}
  \hline
 Pair $i$ & Geodesic description & Acceptance rate\\ 
  \hline
 1  & two codimension-$2$ boundaries & 40.8\%\\ 
 2  & one codimension-$5$ boundary & 27.5\%\\
 3  & cone path & 18.1\% \\ 
   \hline
\\
\end{tabular}
\begin{caption}{\label{tab:bridgeSimsAcceptanceRates}Acceptance rates for the partial bridge proposal when simulating bridges between two fixed endpoints. 
}
\end{caption}
\end{table}
\end{center}

The acceptance rates for the partial bridge proposal are shown in Table~\ref{tab:bridgeSimsAcceptanceRates}. 
(The acceptance rate is really the average over different values of $a$ and $l$.)
The acceptance rate is highest for the least complex geodesic and lowest for the cone path geodesic. 
Figure~\ref{fig:bridgeSimsCountsPlot} displays the number of distinct topologies at each step, showing that the third case explores the most topologies.
Overall, a large number of topologies are visited, and together with traceplots of the log-likelihood (Figure~\ref{fig:bridgeSimsLikelihoodPlots}), this suggests the chains mix well. 

\begin{figure}
\begin{center}
\includegraphics[width=0.65\textwidth]{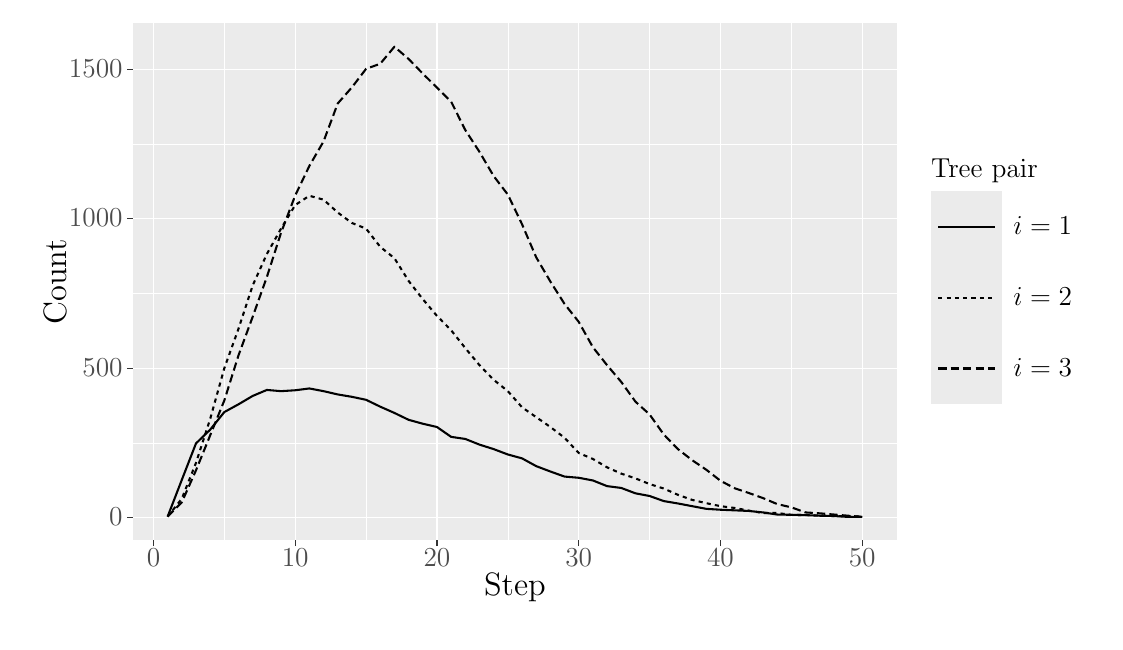}
\begin{caption}{\label{fig:bridgeSimsCountsPlot} 
The number of distinct topologies displayed at each step in samples of bridges simulated between three different sets of fixed endpoints in $\BHV{10}$ ($4\times 10^3$ bridges in each sample).
}
\end{caption}
\end{center}
\end{figure}

\subsection{Marginal likelihood for single data points}

For $N=4$ and a single data point $x_\star$, the marginal likelihood can be computed exactly using the closed-form expression for $\B{x_0}{t_0}$ from~\cite{nye14diffusion}. As an initial test, we fixed $x_0$ and estimated the marginal likelihood 100 times for different $x_\star$ values, assuming $x_\star$ was drawn from $\Wm{x_0}{t_0}$  with $m = 20$ and $t_0 = 0.25$. 
Median estimates from the Chib method and the true values are shown in Figure~\ref{fig:3SpiderResultsChib}. 
The variance of the estimates was negligible, and the other marginal likelihood estimators produced nearly identical results. 
Marginal likelihoods were estimated for the random walk model and hence there is a small discrepancy between the estimated values and the true value for the Brownian motion kernel.

\begin{figure}
\begin{center}
\includegraphics[width=0.4\textwidth]{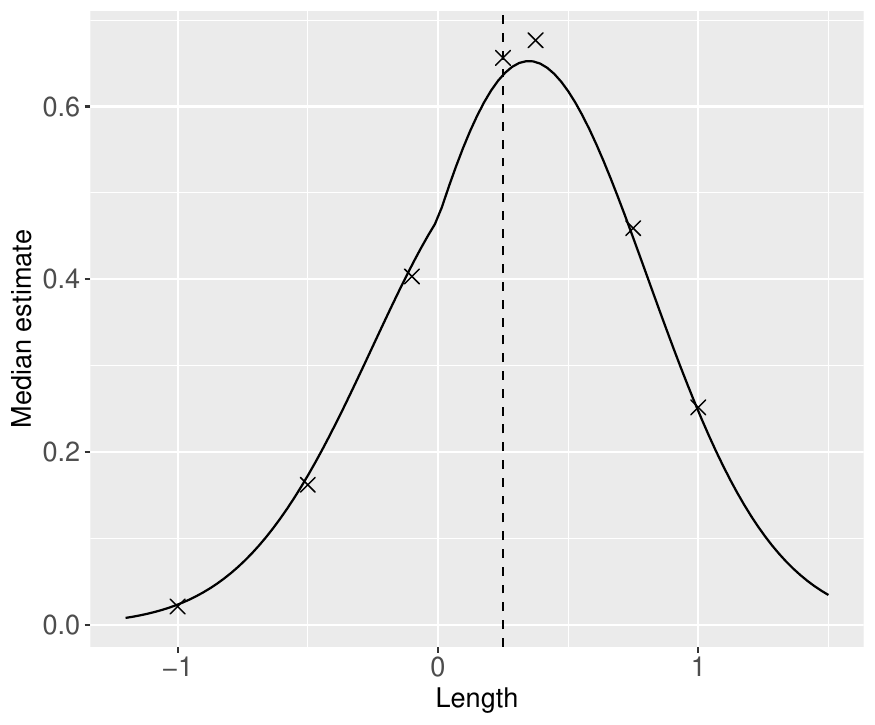}
\begin{caption}{\label{fig:3SpiderResultsChib}
The results of estimating marginal likelihoods $\BHV{4}$. 
The curve represents the true value of the marginal likelihood using the exact Brownian motion kernel for $N=4$ in~\cite{nye14diffusion}. The dashed vertical line shows the position of $x_0$. 
The positive axis is the orthant containing $x_0$; the negative axis represents the other two orthants in $\BHV{4}$. 
The crosses show the median of $100$ estimates of the marginal likelihood using the Chib estimator.
}
\end{caption}
\end{center}
\end{figure}
 
For $N>4$ taxa, we cannot calculate the marginal likelihood exactly.
The different estimation algorithms were run $100$ times on the three pairs of trees used in Section~\ref{sec:BridgeSims}, using $x_{0,i}$ as the source and $x_{\star,i}$ as the data point, for cases $i=1,2,3$. 
Results are shown in the Appendix, but the estimated values were consistent between the three methods, with the Chib and tunnel methods performing best.

\subsection{Inference of $(x_0,t_0)$}\label{sec:simstudyInference}

We tested the MCMC sampler from Section~\ref{sec:inference} on simulated data to infer $(x_0, t_0)$. 
A source tree $x_0$ with $N = 10$ taxa was generated with edge lengths from a Gamma$(2, 0.5)$ distribution (Figure~\ref{fig:InferenceSourceTree}). 
The edge separating taxa 1 and 10 is particularly short, placing $x_0$ near a codimension-1 face $E_0$. 
Data sets of size $n=50$ were generated by forward simulating random walks with $m=2\times 10 ^3$ steps from $x_0$. 
Representative values of the dispersion $t_0$ were selected by counting the number of distinct topologies in larger samples from $\Wm{x_0}{t_0}$ (see Appendix Figure~\ref{fig:distinctTopsPlot}).  
Based on the plot, we choose $t_0=\{0.01,0.1,0.3,0.5\}$, so that the data sets contained respectively $35,49,49,48$ different topologies on $n=50$ trees. 

\begin{figure}
\begin{center}
\includegraphics[width=0.4\textwidth,]{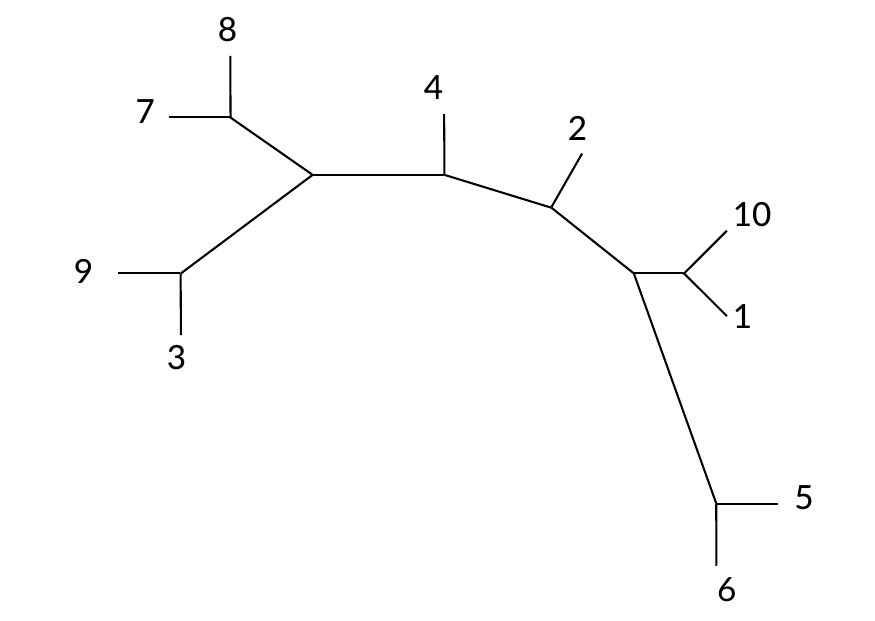}
\begin{caption}{\label{fig:InferenceSourceTree} The source tree $x_0$ used to simulate data setsin Section~\ref{sec:simstudyInference}. 
Internal edge lengths are drawn to scale. Pendant edge lengths are arbitrary. Note the short edge leading to the cherry $(1,10)$. 
}
\end{caption}
\end{center}
\end{figure}

We ran the inference scheme for $(x_0,t_0)$ on each data set. 
Burn-in and thinning details are in Appendix Table~\ref{tab:inferenceTimings}. 
The $t_0 = 0.5$ dataset required a longer burn-in due to increased topological complexity, which lowers bridge acceptance rates. 
To assess whether the chains had run sufficiently long, we tracked the cumulative proportions of visited topologies in the posterior sample for $x_0$; plots in Appendix Figure~\ref{fig:inferenceTopPropPlots} suggest representative sampling, although this was slower for $t_0 = 0.3$ and $0.5$. 
Proposal parameters and acceptance rates are in Appendix Table~\ref{tab:InferenceParamsAndAcceptanceRates}. 

\begin{figure}
\begin{center}
\includegraphics[width=0.8\textwidth]{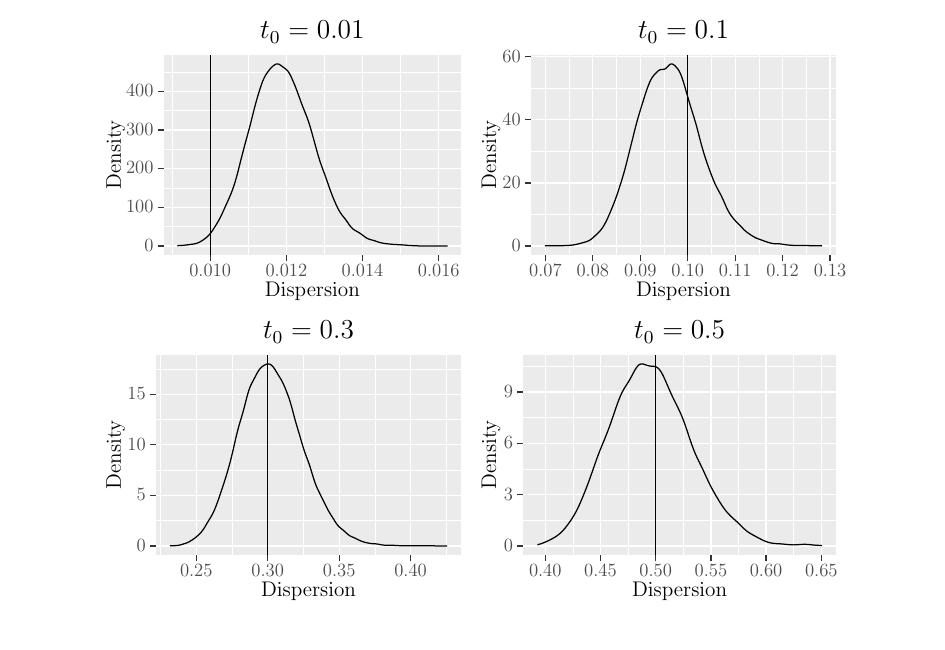}
\begin{caption}{\label{fig:inferenceDispKDEs} 
Kernel density estimates of the marginal posterior density of $t_0$ for inference performed on simulated data sets. 
The vertical line on each plot shows the true value of $t_0$ used to simulate the data set. 
On this scale the prior was close to zero. 
}
\end{caption}
\end{center}
\end{figure}

Posterior distributions of $t_0$ (Figure~\ref{fig:inferenceDispKDEs}) were concentrated near the true values. 
Posterior probabilities for the topology of $x_0$ are given in Table~\ref{tab:posteriorPercentages}. 
In all cases, the posterior for $x_0$ is concentrated in the three maximal orthants adjacent to $E_0$ (top three rows of Table~\ref{tab:posteriorPercentages}); uncertainty arises primarily on account of the short edge in the true tree. 
For $t_0 = 0.01$, $0.1$, and $0.3$, the true topology is the posterior mode, and for $t_0=0.5$ they are separated by a single NNI. 
In contrast, the Fr\'echet mean was the star tree for $t_0=0.3$ and $t_0=0.5$.

Further diagnostics, traceplots, KDEs for edge lengths, and run-time details are provided in the Appendix.

\begin{center}
\begin{table}[ht]
\centering
\begin{tabular}{l|llll}
\hline 
 & \multicolumn{4}{c}{True $t_0$} \\
  \hline
Topology & 0.01 & 0.1 & 0.3 & 0.5 \\ 
  \hline
\textbf{(2,((1,10),(5,6)),(4,((3,9),(7,8))));} & 98\% & 76.7\% & 82.7\% & 9\% \\ 
 (2,(1,(10,(5,6))),(4,((3,9),(7,8))));  & 1.5\% & 3.3\% & 14.4\% & 1.7\% \\ 
  (2,(10,(1,(5,6))),(4,((3,9),(7,8))));  & 0.5\% & 20\% & 0.1\% & 76.7\% \\ 
 (2,(10,(1,(5,6))),((3,9),(4,(7,8))));  & 0.0\% & 0.0\% & 0.0\% & 7.8\% \\  
\end{tabular}
\begin{caption}{\label{tab:posteriorPercentages}Topologies in the marginal posterior sample of $x_0$. 
The column for each simulated data set shows the posterior probability for the 4 topologies listed. Top row: true topology (in bold).  
}
\end{caption}
\end{table}
\end{center}

\subsection{Scalability}

There are a number of difficulties performing MCMC in this setting. 
First, mixing is worse as the size $n$ of the data set increases. 
When updating the $x_0$ parameter, a number $l$ of steps on each bridge is also updated. 
This means $l\times n$ bridge steps are updated in the proposal, rather than $l$ when performing a partial bridge proposal for a single data point, and so the acceptance probability for the $x_0$ move reduces as $n$ increases. 
We therefore change fewer bridge steps in the $x_0$ proposal than the partial bridge proposals (i.e.~$\alpha_0>\alpha_b$).
Secondly, increasing the number of taxa $N$ introduces further difficulties. 
Larger $N$ requires more random walk steps $m$ to approximate Brownian motion, increasing geodesic computations and computational cost. 
Moreover, with larger $N$, more geodesics to $x_0$ pass through high-codimension regions which are harder for the bridge proposal to traverse, reducing proposal acceptance.
Inference remains feasible for larger $N$ and $n$ when the data are more tightly clustered (lower dispersion), which simplifies geodesic structure and improves acceptance rates.
Further comments are made in Section~\ref{sec:discussion}.

\section{Biological example}\label{sec:biologicalex}

We applied our inference methods to a well-known yeast gene tree data set~\cite{rok03}, consisting of $n=106$ gene trees for $N=8$ species (seven from the \textit{Saccharomyces} genus, and one from the \textit{Candida} outgroup). 
The data displayed 26 unique internal splits and 23 topologies. 
The fully-resolved modal topology was displayed by $41$ trees, and it was the same as the majority consensus topology. (The majority consensus toplogy is obtained by taking the union of all splits present in $>50\%$ trees in a sample.)
We also estimated the Fr\'echet mean using $10^5$ iterations of the algorithm from~\cite{sturm2003}, and it had the majority consensus topology. 

Inference for $(x_0, t_0)$ used $m=50$ bridge steps, with $5 \times 10^5$ burn-in iterations and a posterior sample of $5 \times 10^4$ drawn from $5 \times 10^6$ thinned iterations. 
Convergence diagnostics and full MCMC details appear in the Appendix (Figures~\ref{fig:yeastLikelihoodPlot}, \ref{fig:yeastDispTraceplot}, \ref{fig:yeastDispKDEplot},  ~\ref{fig:yeastCumulativePropPlot} and Table~\ref{tab:yeastAcceptanceRates}).
The posterior modal tree and the Fr\'echet mean had the same topology, but the modal tree had longer internal edge lengths (total length 0.489 vs.\ 0.382), consistent with the tendancy of the Fr\'echet mean to be attracted to the origin. 
The modal tree and Fr\'echet mean are displayed in Appendix Figure~\ref{fig:experimentalSummaryTrees}. 
The posterior for $x_0$ was concentrated on two topologies: the posterior modal topology ($88.8\%$) and a topology related by a single nearest-neighbour interchange ($11.2\%$). 
This illustrates a key advantage of the Bayesian framework for estimating $x_0$: there is direct quantification of the uncertainty in the estimate, unlike the Fr\'echet mean. 
Posterior predictive sampling can be used to assess the quality of model fit, as in Appendix Figure~\ref{fig:experimentalForwardSimGDs}. 

We then estimated marginal likelihoods for three different source trees: (i) the posterior mode tree, (ii) the Fr\'echet mean and (iii) the star tree. 
When $x_0$ is the star tree, the Brownian motion transition kernel is a multiple of a Gaussian at the origin in each maximal orthant, enabling exact calculation of the marginal likelihood; $t_0$ was fixed to be the Fr\'echet variance around the star tree ($0.0325$) for this calculation. 
For the other two trees, we estimated marginal likelihoods using the Chib estimator in Section~\ref{sec:marginalLikelihood}. 
(The other estimators gave very similar values.)
In case (i), $t_0$ was fixed at its posterior mode ($0.0169$). 
In case (ii) (the Fr\'echet mean tree), we ran the inference described in Section~\ref{sec:inference} with the source tree $x_0$ fixed at the Fr\'echet mean, and used the posterior modal value for $t_0$  ($0.0167$) to estimate the marginal likelihood. 
The marginal log likelihood estimates were (i) $108.58$, (ii) $82.64$, and (iii) $-456.13$.

Clearly the star tree is a poor candidate for the source tree. 
We tested the alternative hypothesis that the posterior mode tree is the true source tree, against the null hypothesis that the Fr\'echet mean is the true source tree. 
Using the estimated marginal likelihoods~\cite{Kass1995BF}, the Bayes factor of the two hypotheses on the log scale with base $10$ is $11.3$ and we therefore concluded that there is significant evidence against the null hypothesis.
Although this represents a straightforward application for this particular data set, it illustrates the type of tests that can be performed using the marginal likelihood.

\section{Discussion}\label{sec:discussion}

The methods presented are the first that successfully fit a non-trivial parametric family of distributions to data in BHV tree space for more than a handful of taxa. 
Brownian motion kernels are analogs of Gaussian distributions in Euclidean space and hence represent a model for noise in BHV tree space. 
The ability to fit a Gaussian-type distribution to a sample of phylogenetic trees and compute the marginal likelihood opens up the possibility of a wide range of new methods in tree space. 

We have presented a basic model with data modelled as a random sample from $B(x_0,t_0)$. 
The source parameter $x_0$ has been called a \textit{diffusion mean} in other contexts  \cite{eltzner2023diffusion}, and it offers an advantage over other summary trees in BHV tree space. 
The Fr\'echet mean, for example, exhibits \textit{stickiness}, an undesirable property whereby the estimator is attracted to high-codimension strata, while the diffusion mean has been shown not to be sticky \cite{nye14diffusion}.
More importantly, unlike other approaches, our Bayesian approach to inferring $x_0$ enables direct quantification of the uncertainty in $x_0$ by inspection of the posterior sample. 
The Bayesian methods we employed brought further benefits: estimation of the marginal likelihood enables hypothesis tests for the source parameter to be performed; and posterior predictive sampling enables model checking. 
Work beyond the scope of this article has suggested that the source parameter $x_0$ is a good estimator for the species tree when the data are gene trees generated by a multispecies coalescent model \cite{degnan2009gene}. 

On the other hand, our approach has important limitations, especially in terms of scalability. 
As the number of taxa $N$ increases, computation times increase and mixing of the MCMC methods becomes poor. 
The bridge algorithm underpins all the inference procedures, and a key issue is the algorithm's performance when bridging between trees connected by geodesics which traverse high-codimension regions. 
Such data points are more likely to arise as $N$ increases and for samples of trees that have a relatively high level of dispersion. 
Simulations with $N=10$ taxa ran reliably, and simulations for $N\geq 20$ were feasible provided the dispersion $t_0$ of the simulated sample was smaller. 
The current bridge algorithm traverses regions with codimension $\geq 2$ by crudely assigning a budget of random walk steps. 
A more sophisticated proposal could account for the topological similarity between the proposed tree and the destination tree. 
It would actively direct the bridge algorithm to step round singularities towards the destination tree, thereby increasing the acceptance probability of bridge proposals. 

The methods presented here serve as a foundation for several generalizations. 
First, more complex statistical models on BHV tree space -- such as regression models or mixture models for clustering -- could be developed using Brownian kernels as analogs of Gaussian distributions in Euclidean space. 
Inference for such models would be enabled by adaptations of the bridge algorithm. 
Second, theory could be developed for a wider class of stochastic processes on BHV tree space to yield more flexible families of distributions. 
For example, a Brownian motion with non-trivial covariance structure could be considered, yielding distributions with increased dispersion in certain directions in tree space.
Third, our methods may be adapted to other stratified spaces, such as related spaces of trees. 
For example, while calculation of exact geodesics in wald space~\cite{lueg2024} is currently not possible, approximate geodesic constructions could support bridge proposals and enable inference of diffusion means. 
Finally, alternative methods for approximating intractable integrals in Bayesian inference may offer promising replacements for the bridge-based framework introduced here.

In summary, we have introduced a practical Bayesian framework for fitting Brownian motion transition kernels on BHV tree space, providing tools for uncertainty quantification, hypothesis testing, and posterior predictive checks. 
These methods are the first to scale beyond a few taxa, and form the foundation for a wider class of statistical models in the future.

\bibliographystyle{amsplain}

\providecommand{\bysame}{\leavevmode\hbox to3em{\hrulefill}\thinspace}
\providecommand{\MR}{\relax\ifhmode\unskip\space\fi MR }
\providecommand{\MRhref}[2]{%
  \href{http://www.ams.org/mathscinet-getitem?mr=#1}{#2}
}
\providecommand{\href}[2]{#2}


\newpage

\appendix
\section*{Appendices}

\renewcommand{\thesubsection}{\Alph{subsection}}
\setcounter{equation}{0}
\renewcommand{\theequation}{\thesubsection.\arabic{equation}}
\setcounter{result}{0}
\renewcommand{\theresult}{\thesubsection.\arabic{result}}
\setcounter{figure}{0}
\renewcommand{\thefigure}{\thesubsection.\arabic{figure}}
\setcounter{table}{0}
\renewcommand{\thetable}{\thesubsection.\arabic{table}}
\numberwithin{result}{subsection}
\renewcommand{\thealgorithm}{\thesubsection.\arabic{result}}

\subsection{Definition of GGF for unresolved location parameter}{\label{app:GGFunresolved}}

Suppose we have $x_0\in \BHV{N}$ which is unresolved. 
To define a GGF distribution at $x_0$, we sample a fully resolved orthant $\mathcal{O}$ which contains $x_0$ in its boundary uniformly at random, and fire a geodesic from $x_0$ in some direction within that orthant. 
The technical details are specified in the following algorithm. 

\begin{algorithm}
Suppose there are $\beta$ vertices in $x_0$ with degree greater than $3$. 
Denote these by $v_1,...,v_{\beta}$ and let $\alpha_i=\deg{v_i}-3$. 
Then there are $\Delta=\prod_{i=1}^{\beta}((2\alpha_i+1)!!)$ maximal orthants containing $x_0$ in their boundary.
A maximal orthant $\mathcal{O}$ is sampled uniformly at random from these in the following way. 
Let $\hat{x}_0=x_0$ initially. 
\begin{algorithmic}
\FORALL{$i=1,\ldots,\beta$} 
    \STATE{Set $v=v_i$ and let $W_v$ be the vertices connected to $v$.}
    \STATE{Choose a three element subset $W=\{w_1,w_2,w_3\}\subset W_v$ by sampling uniformly at random without replacement from $W_v$.}
    \STATE{Add a new vertex $u$ to the graph and an edge from $w_i$ to $u$ for $i=1,\dots,3$, denoted $e_i$.} 
    \STATE{Remove the edge from $w$ to $v$ in $\hat{x}_0$ for each vertex $w \in W_v$ and remove $v$ from $\hat{x}_0$.
    Remove the elements of $W$ from $W_v$ and set $E_v=\{e_1,e_2,e_3\}$. } 
    \WHILE{$W_v \text{ is not empty}$}
    \STATE{Choose $w$ uniformly at random without from $W_v$ and $e$ uniformly at random from $E_v$.}
    \STATE{Add a vertex $u'$ on $e$ and connect it to $w$. This creates three new edges $e'$, $e''$ and $e'''$.}
    \STATE{Remove $w$ from $W_v$ and $e$ from $E_v$.}
    \STATE{Add $e'$, $e''$ and $e''$ to $E_v$.}
    \ENDWHILE
    \ENDFOR
\end{algorithmic}
Let $\mathcal{O}$ be the topology of $\hat{x}_0$. 
(A proof by induction shows that $\mathcal{O}$ is selected uniformly at random by this procedure.)

A direction vector $\vec{u}$ in the ambient space $\R^M$ containing $\BHV{N}$ is defined by
\begin{equation*}
u_j=\begin{cases}
X_j\ \text{where}\ X_j\sim N(0,t_0)&\text{when}\ e_j\in\sigma(x_0)\\
|X_j|\ \text{where}\ X_j\sim N(0,t_0)&\text{when}\ e_j\in\sigma(\hat{x}_0)\setminus\sigma(x_0)\\
0& \text{if}\ e_j\notin\sigma(\hat{x}_0). 
\end{cases}
\end{equation*}
for $j=1,\ldots,M$ where $e_1,\ldots,e_M$ is the ordered set of splits defined in Section~\ref{sec:BHV}.
A geodesic is then extended from $x_0$ in direction $\vec{u}$, as in the case for GGF from a resolved tree, for a distance $\|\vec{u}\|$, to arrive at a random tree $y$. 
The probability density function for the corresponding distribution $\GGF(x | x_0, t_0)$ is then
\begin{equation*}
f(x| x_0, t_0) =\begin{cases} \left(\frac{1}{2}\right)^{\nu'(x,x_0)}K(x_0)\frac{1}{(2\pi)^{\N} t_0^{\N/2}}\exp-\frac{1}{2t_0}{\BHVmetric{x}{x_0}}^2 & \text{if $\Gamma'_{x,x_0}$ is simple}\\
0 & \text{otherwise,}
\end{cases}
\end{equation*}
where $\N=N-3$, $\Gamma'_{x,x_0}$ is the set $\Gamma_{x,x_0}$ with the point $x_0$ removed, $\nu'(x,x_0)$ is the number of codimension-$1$ points in $\Gamma'_{x,x_0}$, and a set is called simple if it does not contain any points on a codimension-$2$ boundary.
The factor $K(x_0)$ is
\begin{equation*}
K(x_0) = \frac{2^{\sum \alpha_i}}{\Delta}.
\end{equation*}
It accounts for (i) sampling $\mathcal{O}$ uniformly at random and (ii) use of the half-normal distribution for splits in $\sigma(\hat{x}_0)\setminus\sigma(x_0)$. 
\end{algorithm}

\subsection{Details of the MCMC bridge sampler}\label{sec:bridgeMCMCdetails}

Here we give details of Algorithm~\ref{alg:bridgesampler} which samples random walk paths $\vec{Y}_{[0,m]}$ from the conditional distribution given $Y_m=x_\star\in\BHV{N}$ and $Y_0=x_0$. 

\begin{algorithm*}[Detailed version of Algorithm 4.3]
The algorithm is initialized by running the independence proposal (Algorithm~\ref{alg:singularitybridge}) until a valid path $\vec{y}^{(0)}$ between $x_0$ and $x_\star$ is obtained. 
Then, for $j=1,2,\ldots,J$ the following steps are performed.
\begin{enumerate}
\item Generate values $a$ and $l$ by first sampling $l$ from a truncated geometric distribution on $1,\ldots,m-1$ with parameter $\alpha_b$, and then sampling $a$ uniformly on $0,\ldots,m-l-1$.
\item Sample a new bridge $\vec{y}^*$ using the partial bridge proposal with parameters $(a,l)$ conditional on $\vec{Y}=\vec{y}^{(j-1)}$, $x_0$, $t_0$. 
Calculate the proposal ratio $Q_\text{part}^{(a,l)}$ in Equation~$\eqref{equ:Qpart}$.
\item Calculate the target density ratio $P_\text{part}^{(a,l)}$ in Equation~$\eqref{equ:Ppart}$.
\item Calculate the acceptance probability $A_\text{part}^{(a,l)}$ in Equation~$\eqref{equ:Apart}$.
\item With probability $A_{\text{part}}^{(a,l)}(\vec{y}^*,\vec{y};\; x_\star, x_0, t_0)$ set $\vec{y}^{(j)}=\vec{y}^*$; otherwise set $\vec{y}^{(j)}=\vec{y}$. 
\end{enumerate}
Output the sample of bridges $\vec{y}^{(1)},\ldots,\vec{y}^{(J)}$. 
\end{algorithm*}

We need to determine the proposal density ratios and target density ratios required by the algorithm. 
The probability density function for the independence proposal (Algorithm~\ref{alg:singularitybridge}) is
\begin{equation}\label{equ:qind}
q_{\text{ind}}(\vec{y}\;|\;y_0=x_0,y_m=x_\star,t_0) = \prod_{j=1}^{m-1}q_{\text{ind}}(y_j\;|\;y_{j-1},x_\star,t_0)
\end{equation} 
where
\begin{equation*}
q_{\text{ind}}(y_j\;|\;y_{j-1},x_\star,t_0) = w(\mu_j)\GGFdens\left(y_j\;|\;\mu_j,\varjm{j}{m}\right)+[1-w(\mu_j)]\GGFdens\left(y_j\;|\;y_{j-1},\frac{t_{0}}{m}\right),
\end{equation*}
$w(x)$ is defined in Equation~$\eqref{equ:weightfunction}$ and  $\mu_j$, $\varjm{j}{m}$ are defined in Algorithm~\ref{alg:singularitybridge}. 

Now suppose $\vec{Y}^*_{[0,m]}=\vec{y}^*$ is proposed from $\vec{Y}_{[0,m]}=\vec{y}$ via the partial bridge proposal parameters $(a,l)$, so that $y^*_i=y_i$ for $i\leq a$ and for $i\geq a+l+1$. 
The probability density function for the proposal is
\begin{equation}\label{equ:partialbridgepropdens}
q_{\text{part}}^{(a,l)}(\vec{y}^*\;|\;\vec{y},t_0) = \prod_{j=a+1}^{a+l}q_{\text{part}}(y^*_j\;|\;y^*_{j-1},y_{a+l+1},t_0)
\end{equation} 
where
\begin{multline}\label{equ:qpart}
q_{\text{part}}^{(a,l)}(y^*_j\;|\;y^*_{j-1},y_{a+l+1},t_0) = w\left(\mu_j^*\right)\GGFdens\left(y^*_j\;\middle|\;\mu^*_j,\varstar\right)\\ +\left[1-w\left(\mu_j^*\right)\right]\GGFdens\left(y^*_j\;\middle|\;y^*_{j-1},{t_{0}}/{m}\right).
\end{multline}
Terms in these equations are defined as follows. 
First $\gamma$ is the geodesic segment $\Gamma_{y^*_{j-1},y_{a+l+1}}[0,1/(a+l+2-j-p^*_j)]$ where
\begin{equation*}
p_j^* = f_p\left( \Gamma_{y^*_{j-1},y_{a+l+1}} \right).
\end{equation*}
As for Algorithm~\ref{alg:singularitybridge}, if there is no boundary with codimension greater than 1 in $\gamma$ then $\mu_j^*$ is set to be $\Gamma_{y^*_{j-1},y_{a+l+1}}(1/(a+l+2-j-p^*_j))$; otherwise $\mu_j^*$ is the point on $\gamma$ with codimension $\geq 2$ closest to $y^*_{j-1}$.
Finally, $\varstar$ is
\begin{equation*}
\varstar = \frac{l+1-(j-a)}{l+2-(j-a)}\frac{t_0}{m}.
\end{equation*}

The proposal ratio is therefore
\begin{equation}\label{equ:Qpart}
Q_{\text{part}}^{(a,l)}(\vec{y}^*,\vec{y};\; t_0)=\prod_{j=a+1}^{a+l} \frac{q_{\text{part}}(y_j\;|\;y_{j-1},y_{a+l+1},t_0)}{q_{\text{part}}(y^*_j\;|\;y^*_{j-1},y_{a+l+1},t_0)}.
\end{equation}
Using Equation~$\eqref{equ:bridgedensity}$, the target distribution density ratio is
\begin{align}\label{equ:Ppart}
P_{\text{part}}^{(a,l)}(\vec{y}^*,\vec{y};\; x_\star, x_0, t_0) &= \frac{f_{\{\vec{Y}_{[1,m-1]}|Y_m\}}\left((y^*_1,\ldots,y^*_{m-1}) \;|\; y^*_m=x_\star,\, x_0,\, t_0\right)}{f_{\{\vec{Y}_{[1,m-1]}|Y_m\}}\left((y_1,\ldots,y_{m-1}) \;|\; y_m=x_\star,\, x_0,\, t_0\right)} \notag\\
&= \prod_{j=a+1}^{a+l+1} \frac{\GGFdens(y_j^*\;|\;y_{j-1}^*,t_0/m)}{ \GGFdens(y_j\;|\;y_{j-1},t_0/m)}.
\end{align}
The acceptance probability for the proposed path $\vec{y}^*$ given $\vec{y}$ is 
\begin{equation}\label{equ:Apart}
A_{\text{part}}^{(a,l)}(\vec{y}^*,\vec{y};\; x_\star, x_0, t_0) = \min\left\{ 1, P_{\text{part}}^{(a,l)}(\vec{y}^*,\vec{y};\; x_\star, x_0, t_0)Q_{\text{part}}^{(a,l)}(\vec{y}^*,\vec{y};\; t_0) \right\}.
\end{equation}

\begin{lemma}\label{lem:bridgeMCMCconv}
The Markov chain induced by Algorithm~\ref{alg:bridgesampler} almost surely converges to its stationary distribution, which is the conditional distribution with density function defined in Equation~$\eqref{equ:bridgedensity}$.
\end{lemma}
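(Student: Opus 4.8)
The plan is to verify that Algorithm~\ref{alg:bridgesampler} defines a Metropolis--Hastings chain on the space $\mathcal{V}\subset\BHV{N}^{m-1}$ of \emph{valid} bridges --- those interior paths $\vec{y}_{[1,m-1]}$ for which every consecutive geodesic $\Gamma_{y_{j-1},y_j}$ is simple for $j=1,\ldots,m$ (with $y_0=x_0$, $y_m=x_\star$) --- whose target $\pi$ is the density in Equation~\eqref{equ:bridgedensity}, and then to invoke a standard ergodic theorem. I would organise the argument into (a) invariance of $\pi$, (b) $\pi$-irreducibility, and (c) aperiodicity.

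For invariance I would fix $(a,l)$ and treat the partial-bridge update as a block Metropolis--Hastings move for the coordinates $y_{a+1},\ldots,y_{a+l}$, the other coordinates held fixed. The acceptance probability $A_{\text{part}}^{(a,l)}=\min\{1,P_{\text{part}}^{(a,l)}Q_{\text{part}}^{(a,l)}\}$ is built so that $P_{\text{part}}^{(a,l)}$ is exactly the target ratio $\pi(\vec{y}^*)/\pi(\vec{y})$ --- in Equation~\eqref{equ:Ppart} the unchanged factors for $j\le a$ and $j\ge a+l+2$ cancel --- while $Q_{\text{part}}^{(a,l)}$ is exactly the reverse-to-forward proposal ratio of Equation~\eqref{equ:Qpart}. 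Detailed balance then follows from the Metropolis--Hastings identity; here I would note that, since each GGF factor vanishes off simple geodesics (Equation~\eqref{equ:GGF-dens}), the Step~4 failures of Algorithm~\ref{alg:singularitybridge} merely divert proposal mass to the holding probability and do not disturb the off-diagonal balance equation. Because $(a,l)$ is drawn independently of the current state, the full kernel is a state-independent mixture $\sum_{a,l}\rho(a,l)\,K^{(a,l)}$ of $\pi$-reversible kernels, hence is itself $\pi$-reversible and leaves $\pi$ invariant.

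For irreducibility I would exploit the value $l=m-1$, which forces $a=0$ and reduces the partial-bridge proposal to the full independence proposal of Algorithm~\ref{alg:singularitybridge} with $x_0,x_\star,m$ unchanged; this value has positive probability under the truncated geometric on $\{1,\ldots,m-1\}$. The crucial point is that the mixture weight satisfies $w(\mu_j)<1$, guaranteed since $F_\chi(\cdot,N-3)<1$ for every finite argument, so the second component $[1-w(\mu_j)]\GGFdens(\cdot\mid y_{j-1},t_0/m)$ contributes strictly positive density at each step. Propagating this through the product gives $q_{\text{ind}}(\vec{y}^*)>0$ for every $\vec{y}^*\in\mathcal{V}$, so for any Borel set $B$ with $\pi(B)>0$ the one-step probability of proposing into $B$ and accepting it is strictly positive from any starting bridge; this yields $\pi$-irreducibility. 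Aperiodicity follows because proposals are rejected with positive probability --- both through Step~4 failures and through $A_{\text{part}}^{(a,l)}<1$ --- so the chain has positive holding probability.

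The main obstacle is the irreducibility step. The proposal is deliberately steered along $\Gamma_{y_{j-1},x_\star}$, so without the mixture its support would miss the many distinct NNI sequences winding round the origin, and the chain could be confined to a single topological corridor. The argument therefore hinges entirely on the $w(\mu_j)<1$ floor of Equation~\eqref{equ:weightfunction}, and the technical crux is checking that this floor propagates through the product over $j$ to give full support on $\mathcal{V}$. With invariance, $\pi$-irreducibility and aperiodicity established, the standard convergence theorem for general state-space chains (e.g.~\cite{rosenthal2001review}) gives $\|K^n(\vec{y}_0,\cdot)-\pi\|_{\mathrm{TV}}\to0$; since the full-support independence resample renders a set of positive $\pi$-measure accessible and small, the chain is positive Harris recurrent, which delivers convergence from every starting bridge and the almost-sure conclusion of the statement.
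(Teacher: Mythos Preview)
Your proposal is correct and follows essentially the same route as the paper: both establish stationarity via the Metropolis--Hastings construction, obtain $\varphi$-irreducibility by noting that $l=m-1$ has positive probability under the truncated geometric and that the independence proposal has strictly positive density on every valid bridge (the paper states this directly; you trace it to the strict inequality $w(\mu_j)<1$), and then invoke the standard convergence result from~\cite{rosenthal2001review}. Your version is more explicit---you separately check aperiodicity and note Harris recurrence---whereas the paper's proof is terser and folds these into the citation of Proposition~1 in~\cite{rosenthal2001review}.
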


\begin{proof}
The Markov chain $(\vec{Y}^{(j)})_{j\in \mathbb{N}}$ induced by Algorithm~\ref{alg:bridgesampler} has the conditional distribution in Equation~$\eqref{equ:bridgedensity}$ as its stationary distribution by the construction of the Metropolis-Hastings steps. 

The Borel volume measure on $\BHV{N}$ was defined in \cite{willis_confidence_2016}.
Let $\nu$ be the Borel volume measure on the product
\begin{equation*}
\BHV{N}^{{(m-1)}} = \BHV{N}\times \BHV{N}\times\cdots\times\BHV{N},\quad\text{($m-1$ terms),}
\end{equation*}
where the product Borel $\sigma$-algebra is denoted $\mathcal{B}(\BHV{N}^{{(m-1)}})$. 
Let 
\begin{equation*}
S=\{\vec{y}_{[1,m-1]}: \Gamma_{y_{i-1},y_i} \text{ is simple for }i=1,...,m;\; y_0=x_0, y_m=x_\star\}.
\end{equation*}
Consider the measure defined by 
$$
\nu'(A)=\nu(A \cap S) \text{ for } A \in \mathcal{B}(\BHV{N}^{{(m-1)}}).
$$
Since there a non-zero probability of generating $l=m-1$ from the truncated geometric distribution in Algorithm~\ref{alg:bridgesampler} and we have positive independence proposal density for any valid bridge $\mathbf{Y}^*$, then there is a positive probability of moving into $A$ in one step for any $A$ with $\nu'(A)>0$ for any valid starting bridge $\mathbf{Y}$. Therefore the Markov chain $(\vec{Y}^{(j)})_{j\in \mathbb{N}}$ is $\varphi$-irreducible with respect to $\nu'$. It therefore converges $\nu'$-almost surely to its stationary distribution by for example Proposition 1 in~\cite{rosenthal2001review}.
\end{proof}

\subsection{Details of the acceptance probability for $x_0$ and $t_0$ proposals}\label{sec:mainMCMCdetails}

Here we give details of the acceptance probabilities for the proposals used in the MCMC scheme described in Section~\ref{sec:inference} which samples from the posterior for $x_0,t_0$. 
First consider the proposal for $x_0$ given the value $l$ of the number of bridge steps to resample. 
Suppose bridges $\vec{Y}_i^*=\vec{y}_i^*$ are proposed from $\vec{Y}_i=\vec{y}_i$, $i=1,\ldots,n$, so that $y^*_{i,j}=y_{i,j}$ for $j>l$ and for all $i$. 
We fix the convention $y_{i,0}^*=x_0^*$ and $y_{i,0}=x_0$ for all $i$.  
The probability density function for the proposal is
\begin{equation*}
\begin{split}
q^{(l)}_{\text{source}}(x_0^*,\vec{y}_1^*,\ldots,\vec{y}_n^*\;|\;x_0,\vec{y}_1,\ldots,\vec{y}_n,t_0)=
\GGFdens(x_0^*|x_0,\lambda_0^2)\\
\times
\prod_{i=1}^n\prod_{j=1}^l q^{(0,l)}_{\text{part}}\left( y_{i,j}^*\;|\;y_{i,j-1}^*,y_{i,l+1},t_0 \right).
\end{split}
\end{equation*}
Here $q^{(0,l)}_{\text{part}}$ is defined by Equation~$\eqref{equ:qpart}$. 
If $l=0$ then the empty product is taken to be $1$. 
The proposal ratio is
\begin{equation}\label{equ:Qx0}
Q^{(l)}_{\text{source}}(x_0^*,\vec{y}_1^*,\ldots,\vec{y}_n^*,x_0,\vec{y}_1,\ldots,\vec{y}_n;\;t_0) = 
\prod_{i=1}^n\prod_{j=1}^l\frac{ q^{(0,l)}_{\text{part}}\left( y_{i,j}\;|\;y_{i,j-1},y_{i,l+1},t_0 \right) }{ q^{(0,l)}_{\text{part}}\left( y_{i,j}^*\;|\;y_{i,j-1}^*,y_{i,l+1},t_0 \right) }.
\end{equation}
The term $\GGFdens(x_0^*|x_0,\lambda_0^2)$ is unchanged if $x_0,x_0^*$ are swapped, and so those terms cancel in the proposal ratio. 
Using Equation~$\eqref{equ:maintarget}$, the target distribution density ratio is
\begin{equation*}
P^{(l)}_{\text{source}}(x_0^*,\vec{y}_1^*,\ldots,\vec{y}_n^*,x_0,\vec{y}_1,\ldots,\vec{y}_n;\;t_0) = 
\frac{\pi(x_0^*,t_0)}{\pi(x_0,t_0)}\prod_{i=1}^n \prod_{j=1}^{l+1}\frac{ \GGFdens(y_{i,j}^*\;|\;y_{i,j-1}^*,t_0/m) }{ \GGFdens(y_{i,j}\;|\;y_{i,j-1},t_0/m) }.
\end{equation*}
The acceptance probability for the proposal is $\min\{1,A^{(l)}_{\text{source}}\}$ where
\begin{multline*}
A^{(l)}_{\text{source}} = \frac{\pi(x_0^*,t_0)}{\pi(x_0,t_0)}\prod_{i=1}^n \frac{ \GGFdens(y_{i,l+1}\;|\;y_{i,l}^*,t_0/m) }{ \GGFdens(y_{i,l+1}\;|\;y_{i,l},t_0/m) }\\ \times\prod_{j=1}^{l}\frac{ \GGFdens(y_{i,j}^*\;|\;y_{i,j-1}^*,t_0/m) q^{(0,l)}_{\text{part}}\left( y_{i,j}\;|\;y_{i,j-1},y_{i,l+1},t_0 \right) }{ \GGFdens(y_{i,j}\;|\;y_{i,j-1},t_0/m) q^{(0,l)}_{\text{part}}\left( y_{i,j}^*\;|\;y_{i,j-1}^*,y_{i,l+1},t_0 \right)}.
\end{multline*}

The proposal ratio for the $t_0$ proposal is $t_0^*/t_0$. 
The target density ratio is
\begin{equation}\label{eq:t0AcceptanceRatio}
P_{\text{disp}}(t_0^*\;|\;t_0,\vec{y}_1,\ldots,\vec{y}_n) = \frac{\pi(x_0,t_0^*)}{\pi(x_0,t_0)}\prod_{i=1}^n \prod_{j=1}^{m}\frac{ \GGFdens(y_{i,j}\;|\; y_{i,j-1},t_0^*/m) }{\GGFdens(y_{i,j}\;|\; y_{i,j-1},t_0/m)}. 
\end{equation}
The acceptance probability for the proposal is $\min\{1,(t_0^*/t_0)P_{\text{disp}}(t_0^*\;|\;t_0,\vec{y}_1,\ldots,\vec{y}_n)\}$.

\subsection{Algorithms for the marginal likelihood}

The following algorithms require samples to be drawn for the bridges $\vec{y}_i$, $i=1,\ldots,n$, conditional on $x_0, t_0$ and the data $\{x_i\}$. 
This is achieved in the same way as the MCMC scheme in Section~\ref{sec:inference} but dropping the proposals for $x_0$ and $t_0$. 
Since $t_0$ is assumed to be fixed and known, in this section we will suppress notational dependence on $t_0$ unless absolutely necessary. 
We will simplify the notation from Equation~\eqref{equ:RW-like} by writing $f(\vec{y} \;|\; x_0)$ for $f_{\vec{Y}_{[1,m]}}(\vec{y}_{[1,m]} \;|\; x_0, t_0)$. 
We will additionally use the notation $\oldvec{\vec{y}}=(\vec{y}_1,\ldots,\vec{y}_n)$ for a set of bridges between $x_0$ and the data points $(x_1,\ldots,x_n)$. 

\begin{algorithm}[\textsc{Chib estimate}]\label{alg:ML:Chib}
\begin{algorithmic}
    Fix a value $M_1 \in \mathbb{N}$ for the number of samples from the conditional posterior distribution, a value $M_2 \in \mathbb{N}$ for the number of samples from the independence proposal and a value $h \in \mathbb{N}$ for the number of points in the conditional posterior sample at which to estimate the conditional posterior density. Set $H=\left \lfloor{\frac{M_1}{h}}\right \rfloor$. Sample $\multfixedbridges{y}{1},\ldots,\multfixedbridges{y}{M_1}$ from the conditional posterior distribution.
\FOR{$i=1,\ldots,n$}
\FOR{$j=1,\ldots M_2$}
\STATE{Simulate a bridge $\vec{w}_{i}^{(j)}$ from the independence proposal using Algorithm~\ref{alg:singularitybridge}}.
\ENDFOR
\FOR{$k=1,\ldots,h$}
\STATE{Set $\vec{y}_i^*=\vec{y}_i^{(Hk)}$}
\STATE{Calculate an estimate $\hat{f}_{C}^{(k)}(\vec{y}_i^*\;|\;x_0,x_i)$ of the conditional density in Equation~\eqref{equ:bridgedensity} by
        \begin{align*}
    \hat{f}_{C}^{(k)}(\vec{y}_i^*\;|\;x_0,x_i)
=\frac{M_1^{-1}\sum_{j=1}^{M_1}A_{\text{ind}}(\vec{y}_i^*,\vec{y}_i^{(j)};x_0,x_i)}{M_2^{-1}\sum_{j=1}^{M_2} A_{\text{ind}}(\vec{w}_i^{(j)},\vec{y}_i^*;x_0,x_i)}.
\end{align*}
and then an estimate of the log marginal likelihood by
\begin{equation*}\label{eq:ML:ChibMLEst}
    \logmle_{C}^{(k)}(x_i\;|\;x_0)=\log f(\vec{y}_i^*\;|\;x_0)-\log\hat{f}_{C}^{(k)}(\vec{y}_i^*\;|\;x_0,x_i)-q_{\text{ind}}(\vec{y}_i^*\;|\;x_0,x_i).
\end{equation*}}
\STATE{Here, $A_\text{ind}$ is defined by
$$A_{\text{ind}}(\vec{y}_i^*,\vec{y}_i;\; x_i, x_0) = \min\left\{ 1, P_{\text{ind}}(\vec{y}_i^*,\vec{y}_i;\; x_i, x_0)Q_{\text{ind}}(\vec{y}_i^*,\vec{y}_i) \right\}.
$$
where
$$
Q_{\text{ind}}(\vec{y}_i^*,\vec{y}_i) =\frac{q_{\text{ind}}(\vec{y}_i\;|\;x_0,x_i)}{q_{\text{ind}}(\vec{y}_i^*\;|\;x_0,x_i)}
$$
and 
$$
P_{\text{ind}}(\vec{y}_i^*,\vec{y}_i;\; x_i, x_0) = \begin{cases}
    \frac{f(\vec{y}_i^*\;|\;x_0)}{f(\vec{y}_i\;|\;x_0)} & \text{if }f(\vec{y}_i\;|\;x_0)>0, \\
    0 & \text{otherwise.}
\end{cases}$$}
\ENDFOR
\STATE{Set $\logmle_{C}(x_i\;|\;x_0)=\log\left(\frac{1}{h}\sum_{k=1}^h\exp\logmle_{C}^{(k)}(x_i\;|\;x_0)\right)$.}
\ENDFOR
\end{algorithmic}
Output the estimated log marginal likelihood $\logmle_{C}(\vec{x}\;|\;x_0)$ given by
$$
\logmle_{C}(\vec{x}\;|\;x_0)=\sum_{i=1}^n \logmle_{C}(x_i\;|\;x_0).
$$
\end{algorithm}



\textbf{Tunnel method.} 
The estimated marginal likelihood is calculated as the ratio of the normalising constants of two probability density functions: (i) the density function of the conditional distribution of the bridge $\vec{y}_{[1,m-1]}$ between $x_0$ and $x_\star$, and (Equation~$\eqref{equ:bridgedensity}$) (ii) the density function of some normalised reference distribution. 
In our case, the obvious candidate for the reference distribution is the independence proposal distribution for the bridge. 
As for the Chib method, the estimate is obtained using samples from both the conditional distribution and the independence proposal distribution. 
We note that both the tunnel estimator and Chib estimator can be computed from the same sets of sampled bridges. 
To improve numerical stability of the estimator the method of~\cite{GronauQuentinF.2017Atob} is adopted.

\begin{algorithm}[\textsc{Tunnel sampling}]\label{alg:ML:tunnel}
\begin{algorithmic}
    Fix a value $M_1 \in \mathbb{N}$ for the number of samples from the conditional posterior distribution, a value $M_2 \in \mathbb{N}$ for the number of samples from the independence proposal and a value $K \in \mathbb{N}$ for the number of iterations when calculating the estimate. 
Simulate a sample $\multfixedbridges{y}{1},\ldots,\multfixedbridges{y}{M_1}$ from the conditional posterior distribution.
\STATE{Set $c_1=\frac{M_1}{M_1+M_2}$ and $c_2=\frac{M_2}{M_1+M_2}$.}
\FOR{$i=1,\ldots,n$}
\FOR{$j=1,\ldots M_1$}
\STATE{Calculate $l^{(j)}_{f,i}$ by
    \begin{equation*}
l^{(j)}_{f,i}=\log f(\vec{y}_i^{(j)}\;|\;x_0)-\log q_{\text{ind}}(\vec{y}_i^{(j)}\;|\;x_0,x_i).
\end{equation*}}
\ENDFOR
\STATE{Set $l_i$ to be the median of the set $ \{l^{(j)}_{f,i}:j=1,\ldots,M_1\}$}.
\FOR{$j=1,\ldots M_2$}
\STATE{Simulate a bridge $\vec{w}_{i}^{(j)}$ from the independence proposal using Algorithm~\ref{alg:singularitybridge}}.
\STATE{Calculate $l^{(j)}_{q,i}$ by
    \begin{equation*}
l^{(j)}_{q,i}=\log f(\vec{w}_i^{(j)}\;|\;x_0)-\log q_{\text{ind}}(\vec{w}_i^{(j)}\;|\;x_0,x_i).
\end{equation*}
}
\ENDFOR
\STATE{Set $\hat{f}_{TS}(x_i|x_0)^{\left[0\right]}=0.1$.}
\FOR{$k=1,\ldots,K$}
\STATE{Calculate $\hat{f}_{T}(x_i|x_0)^{\left[k\right]}$ by
        \begin{align*}
   \hat{f}_{T}(x_i\;|\;x_0)^{\left[k\right]}=&\frac{1}{M_2}\sum_{j=1}^{M_2}\frac{\exp( l_{q,i}^{(j)}-l_i)}{c_2\exp( l_{q,i}^{(j)}-l_i)+c_1\hat{f}_{T}(x_i|x_0)^{\left[k-1\right]}}\\ \times&\left(\frac{1}{M_1}\sum_{j=1}^{M_1}\frac{1}{c_2 \exp( l_{f,i}^{(j)}-l_i)+c_1\hat{f}_{T}(x_i|x_0)^{\left[k-1\right]}}\right)^{-1}.   
\end{align*}}
\ENDFOR
\STATE{Set $\logmle_{T}(x_i\;|\;x_0)=\log(\hat{f}_{T}(x_i\;|\;x_0)^{\left[K\right]}+l_i).$}
\ENDFOR
\end{algorithmic}
Output the estimated log marginal likelihood $\logmle_{T}(\vec{x}\;|\;x_0)$ given by
$$
\logmle_{T}(\vec{x}\;|\;x_0)=\sum_{i=1}^n \logmle_{T}(x_i\;|\;x_0).
$$
\end{algorithm}

\textbf{Stepping stone method.} 
This is the method of (generalised) stepping stone sampling from~\cite{FanYu2011Capm} which is a generalisation of the method in~\cite{XieWangang2011IMLE}. 
In a similar manner to the tunnel method, an estimate is calculated of the ratio of the normalising constants of the conditional bridge distribution and the independence proposal distribution. The stepping stone estimator requires samples from a number of distributions that are on a path between the conditional bridge distribution and the independence proposal distribution. These distributions have the following unnormalised density function for different values of $\beta\in[0,1]$:
\begin{equation}\label{equ:stepStoneDists}
     \lambda_\beta(\vec{y};x_0,x_*)=f_{\{\vec{Y}_{[1,m-1]}|Y_m\}}\left(\vec{y}_{[1,m-1]} \;|\;  x_0,\, y_m=x_\star\right) ^\beta q_{\text{ind}}(\vec{y}_{[1,m-1]}\;|\;x_0,\,y_m=x_\star)^{1-\beta}
\end{equation}
where $q_{\text{ind}}$ is the probability density function of the independence proposal defined in Equation~$\eqref{equ:qind}$. 
We denote by $F_\beta$ the distribution defined by the unnormalised density in Equation~\eqref{equ:stepStoneDists}. We choose some value $K \in \mathbb{N}$ and values $0=\beta_0<\beta_1\ldots <\beta_{K}=1$ and generate samples from $F_{\beta_k}$ for $k=0,\ldots,K-1$. $F_{\beta_0}$ is the independence proposal distribution and $F_{\beta_K}$ is the conditional bridge distribution. Simulation from $F_{\beta_k}$ is achieved by using the approach specified in Algorithm~\ref{alg:bridgesampler} with a modification to the target distribution density ratio.

Samples are obtained by what~\cite{LartillotNicolas2006CBFU} called the quasistatic method, which means that the last bridge sampled from $F_{\beta_{k-1}}$ is passed in as the starting point of the MCMC chain when sampling from $F_{\beta_{k}}$, instead of having a burn-in period for $K$ Markov chains. It is much more efficient to simulate from the independence distribution directly, rather than using MCMC. A burn-in period is therefore required for the distribution $F_{\beta_{1}}$.
\begin{algorithm}[\textsc{Stepping stone estimate}]\label{alg:ML:StepStone}
\begin{algorithmic}
\STATE{Fix a number $K\in \mathbb{N}$ and values $0 =\beta_0 < \beta_1 <\ldots <\beta_K=1 $. Fix a number $M_0\in \mathbb{N}$ for the number of bridges to simulate under the independence proposal and a number $M \in \mathbb{N}$ for the number of samples to be simulated by the MCMC for each $\beta_k$. Fix a number $b \in \mathbb{N}$ for number of burn-in iterations and $c \in \mathbb{N}$ for the number of thin iterations to be used in the MCMC.}
\FOR{$k=1\ldots,K$}
\IF{$k=1$}
\FOR{$i=1,\ldots,n$}
\FOR{$j=1,\ldots M_0$}
\STATE{Simulate a bridge $\vec{y}_{i}^{(j,1)}$ from the independence proposal using Algorithm~\ref{alg:singularitybridge}}.
\ENDFOR
\STATE{Repeatedly run the independence proposal until a valid bridge $\vec{y}_i^{\text{start}}$ between $x_0$ and the data point $x_i$ is produced.} 
\ENDFOR
\STATE{Set $\multfixedbridges{y}{\text{start}}=(\vec{y}_1^{\text{start}},\dots,\vec{y}_n^{\text{start}})$}
\ELSE
\STATE{Sample $\multfixedbridges{y}{1,k},\ldots,\multfixedbridges{y}{M,k}$ using Algorithm~\ref{alg:ML:StepStoneSim} with $\beta=\beta_{k-1}$, $c=c$, $M=M$ and $\multfixedbridges{y}{0}=\multfixedbridges{y}{\text{start}}$. If $k=2$ use $b=b$ and otherwise use $b=0$.}
\STATE{Set $\multfixedbridges{y}{\text{start}}=\multfixedbridges{y}{M,k}$.}
\ENDIF
\FOR{$i=1,\dots,n$}
\STATE{Calculate $\eta_i^{(k)}$ as
            \begin{equation*}
   \eta_i^{(k)}=\max_{j=1,\ldots,M} \left[\frac{f(\vec{y}_i^{(
   j,k)}\;|\;x_0)}{q_{\text{ind}}(\vec{y}_i^{(j,k)}\;|\;x_0,x_i)}\right] .
\end{equation*}}
\STATE{Calculate $\logmle_S(x_i|x_0)^{(k)}$ as
            \begin{equation*}
\logmle_S(x_i|x_0)^{(k)}=(\beta_k-\beta_{k-1})\log\eta_k + \log\left(\sum_{j=1}^M\left[\frac{f(\vec{y}_i^{(j,k)}\;|\;x_0)}{\eta_i^{(k)}q_{\text{ind}}(\vec{y}_i^{(j,k)}\;|\;x_0,x_i)}\right]^{(\beta_k-\beta_{k-1})}\right).
\end{equation*}}
\ENDFOR
\STATE{Calculate the estimate $\logmle_{S}(x_i\;|\;x_0)$ for $\log\walkdens(x_i\;|\;x_0)$ by
        \begin{equation*}
    \logmle_{S}(x_i\;|\;x_0)=\sum_{k=1}^K\logmle_S(x_i\;|\;x_0)^{(k)}.
\end{equation*}}
\ENDFOR
 \STATE{Output the estimated log marginal likelihood $\logmle_{S}(\vec{x}\;|\;x_0)$ given by
 $$
 \logmle_{S}(\vec{x}\;|\;x_0) = \sum_{i=1}^n \logmle_{S}(x_i\;|\;x_0).
 $$
 }
\end{algorithmic}
\end{algorithm}

In practice, we will use a value of $K=100$ and equally spaced points $\beta_k=\frac{k}{K}$, which is the approach adopted in \cite{FanYu2011Capm}. \cite{XieWangang2011IMLE} suggest a different spacing of the points $\beta_k$ that places more points near to $\beta_0=0$, when the reference distribution is the prior. In our case, as the reference distribution is carefully constructed to be similar to the posterior, equally spaced points should suffice.

Consider the following unnormalised density function
$$
\lambda_\beta(\hat{\vec{y}};x_0,\vec{x})=\prod_{i=1}^{n}\lambda_\beta(\vec{y}_i;x_0,x_i),
$$ 
where $\lambda_\beta(\vec{y}_i;x_0,x_i)$ is given by Equation~\ref{equ:stepStoneDists}. The following Algorithm specifies our approach to simulating from a distribution with such an unnormalised density and is used to simulate the samples required by Algorithm~\ref{alg:ML:StepStone}.
\begin{algorithm}[\textsc{Stepping stone sampling}]\label{alg:ML:StepStoneSim}
\begin{algorithmic}
\STATE{Input $\beta\in[0,1]$, the number of burn-in iterations $b>0$, the number of thin iterations $c>0$, the number of samples to be outputted $M>0$, and a set $\vec{\hat{y}}^{(0)}$ of bridges.}
\FOR{$j=1,\ldots,Mc+b$}
\FOR{$i=1,\ldots,n$}
\STATE{Sample a new bridge $\vec{y}_i^*$ using the partial bridge proposal conditional on $\vec{Y}_i=\vec{y}_i^{(j-1)}$ and $x_0$. Calculate the proposal ratio $Q_\text{part}$ in Equation~$\eqref{equ:Qpart}$.}
\STATE{Calculate the target density ratio $P_\text{step}$ given by 
\begin{equation*}
P_{\text{step}}(\vec{y}_i^*,\vec{y}_i;\; x_i, x_0,\beta) = \left[\prod_{k=a+1}^{a+l+1} \frac{\GGFdens(y_{i,k}^*\;|\;y_{i,k-1}^*,t_0/m)}{ \GGFdens(y_{i,k}\;|\;y_{i,k-1},t_0/m)}\right]^\beta \left[\frac{q_{\text{ind}}(\vec{y}_i^*|x_0,x_i)}{q_{\text{ind}}(\vec{y}_i|x_0,x_i)}\right]^{1-\beta}.
\end{equation*}}
\STATE{Calculate the acceptance probability $A_\text{step}$ by
$$
A_{\text{step}}(\vec{y}_i^*,\vec{y}_i;\; x_i, x_0, \beta) = \min\left\{ 1, P_{\text{step}}(\vec{y}_i^*,\vec{y}_i;\; x_i, x_0, \beta)Q_{\text{part}}(\vec{y}_i^*,\vec{y}_i) \right\}
$$}
\STATE{With probability $A_{\text{part}}(\vec{y}_i^*,\vec{y}_i;\; x_i, x_0, \beta)$ set $\vec{y}_i^{(j)}=\vec{y}_i^*$; otherwise set $\vec{y}^{(j)}_i=\vec{y}_i$. 
}
\ENDFOR
\STATE{Set $\multfixedbridges{y}{j}=(\vec{y}_1^{(j)},\ldots \vec{y}_n^{(j)})$}
\ENDFOR
\STATE{Refine the sample of sets of bridges $\multfixedbridges{y}{1},\ldots,\multfixedbridges{y}{
Mc+b}$ to include only indices $jc+b$ and reindex by the map $(jc+b) \mapsto j$.}
\STATE{Output the sample of sets of bridges $\multfixedbridges{y}{1},\ldots,\multfixedbridges{y}{M}$.}
\end{algorithmic}
\end{algorithm}

\subsection{Proof of Bayesian consistency (Theorem~\ref{thm:consistency})}\label{sec:consistencyproof}

Recall that $\mathcal{A}$ is the Borel $\sigma$-algebra on $\BHV{N}^{(0)}$, which is the subset of fully resolved trees.
For simplicity, we assume that with prior probability 1, $X_0\in\BHV{N}^{(0)}$, although the proof can be made more general to accommodate other priors. 
We aim show the following conditions hold. 
\begin{description}
     \item[Condition I] The function $x_0 \mapsto B(x_0,t_0)(A)$ is measurable for all $A \in \mathcal{A}$.
    \item[Condition II] $x_0 \neq x_0' \implies B(x_0,t_0)\neq B(x_0',t_0)$.
\end{description}
Then Theorem~\ref{thm:consistency} follows as a consequence of Theorem 2.4 in~\cite{MillerJeffreyW2018Adto}. 

\textbf{Proof of Condition I.\ } If the function
\begin{equation}\label{equ:measureablewalk}
x_0 \mapsto \Wm{x_0}{t_0}(A)
\end{equation}
is measurable for all $A \in \mathcal{A}$ we have measurability of $x_0 \mapsto \BB{x_0}{t_0}(A)$ for all $A \in \mathcal{A}$, since the pointwise limit of a sequence of measurable functions is measurable. 
The proof in \cite{nye2020random} established weak convergence of the random walk distributions $\Wm{x_0}{t_0}$ to $\BB{x_0}{t_0}$. 
This guarantees the convergence of $\Wm{x_0}{t_0}(A) \to \BB{x_0}{t_0}(A)$ as $m \to \infty$ for all sets $A \in \mathcal{A}$ with $\BB{x_0}{t_0}(\delta A)=0$, where $\delta A$ is the boundary of $A$. This only presents a problem for us if $\BB{x_0}{t_0}(A)>0$ for some lower dimensional subset of $\BHV{N}$, which is not the case as $\BB{x_0}{t_0}$ is absolutely continuous with respect to the Borel volume measure on $\BHV{N}^{(0)}$ (see Definition 8 of \cite{nye2020random}).

We use an induction argument to show that the function in Equation~$\eqref{equ:measureablewalk}$ is measurable.
We show in Lemma \ref{lem:cons:GGFMeas} below that $(x_0,x) \mapsto \GGFdens(x \;|\; x_0,t_0)$ is measurable with respect to the product $\sigma$-algebra $\mathcal{A}\bigotimes \mathcal{A}$ on $\BHV{N}^{(0)} \times \BHV{N}^{(0)}$. 
Then by Fubini's theorem, $x_0 \mapsto W_{\GGF}(x_0,t_0;1)(A)$ is a measurable function for all $A\in \mathcal{A}$.

Now assume, for some $m>1$, that $f_{W}(x \;|\; x_0,t;m-1)$ is measurable as a function of $(x_0,x)$ for any $t>0$. 
We write 
\begin{equation}\label{eq:cons:RWMeasurability}
   f_W(x \;|\; x_0,t;m) = \int_{\BHV{N}} f_{W}\left(y\;|\;x_0,t^{(m)};m-1\right)\,\GGFdens\left(x\;|\;y,t/m\right)\,dy
\end{equation}
where $t^{(m)}=t(m-1)/m$. 
By assumption $f_{W}(y \;|\; x_0,t^{(m)};m-1)$ is measurable as a function of $(x_0,y)$ and by Lemma~\ref{lem:cons:GGFMeas}, $\GGFdens(x\;|\;y,t/m)$ is measurable as a function of $(y,x)$. 
It follows that 
$$
(x_0,y,x) \mapsto f_{W}\left(y\;|\;x_0,t^{(m)};m-1\right)\,\GGFdens\left(x\;|\;y,t/m\right)
$$
is a measurable function with respect to the $\sigma$-algebra $\mathcal{A}\bigotimes\mathcal{A}\bigotimes\mathcal{A}$.
Using Fubini's theorem and Equation~\eqref{eq:cons:RWMeasurability} we see that 
$$(x_0,x) \mapsto f_W(x \;|\; x_0,t;m)$$
is a measurable function for all $t>0$, and by induction this holds for all $m$.
Finally, the function
$$
x_0 \mapsto\Wm{x_0}{t_0}(A) = \int_A f_W(x \;|\; x_0,t;m)\; dx
$$
is measurable by Fubini's theorem.

\textbf{Proof of Condition II.\ }
We aim to show $x_0 \neq x_0' \implies B(x_0,t_0)\neq B(x_0',t_0)$, and we do this for two specific cases: (i) when $x_0$ and $x_0'$ are at different distances from the origin and (ii) when $x_0$ and $x_0'$ lie in different maximal orthants but are the same distance from the origin. 
For brevity, we omit the proof for the remaining case, when $x_0$ and $x_0'$ lie in the same maximal orthant and at the same distance from the origin, which is similar to case (ii).

In both cases (i) and (ii), the proof relies on a construction from \cite{nye2020random}.
There, the probability measure $\BB{x_0}{t_0}$ on $\BHV{N}$ was defined via a projection map $\mathcal{P}$ that takes paths on $\BHV{N}$ starting at $x_0$ which avoid codimension-2 to paths on $\mathbb{R}^{N'}_{\geq 0}$, where $N'=N-3$. 
The projection map is used to establish both conditions above. 
It operates via a series of reflections as follows. 
Suppose $x_0$ lies in the interior of a maximal orthant and let $\eta:[0,t_0]\rightarrow\BHV{N}$ denote a Brownian sample path starting from $x_0$. 
It was shown in \cite{nye2020random} that $\eta$ almost surely traverses a finite sequence of distinct maximal orthants. 
Since $\eta$ avoids codimension-2 almost surely, at most 1 split is replaced in $\eta(t)$ each time it hits a codimension-1 boundary in $\BHV{N}$. 
This sets up a sequence of isometries between the closure of each maximal orthant traversed by $\eta$ and $\R^{N'}_{\geq 0}$, under which the image $\eta$ is a sample path of reflected Brownian motion on $\R^{N'}_{\geq 0}$. 
More details are given in \cite{nye2020random}.

A consequence of the projection map is that, for any $r>0$, the probability that a Brownian motion starting from $x_0 \in \BHV{N}$ lies is in $K(0,r) \subset \BHV{N}$ at time $t_0$, is the same as the probability that a Brownian motion in $\mathbb{R}^{N'}$ starting from a distance $d(x_0,0)$ from the origin is in $K(0,r) \subset \mathbb{R}^{N'}$ at time $t_0$. 
This proves the result for case (i). We will prove the result for case (ii) in Lemma~\ref{lem:cons:IdentDiffOrthants}.


\qed

The following lemmas were used in the proof of Theorem~\ref{thm:consistency}.

\begin{lemma}\label{lem:cons:GGFMeas}
The function $(x_0,x) \mapsto \GGFdens(x\;|\;x_0,t)$ is measurable with respect to the product $\sigma$-algebra $\mathcal{A}\bigotimes \mathcal{A}$ on $\BHV{N}^{(0)} \times \BHV{N}^{(0)}$ for any $t>0$.
\end{lemma}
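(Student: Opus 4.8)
The plan is to write $\GGFdens(x\mid x_0,t)$ as a product of three factors and establish measurability of each. By Equation~\eqref{equ:GGF-dens}, on $\BHV{N}^{(0)}\times\BHV{N}^{(0)}$ both endpoints are fully resolved, so
$$
\GGFdens(x\mid x_0,t)=\mathbf{1}_{S}(x_0,x)\left(\tfrac12\right)^{\nu(x,x_0)}\frac{1}{(2\pi)^{\N}t^{\N/2}}\exp\!\left(-\frac{1}{2t}\,\BHVmetric{x}{x_0}^2\right),
$$
where $S=\{(x_0,x):\Gamma_{x,x_0}\text{ is simple}\}$. Since $\BHV{N}^{(0)}$ is a separable metric space, the product $\sigma$-algebra $\mathcal A\bigotimes\mathcal A$ agrees with the Borel $\sigma$-algebra of $\BHV{N}^{(0)}\times\BHV{N}^{(0)}$, so it suffices to prove joint Borel measurability. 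The Gaussian factor is continuous because $(x_0,x)\mapsto\BHVmetric{x}{x_0}$ is continuous, so the work reduces to showing that $\mathbf{1}_S$ and $\nu$ are measurable.

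The key tool is joint continuity of the geodesic map. Because $\BHV{N}$ is CAT$(0)$ \cite{bill01}, geodesics are unique and convexity of the metric yields
$$
\BHVmetric{\geods{x_0}{x}{s}}{\geods{x_0'}{x'}{s}}\le (1-s)\,\BHVmetric{x_0}{x_0'}+s\,\BHVmetric{x}{x'},
$$
so the evaluation map $\Phi:(x_0,x,s)\mapsto\geods{x_0}{x}{s}$ from $\BHV{N}^{(0)}\times\BHV{N}^{(0)}\times[0,1]$ into $\BHV{N}$ is Lipschitz, hence continuous. Let $\mathcal C_{\ge 2}\subset\BHV{N}$ be the union of all strata of codimension $\ge 2$, i.e.\ the set of trees with at most $N-5$ interior edges; this is closed. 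Since the endpoints are fully resolved, $\Gamma_{x_0,x}$ is simple exactly when its image avoids $\mathcal C_{\ge 2}$, so
$$
S^c=\bigl\{(x_0,x):\exists\,s,\ \Phi(x_0,x,s)\in\mathcal C_{\ge 2}\bigr\}=\pi\!\left(\Phi^{-1}(\mathcal C_{\ge 2})\right),
$$
where $\pi$ forgets the $s$-coordinate. As $\Phi^{-1}(\mathcal C_{\ge 2})$ is closed and projection along the compact factor $[0,1]$ is a closed map, $S^c$ is closed and $S$ is open; in particular $\mathbf{1}_S$ is measurable.

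To handle $\nu$ I would argue face by face. There are finitely many codimension-$1$ faces $F$, and for each I set $B_F=S\cap\pi(\Phi^{-1}(\bar F))$, which is Borel as the intersection of the open set $S$ with a closed set. On $S$ the geodesic avoids $\mathcal C_{\ge2}$, which contains the relative boundary of every $F$, so $\Gamma_{x_0,x}$ meets $\bar F$ if and only if it crosses the relative interior of $F$; and by the shortest-path property of CAT$(0)$ geodesics it does so at most once. Hence $\nu=\sum_F\mathbf{1}_{B_F}$ on $S$, a finite sum of Borel indicators, and therefore measurable. Combining the three factors gives measurability of $\GGFdens$.

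The main obstacle I anticipate is the bookkeeping for $\nu$: one must verify that, on the simple locus $S$, "meeting $\bar F$'' records exactly one transversal crossing of $F$ and nothing more, which is where the avoidance of $\mathcal C_{\ge2}$ and the uniqueness and convexity of CAT$(0)$ geodesics are essential. A secondary point needing care is the reduction from $\mathcal A\bigotimes\mathcal A$ to the Borel $\sigma$-algebra of the product, justified by separability of $\BHV{N}^{(0)}$.
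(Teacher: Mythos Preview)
Your proof is correct and follows essentially the same approach as the paper: decompose the density into the indicator of the simple locus, the factor $(1/2)^{\nu}$, and the Gaussian term; use separability of $\BHV{N}^{(0)}$ to identify $\mathcal A\bigotimes\mathcal A$ with the Borel $\sigma$-algebra; and then show $S$ is open via continuity of CAT$(0)$ geodesics. The paper establishes openness of $S$ by considering the continuous function ``minimum distance from the geodesic to the codimension-$2$ stratum'' rather than your projection-along-a-compact-factor argument, and it handles $\nu$ with a one-line ``similar argument'' where you spell out the face-by-face decomposition $\nu=\sum_F\mathbf 1_{B_F}$ --- so your version is more explicit but not materially different.
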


\begin{proof}
Let $\mathcal{G} = \{(x_0,x) \in \BHV{N}^{(0)} \times \BHV{N}^{(0)} : \Gamma_{x_0,x} \text{ is simple}\}$, and let $I_{(x_0,x)\in \mathcal{G}}$ be the corresponding indicator function on $\BHV{N}^{(0)} \times \BHV{N}^{(0)}$.  
We rewrite the density function for GGF centred at $x_0$ with dispersion $t$ from Equation~\eqref{equ:GGF-dens}, as
$$
\GGFdens(x \;|\; x_0,t) = I_{(x_0,x)\in \mathcal{G}}\left(\frac{1}{2}\right)^{\nu(x_0,x)}\frac{1}{(2\pi)^\N t^{N'/2}}\exp-\frac{1}{2t}d_{\text{BHV}}(x_0,x)^2
$$
which is clearly measurable as a function of $(x_0,x)$ if $\nu(x_0,x)$ and $I_{(x_0,x)\in \mathcal{G}}$ are both measurable as functions of $(x_0,x)$. 
We note that the indicator function $I_{(x_0,x)\in \mathcal{G}}$ is measurable if $\mathcal{G}$ is in the product $\sigma$-algebra $\mathcal{A} \bigotimes \mathcal{A}$. 
Since $\BHV{N}^{(0)}$ is a separable metric space, the product $\sigma$-algebra and the Borel $\sigma$-algebra coincide. 
Hence to prove that $I_{(x_0,x)\in \mathcal{G}}$ is measurable, it suffices to prove that $\mathcal{G}$ is an open set.

Consider the function on $\BHV{N}^{(0)} \times \BHV{N}^{(0)}$ which for a pair $(x_0,x)$ gives the minimum distance of a point on the geodesic between $x_0,x$ from the union of codimension-2 orthants of $\BHV{N}$ (i.e. the set of trees with $\leq N-5$ internal edges). 
This union forms a closed set, and the function is well-defined since the geodesic segment is compact. 
This function is non-zero if and only if the geodesic between $x_0,x$ is simple, since in that case it avoids codimension-2 singularities. 
The function is continuous and since $\mathcal{G}$ is the preimage of an open set,  $\mathcal{G}$ is open. 
A similar argument applies to $\nu(x_0,x)$.  
\end{proof}

Next, we will show that Condition II holds in case (ii). 
We explicitly construct a set that has a different probability under the two distributions. 
The construction involves moving out along the infinite ray from the origin to $x_0$. A ball sufficiently far out along this ray has different probabilities under the two different measures.

\begin{lemma}\label{lem:cons:IdentDiffOrthants}
    Suppose $x_0 \neq x_0'$ with $d(x_0,0)=d(x_0',0)$. Suppose $x_0$ and $x_0'$ both belong to different maximal orthants $\mathcal{O}_0$ and $\mathcal{O}_1$. Then there exists an open set $A_0 \subset \mathcal{O}_0$ satisfying \begin{equation}\label{eq:cons:setsDiff}
        \BB{x_0}{t_0}(A_0) > \BB{x_0'}{t_0}(A_0).
    \end{equation}
    
\end{lemma}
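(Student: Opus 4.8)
The plan is to exhibit a small ball $A_0$ placed far out along the infinite ray from the origin through $x_0$, and to show that the Brownian endpoint mass it receives is strictly larger when the walk starts at $x_0$ than at $x_0'$. Write $D=d(x_0,0)=d(x_0',0)$, let $r\mapsto\xi(r)$ be the unit-speed ray from the origin through $x_0$ inside $\orthant{0}$, and for $R>D$ and small $\rho>0$ set $z=\xi(R)$ and $A_0=K(z,\rho)$. Since $\orthant{0}$ is a convex cone and $x_0$ lies in its interior, $z$ lies in the interior with distance of order $R$ to $\partial\orthant{0}$, so $A_0\subset\orthant{0}$ is open for $\rho$ small; moreover $x_0$ and $z$ are collinear through the origin, so $d(x_0,z)=R-D$ and $\geod{x_0}{z}$ is the straight segment inside $\orthant{0}$.

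The geometric input I would use is a distance gap. Because $x_0,x_0'$ lie in distinct maximal orthants, the Alexandrov angle $\theta$ at the origin between $x_0$ and $x_0'$ in the space of directions is strictly positive, and the angle at the origin between $x_0'$ and $z$ equals $\theta$ since $z$ lies on the ray through $x_0$. The CAT$(0)$ law of cosines then gives
$$
d(x_0',z)^2 \;\geq\; D^2+R^2-2DR\cos\theta \;=\; (R-D)^2 + 2DR(1-\cos\theta),
$$
so that $d(x_0',z)^2 - d(x_0,z)^2 \geq 2DR(1-\cos\theta)$ grows linearly in $R$.

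Next I would bound the two probabilities through the projection construction of \cite{nye2020random}. For the lower bound, restricting to sample paths that never leave $\orthant{0}$ shows that $\BB{x_0}{t_0}(A_0)$ is at least the orthant-confined (Dirichlet) heat-kernel mass of $A_0$; since $A_0$ is deep in the interior and only distance $R-D$ from $x_0$, this is bounded below by $c_1\exp\{-(R-D+\rho)^2/2t_0\}$ for some $c_1=c_1(\rho,t_0,N')>0$. For the upper bound I would use that the density of $\BB{x_0'}{t_0}$ at any $y\in\orthant{0}$ is a sum over developed (reflected) paths from $x_0'$ to $y$, each of developed length $\ell\geq d(x_0',y)\geq d(x_0',z)-\rho=:\ell_{\min}$ with weight in $(0,1]$. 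Because $\ell_{\min}=\Theta(R)$ is large, excess-length terms are super-suppressed via $\ell^2-\ell_{\min}^2\ge 2\ell_{\min}(\ell-\ell_{\min})$, so even an exponential-in-$\ell$ number of longer paths contributes a negligible geometric tail, yielding
$$
\BB{x_0'}{t_0}(A_0) \;\leq\; c_2\,\mathrm{poly}(R)\,\exp\!\left\{-\frac{(d(x_0',z)-\rho)^2}{2t_0}\right\}.
$$
Comparing exponents, $(d(x_0',z)-\rho)^2-(R-D+\rho)^2 \geq 2R\,[D(1-\cos\theta)-2\rho]-O(1)$, which for $\rho<D(1-\cos\theta)/4$ grows linearly in $R$ and dominates the $O(\log R)$ loss from the prefactor; hence $\BB{x_0}{t_0}(A_0)>\BB{x_0'}{t_0}(A_0)$ for all large $R$, which is \eqref{eq:cons:setsDiff}.

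The main obstacle is the upper bound: making rigorous that the contributions from the many orthant-unfoldings of the space do not spoil the sharp Gaussian constant. The delicate point is that a Gaussian bound with a slightly worse constant $1/(2(1+\delta)t_0)$ would be useless, since the loss $\delta(R-D)^2$ is quadratic in $R$ while the gap we exploit is only linear; one genuinely needs the exact constant $1/(2t_0)$ with at most a sub-exponential prefactor. The mechanism above — that a diverging minimal length $\ell_{\min}$ renders all non-minimal developed paths negligible regardless of their (at most exponential) number — is what supplies this, and establishing the exponential bound on the number of developed geodesic paths is the part requiring the most care.
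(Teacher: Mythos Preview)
Your overall plan matches the paper's: put $A_0$ as a small ball far along the ray from the origin through $x_0$, bound $\BB{x_0}{t_0}(A_0)$ from below via paths that never leave $\orthant{0}$, bound $\BB{x_0'}{t_0}(A_0)$ from above, and compare Gaussian exponents. Your lower bound is essentially the paper's (which phrases it via the reflection principle on $\R^{N'}_{\ge 0}$ rather than the Dirichlet kernel, but these coincide here).

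The substantive difference, and the genuine gap, is the upper bound. You posit that the density of $\BB{x_0'}{t_0}$ on $\orthant{0}$ is a sum over ``developed (reflected) paths'' of length $\ell\ge d(x_0',y)$, with at most exponentially many terms of a given length; no such heat-kernel representation on $\BHV{N}$ is established, and the path-counting step you flag at the end would itself be a nontrivial lemma. The paper sidesteps this entirely via the projection $\mathcal{P}$ of \cite{nye2020random}, which sends Brownian paths on $\BHV{N}$ to reflected Brownian paths on $\R^{N'}_{\ge 0}$. The key observation is that any path from $x_0'$ ending in $A\subset\orthant{0}$ must cross a codimension-1 face, and its projected endpoint lies in $\tau(A)$ for some permutation $\tau\in G_{N'}$ of the coordinate axes. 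This gives the \emph{finite} bound
\[
\BB{x_0'}{t_0}(A)\;\le\;2\sum_{\tau\in G_{N'}}\sum_{S\in\mathcal{S}}\BBR{x_0'}{t_0}\bigl(g_S(\tau(A))\bigr),
\]
a sum of $N'!\,(2^{N'}-1)$ explicit Euclidean Gaussians, each carrying the exact constant $1/(2t_0)$ in the exponent. The ``sub-exponential prefactor'' you worry about is therefore just a constant independent of $R$, and the exponent comparison reduces to an elementary Euclidean calculation (plus Cauchy--Schwarz) showing $\|x_0'-g_S(\tau(a_i))\|^2-\|x_0-a_i\|^2\to\infty$. Your CAT(0) angle inequality is in the right spirit, but the paper never invokes the BHV distance $d(x_0',z)$ in the upper bound; the projection reduces everything to Euclidean distances in $\R^{N'}$, which is precisely what makes the obstacle you identify disappear.
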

\begin{proof}
We enumerate the splits in $x_0$ by $s_1,\ldots, s_{N'}$ and the splits in $x_0'$ by $u_1,\ldots, u_{N'}$. 
We define the geodesic $\gamma_0=\Gamma_{0,x_0}$ and extend it infinitely at the $x_0$ end. 
We define the sequence $(a_i)_{i \in \mathbb{N}}$ where $a_i$ is the point on $\gamma_0$ at a distance $i$ from $x_0$ in the infinite direction. 
We will find a $J \in \mathbb{N}$ and an $r >0$ such that Equation~\eqref{eq:cons:setsDiff} holds with $A_0 = K(a_J,r)$. 
For a set $A\subset\mathcal{O}_0$, we will abuse notation to additionally write $A$ for the set $\{x =(a(s_1),\ldots , a(s_{N'}))\;:\; a \in A\}\subset \mathbb{R}^{N'}_{\geq 0}$. 
We will denote the distribution of endpoints of Brownian motion on $\mathbb{R}^{N'}_{\geq 0}$ by $\BBRPlus{x_0}{t_0}$. 
We let $\Phi(x_0,t_0)$ be the probability distribution of the isotropic Gaussian with mean $x_0$ and variance $t_0$ in $\mathbb{R}^{N'}$. 
We will use the fact that both the measures $\BB{x_0}{t_0}$ and $\BBRPlus{x_0}{t_0}$ can be defined via related measures on continuous paths and split the sets of paths into those that hit a boundary and those that do not. 
We let $C$ denote the set of continuous paths $\eta :[0,t_0] \to \BHV{N}$ with $\eta(0)=x_0$ and $\eta(t_0)\in \BHV{N}^{(0)}$, that do not hit a boundary of codimension-$2$. 
We let $C_0$ be the subset of $C$ consisting of paths that hit no boundaries in the time $[0,t_0]$ and let $C_1$ be subset of $C$ containing paths that hit at least one codimension-$1$ boundary in the time $[0,t_0]$. 
Then we have $C=C_0\cup C_1$. 
We denote by $\Bpath{x_0}{t_0}$ the Brownian motion measure on $C$, where the $\sigma$-algebra was given in~\cite{nye2020random}. 
For a measurable subset $A\subset \mathcal{O}_0$ let $C(A)$ be the set of paths in $C$ that have their endpoints in $A$ and for $i=0,1$, let $C_i(A)$ be the set of paths in $C_i$ that have their endpoints in $A$. 
The projection map $\mathcal{P}$ from~\cite{nye2020random} maps elements of $C$ to paths on the positive orthant $\mathbb{R}^{N'}_{\geq 0}$. 
We let $\mathcal{P}C(A)$ be the set of projected paths that have their endpoints in $A$ when $A$ is considered as a subset of $\mathbb{R}^{N'}_{\geq 0}$, and define $\mathcal{P}C_i(A)$ analogously for $i=1,2$. 
The projection map $\mathcal{P}$ and the sets $C$, $C_0$ and $C_1$ are defined analogously for the source $x_0'$ using the notation $\mathcal{P}'$, $C'$, $C'_0$ and $C'_1$ respectively.
The following two equalities hold trivially,
    \begin{equation}\label{eq:cons:noHitEq}
   \Bpath{x_0}{t_0}(C_0(A))=\BRPluspath{x_0}{t_0}(\mathcal{P}C_0(A)),
    \end{equation}
    and $$
    \BB{x_0}{t_0}(A)=\Bpath{x_0}{t_0}(C(A))=\Bpath{x_0}{t_0}(C_0(A))+\Bpath{x_0}{t_0}(C_1(A)).$$
Since any Brownian motion path that starts at $x_0'$ and ends in $A$ must traverse at least one codimension-$1$ boundary, we also have
$$
\BB{x_0'}{t_0}(A)=\Bpath{x_0'}{t_0}(C'(A))=\Bpath{x_0'}{t_0}(C_1'(A)).
$$
Let $\mathcal{S}=\{ S \subset \{1,\ldots,N'\} : S\neq \emptyset \}$ and let $g_S:\mathbb{R}^{N'}_{\geq 0} \to \mathbb{R}^{N'}$ be defined by
\begin{equation*}
(g_S(x))_j =\begin{cases} -x_j & \text{if } j \in S,\\
x_j & \text{otherwise.}
\end{cases}
\end{equation*}
Using the reflection principle we write
$$
\BRPluspath{x_0}{t_0}(\mathcal{P}C(A))= \BBR{x_0}{t_0}(A) + \sum_{S \in \mathcal{S}}\BBR{x_0}{t_0}(g_S(A)).
$$
Using the reflection principle again, we see that 
\begin{equation}\label{eq:cons:reflectionPrinciple}
    \BRPluspath{x_0}{t_0}(\mathcal{P}C_1(A)) = 2 \sum_{S \in \mathcal{S}}\BBR{x_0}{t_0}(g_S(A)),
\end{equation}
which also gives that
$$
\BRPluspath{x_0}{t_0}(\mathcal{P}C_0(A)) = \BBR{x_0}{t_0}(A) - \sum_{S \in \mathcal{S}}\BBR{x_0}{t_0}(g_S(A)).
$$
We hence obtain the following lower bound for the probability of $A$ under $B(x_0,t_0)$,
\begin{equation}\label{eq:cons:x0Bound}
    \BB{x_0}{t_0}(A) > \BBR{x_0}{t_0}(A) - \sum_{S \in \mathcal{S}}\BBR{x_0}{t_0}(g_S(A)).
\end{equation}

Next we obtain an upper bound for $B(x_0',t_0)(A)$ in the following way.
We let $\permGroup$ be the permutation group on $\{1,\ldots,N'\}$. 
The projection onto $\mathbb{R}^{N'}_{\geq 0}$ of any path $\eta \in C'(A)$ necessarily has $\mathcal{P}'(\eta)(t_0) \in \tau(A)$ for some $\tau\in\permGroup$, where $\tau(A)=\{(a(s_{\tau(1)}),\ldots,a(s_{\tau(N')}) : a\in A\}$. 
Since $x_0'$ is not in the orthant $\mathcal{O}_0$, pathjs $\eta$ from $x_0'$ ending in $A$ must hit at least one boundary.
We therefore have
$$
\Bpath{x_0'}{t_0}(C_1'(A))\leq\sum_{\tau \in G_{N'}}\BRPluspath{x_0'}{t_0}(\mathcal{P}'C_1'(\tau(A)).
$$
As in Equation~$\eqref{eq:cons:reflectionPrinciple}$, using the reflection principle we have
$$
\BRPluspath{x_0'}{t_0}(\mathcal{P}'C_1'(\tau(A)) = 2\sum_{S \in \mathcal{S}} \BBR{x_0'}{t_0} (g_S(\tau(A))).
$$
We adopt the following notation for the sum over permutations and reflections of the set $A$,
$$
B^{\tau}(x_0',t_0)(A) = 2\sum_{\tau \in G_{N'}}\sum_{S \in \mathcal{S}} \BBR{x_0'}{t_0} (g_S(\tau(A))),
$$
which gives a bound on $\BB{x_0'}{t_0}(A)$ by
\begin{equation}\label{eq:cons:x1Bound}
    \BB{x_0'}{t_0}(A)=\Bpath{x_0'}{t_0}(C_1'(A)) \leq B^{\tau}(x_0',t_0)(A).
\end{equation}
We now we have the bounds~$\eqref{eq:cons:x0Bound}$ and~$\eqref{eq:cons:x1Bound}$ on $B(x_0,t_0)(A)$ and $B(x'_0,t_0)(A)$ for $A\subset\mathcal{O}_0$.  

Next, let $\phi(x;x_0,t_0)$ denote the density of the isotropic normal distribution on $\mathbb{R}^{N'}$ with mean $x_0$ and variance $t_0$ and define
$$
\phi_\mathcal{S} (x ; x_0,t_0) = \sum_{S \in \mathcal{S}}\BR{x_0}{t_0}{g_S(x)}.
$$
For $x \in \mathbb{R}^{N'}_{\geq 0}$, define 
\begin{equation}\label{eq:cons:BMPermBound2}
    f_{B^\tau}(x;x_0',t_0) = 2\sum_{\tau \in G_{N'}} \phi_\mathcal{S} (\tau(x) ; x_0,t_0).
\end{equation}
We will show that for any $S \in \mathcal{S}$, 
\begin{equation}\label{eq:cons:convStatement1}
\frac{\BR{x_0}{t_0}{g_S(a_i)}}{\BR{x_0}{t_0}{a_i}} \to 0 \text{ as } i \to \infty.
\end{equation}
We will also show that for any $\tau \in G_{N'}$ and any $S \in \mathcal{S}$
\begin{equation}\label{eq:cons:convStatement2}
\frac{\BR{x_0'}{t_0}{g_S(\tau(a_i))}}{\BR{x_0}{t_0}{a_i}} \to 0 \text{ as } i \to \infty,
\end{equation}
and hence that 
\begin{align}
\frac{f_{B^{\tau}}(a_i;x_0',t_0)}{\BR{x_0}{t_0}{a_i}-\phi_\mathcal{S} (a_i ; x_0,t_0)}\nonumber 
\to 0
\end{align}
as $i\to \infty$.
We can therefore choose $J \in \mathbb{N}$ such that
$$
f_{B^{\tau}}(x_0',t_0)(a_J) < \frac{1}{2} [\BR{x_0}{t_0}{a_J}-\phi_\mathcal{S} (a_J ; x_0,t_0)].
$$
By continuity of each of the functions on $\mathbb{R}^{N'}$ there is a value $r >0 $ such that
\begin{align*}
B^{\tau}(x_0',t_0)(A_0) &< \BBR{x_0}{t_0}(A_0) - \sum_{S \in \mathcal{S}}\BBR{x_0}{t_0}(g_S(A_0)) 
\end{align*}
with $A_0 = K(a_J,r)$ and $A_0\subset \mathcal{O}_0$. Recalling Equations~\eqref{eq:cons:x0Bound} and ~\eqref{eq:cons:x1Bound}, this proves the claim.

To show the convergence in Equation~\eqref{eq:cons:convStatement1} note that for $S \in \mathcal{S}$, in $\mathbb{R}^{N'}$, we have
$$
\|x_0 - g_S(a_i)\|^2-\|x_0-a_i\|^2= \sum_{j\in S} \left(2 +\frac{i}{\|x_0\|}\right)^2x_0(s_j)^2 \to \infty,
$$
as $i \to \infty$ and hence 
$$
\frac{\BR{x_0}{t_0}{g_S(a_i)}}{\BR{x_0}{t_0}{a_i}} = \exp\left(\frac{1}{2t_0}(\|x_0 - a_i\|^2-\|x_0 - g_S(a_i)\|^2)\right) \to 0
$$
as $i \to \infty$.

For the convergence in Equation~\ref{eq:cons:convStatement2}, we see that in $\mathbb{R}^{N'}$,
\begin{align}\label{eq:cons:BMPermDistBound}
&\|x_0' - g_S(\tau(a_i))\|^2 -\|x_0 - a_i\|^2\nonumber\\ = &\sum_{j \in S} (x_0'(u_j) + \tau(a_i(s_j))^2 + \sum_{j \in S^C} (x_0'(u_j) - \tau(a_i(s_j))^2-\sum_{j =1}^{N'} (x_0(s_j) - a_i(s_j))^2
\nonumber\\=& 2\sum_{j \in S} x_0'(u_j) \tau(a_i)(s_j) - 2\sum_{j \in S^C} x_0'(u_j) \tau(a_i)(s_j) +2\sum_{j =1}^{N'} x_0(s_j) a_i(s_j), 
\end{align}
using that $\sum_{j=1}^{N'} a_i(s_j)^2 =\sum_{j=1}^{N'} \tau(a_i(s_j)^2$, and $\sum_{j=1}^{N'} x_0(s_j)^2 =  \sum_{j=1}^{N'} x_0'(u_j)^2$. 
Since $\|x_0'\|=\|x_0\|$ and $a_i$ is a scalar multiple of $x_0$ as a vector in $\R^{N'}$, the Cauchy-Schwarz inequality shows that the sum of the second and third terms above is $\geq 0$.  
Since $S$ is not empty, and $a_i(s_j)\rightarrow \infty$ as $i\rightarrow\infty$,  the right hand side of Equation~\ref{eq:cons:BMPermDistBound} goes to infinity in the limit.
Since 
$$
\frac{\BR{x_0'}{t_0}{g_S(\tau(a_i))}}{\BR{x_0}{t_0} {a_i}} = \exp\left(-\frac{1}{2t_0}(\|x_0' - g_S(\tau(a_i))\|^2 -\|x_0 - a_i\|^2)\right)
$$
we therefore have the convergence in Equation~\ref{eq:cons:convStatement2}.
\end{proof}



\subsection{Additional material for the simulation study}
This section contains the following figures and tables:

\begin{itemize}
    \item Figure~\ref{fig:bridgeSimsLikelihoodPlots}: \nameref{fig:bridgeSimsLikelihoodPlots}.
    \item Table~\ref{tab:timesTakenForMLEsts}: \nameref{tab:timesTakenForMLEsts}.
    \item Figure~\ref{fig:ML:10TaxaTrees1}: \nameref{fig:ML:10TaxaTrees1}. 
    \item Figure~\ref{fig:ML:10TaxaConeTrees2}: \nameref{fig:ML:10TaxaConeTrees2}. 
    \item Figure~\ref{fig:ML:10TaxaConePath}: \nameref{fig:ML:10TaxaConePath}. 
    \item Figure~\ref{fig:distinctTopsPlot}: \nameref{fig:distinctTopsPlot}.
    \item Figure~\ref{fig:inferenceTopPropPlots}: \nameref{fig:inferenceTopPropPlots}.
    \item Table~\ref{tab:inferenceTimings}: \nameref{tab:inferenceTimings}.
    \item Table~\ref{tab:InferenceParamsAndAcceptanceRates}: \nameref{tab:InferenceParamsAndAcceptanceRates}. 
We investigated the similarity of the acceptance rate for the $t_0$ parameter across simulations. It appears that the acceptance probability of the $t_0$ proposal is dominated by the likelihood ratio, and the true value of $t_0$ can be shown to have a small effect on this when using the same value of $\sigma_0$ across simulations.
    \item Figure~\ref{fig:inferenceDispTracePlots}: 
    \nameref{fig:inferenceDispTracePlots}.
    \item Figure~\ref{fig:inferenceEdgeKDEs}:
     \nameref{fig:inferenceEdgeKDEs}.
\end{itemize}

\begin{figure}[H]
\begin{center}
\includegraphics[width=0.9\textwidth]{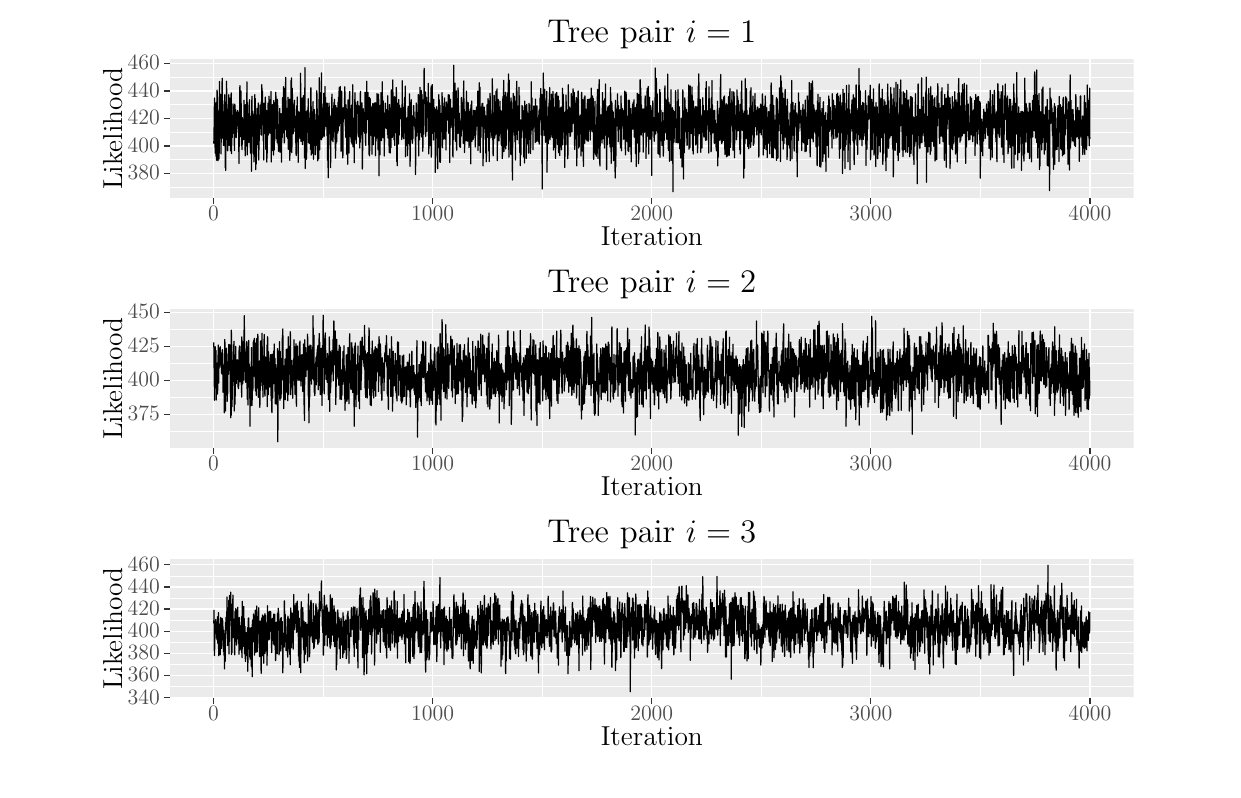}
\begin{caption}[Trace plots of the likelihood when simulating bridges between fixed endpoints in $\BHV{10}$ using Algorithm~\ref{alg:bridgesampler}.]{\label{fig:bridgeSimsLikelihoodPlots} Trace plots of the likelihood when simulating bridges between fixed endpoints in $\BHV{10}$ using Algorithm~\ref{alg:bridgesampler}.
}
\end{caption}
\end{center}
\end{figure}

\begin{center}
\begin{table}[H]
\centering
\begin{tabular}{rrrrrrr}
  \hline
   & Method & \multicolumn{3}{c}{Times taken (mins)} & Independence & Total MCMC its\\
   &  & & & & proposals & \\
   
   \hline
 & & \multicolumn{1}{|c|}{Trees $i=1$} & \multicolumn{1}{c}{Trees $i=2$} & \multicolumn{1}{|c|}{Trees $i=3$} &  & \\ 
  \hline
1 & Chib/Tunnel & 212 & 238 & 316 & $3\times 10^5$ & $1.52 \times 10^6$ \\ 
  2 & StepStone & 282 & 322 & 377 & $3\times 10^5$ & $1.22 \times 10^6$ \\ 
   \hline
   \end{tabular}
   \begin{caption}[Average time taken to obtain the samples required to compute the estimates of the marginal likelihood on $10$ taxa.]{\label{tab:timesTakenForMLEsts}Average time taken to obtain the samples required to compute the estimates of the marginal likelihood on $10$ taxa. The samples were simulated on a desktop computer with 24 2.40Ghz Intel Xeon CPUs (though all calculations were serial).
}
\end{caption}
\end{table}
\end{center}

\begin{figure}[H]
\begin{center}
\includegraphics[width=0.6\textwidth]{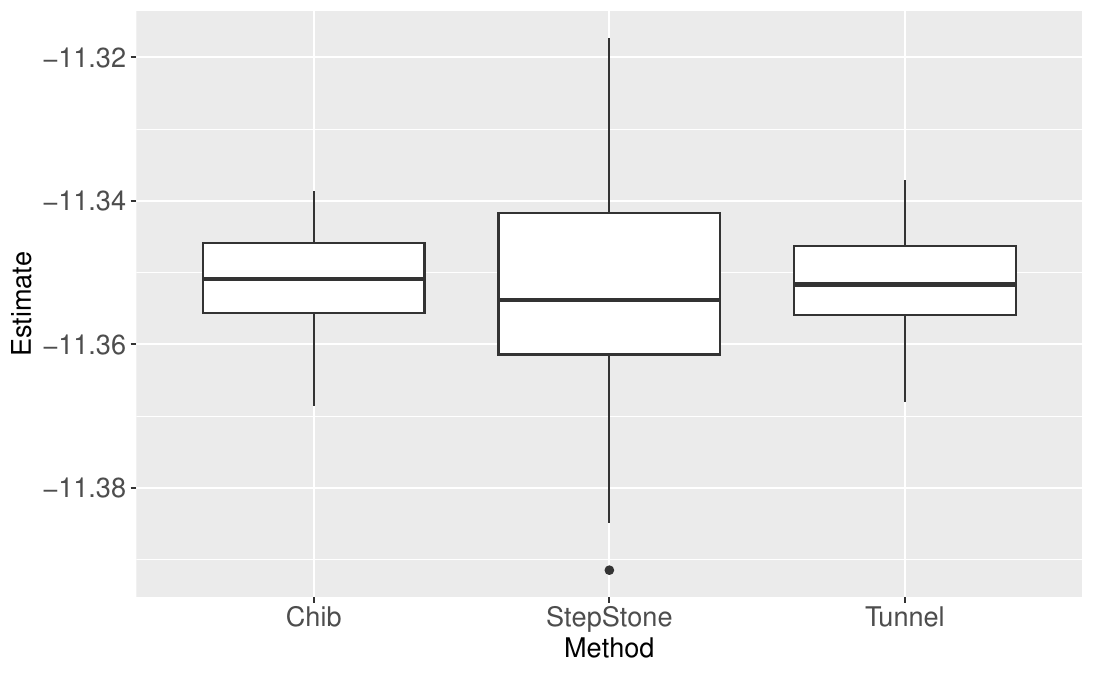}
\begin{caption}[Estimated log marginal likelihoods for tree pair $i=1$.]{\label{fig:ML:10TaxaTrees1}
Estimated log marginal likelihoods for tree pair $i=1$, $N=10$ taxa. 
The procedure was repeated $100$ times for each of the three estimators.
}
\end{caption}
\end{center}
\end{figure}

\begin{figure}[H]
\begin{center}
\includegraphics[width=0.6\textwidth]{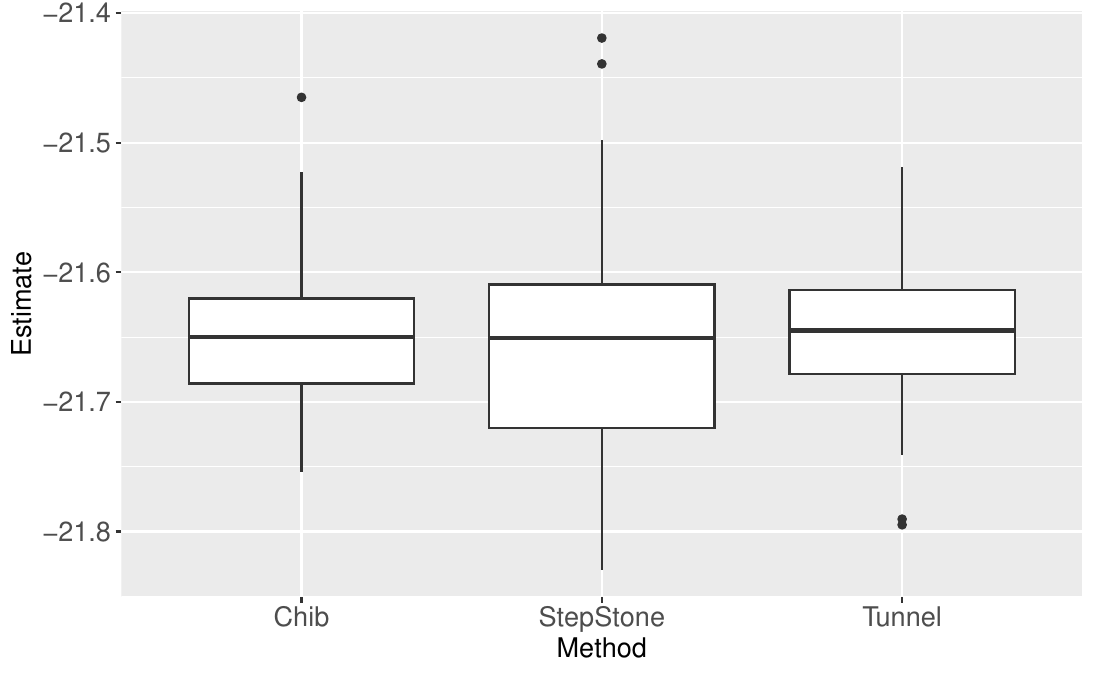}
\begin{caption}[Estimated log marginal likelihoods for tree pair $i=2$.]{\label{fig:ML:10TaxaConeTrees2}
Estimated log marginal likelihoods for tree pair $i=2$. 
}
\end{caption}
\end{center}
\end{figure}

\begin{figure}[H]
\begin{center}
\includegraphics[width=0.6\textwidth]{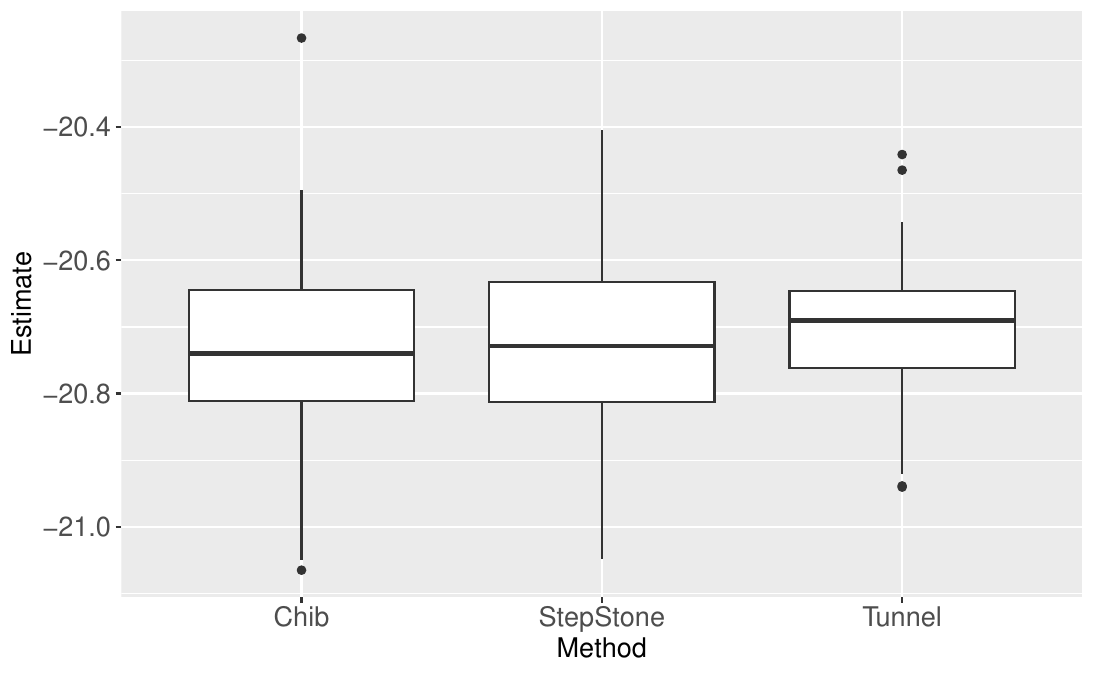}
\begin{caption}[Estimated log marginal likelihoods for tree pair $i=3$.]{\label{fig:ML:10TaxaConePath}
Estimated log marginal likelihoods for tree pair $i=3$, $N=10$ taxa. 
}
\end{caption}
\end{center}
\end{figure}

\begin{figure}[H]
\begin{center}
\includegraphics[width=0.7\textwidth]{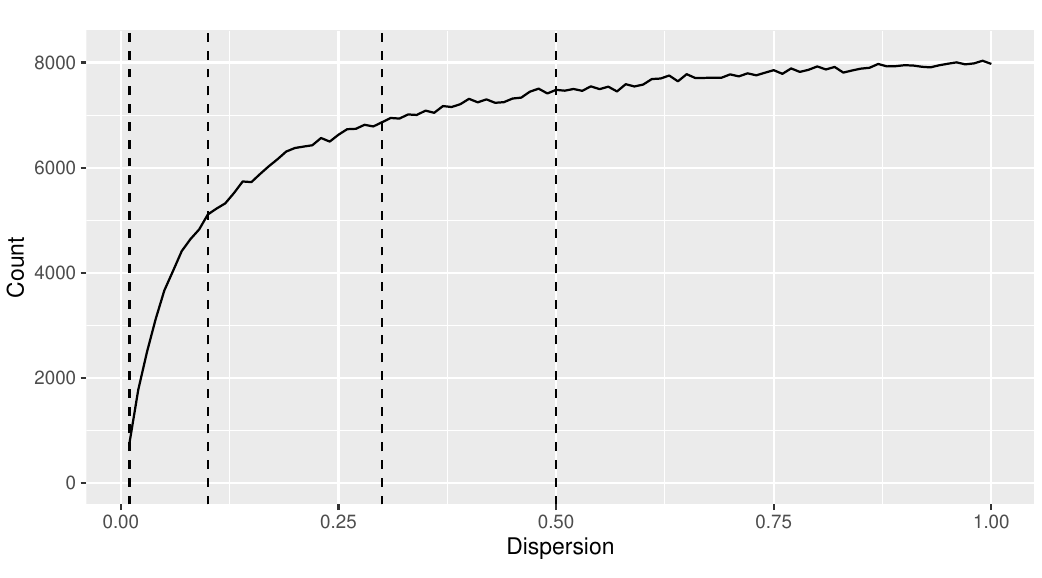}
\begin{caption}[Plot showing the count of distinct topologies obtained by forward simulating $10^4$ random walks with different values of dispersion from a fixed source tree with $N=10$ taxa and recording the topologies of the endpoints.]{\label{fig:distinctTopsPlot} Number of distinct topologies in samples of size $10^4$ from $\Wm{x_0}{t_0}$ from a fixed source tree with $N=10$ taxa and $m=2\times 10 ^3$, varying $t_0$.
}
\end{caption}
\end{center}
\end{figure}

\begin{figure}[H]
\begin{center}
\includegraphics[width=0.9\textwidth]{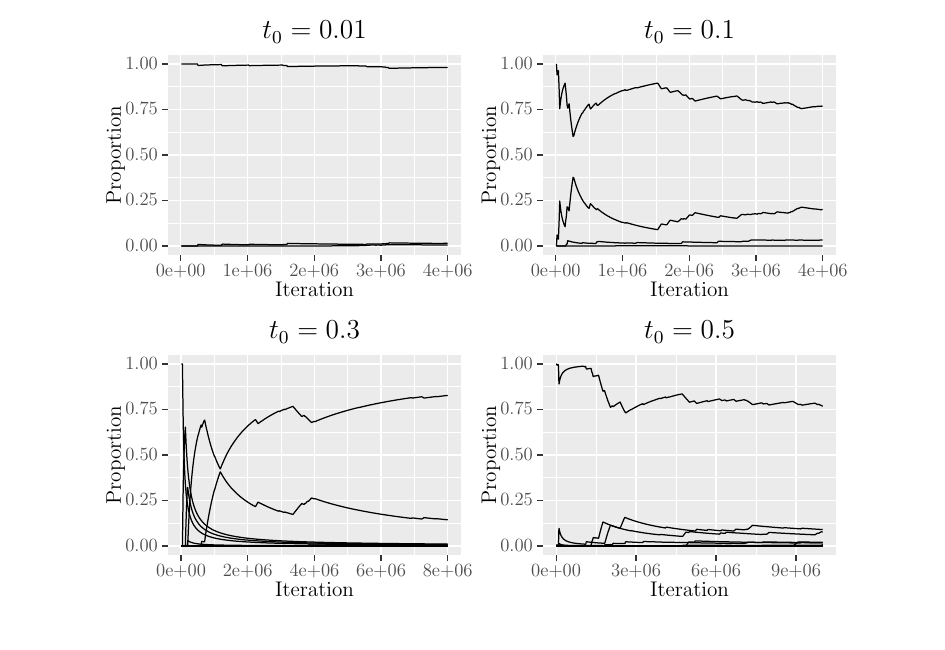}
\begin{caption}{\label{fig:inferenceTopPropPlots} Plots of the cumulative proportion of each topology that is observed in the marginal posterior sample for $x_0$ for the inferences on simulated data sets, excluding the burn-in period.
}
\end{caption}
\end{center}
\end{figure}

\begin{center}
\begin{table}[H]
\centering
\begin{tabular}{llll}
  \hline
 $t_0$ & Total iterations & Burn-in & Time taken (mins) \\ 
  \hline
 $0.01$ & $4.1 \times 10^6$ & $1.0 \times 10^5$ & $2.3 \times 10^3$ \\ 
 $0.10$ & $4.1 \times 10^6$ & $1.0 \times 10^5$ & $3.7 \times 10^3$ \\ 
 $0.30$  & $8.1 \times 10^6$ & $1.0 \times 10^5$ & $7.5 \times 10^3$ \\ 
 $0.50$  & $16.1 \times 10^6$ & $6.1 \times 10^6$ & $18.1 \times 10^3$ \\ 
   \hline
\\
\end{tabular}
\begin{caption}[The number of iterations, burn-in and computing time for each simulated data set.]{\label{tab:inferenceTimings} The number of iterations, burn-in and  the computing time for each simulated data set. The inference was performed on a desktop computer with four 3.40GHz Intel Core i7-6700 CPUs. We note that a thin of $100$ iterations was used in each inference in order to reduce the storage space needed for the output from the MCMC.
} 
\end{caption}
\end{table}
\end{center}

\begin{center}
\begin{table}[H]
\centering
\begin{tabular}{l|llll|lll}
\hline 
 & \multicolumn{4}{c|}{Parameter values}&\multicolumn{3}{c}{Acceptance rate} \\ 
  \hline
 $t_0$ & $\alpha_b$ & $\alpha_0$ & $\lambda_0$ & $\sigma_0 $ & Mean bridge & $x_0$ & $t_0$  \\ 
  \hline
 0.01 & 0.2 & 0.9 & 0.002 & 0.1 &  79.5\% & 26.2\% & 13.5\% \\ 
   0.10  & 0.2 & 0.9 & 0.002 & 0.1 & 57.2\% & 54.7\% & 13.4\% \\ 
   0.30  & 0.2 & 0.9 & 0.002& 0.1 & 53.7\% & 49.1\%& 13.4\% \\ 
   0.50  & 0.2 & 0.9 & 0.001& 0.1 & 52.2\% & 45.3\%& 13.4\% \\
   \hline
\end{tabular}
\begin{caption}{\label{tab:InferenceParamsAndAcceptanceRates}Proposal parameter values and proposal acceptance rates for the inference on simulated data sets.
}
\end{caption}
\end{table}
\end{center}

\begin{figure}[H]
\begin{center}
\includegraphics[width=0.9\textwidth]{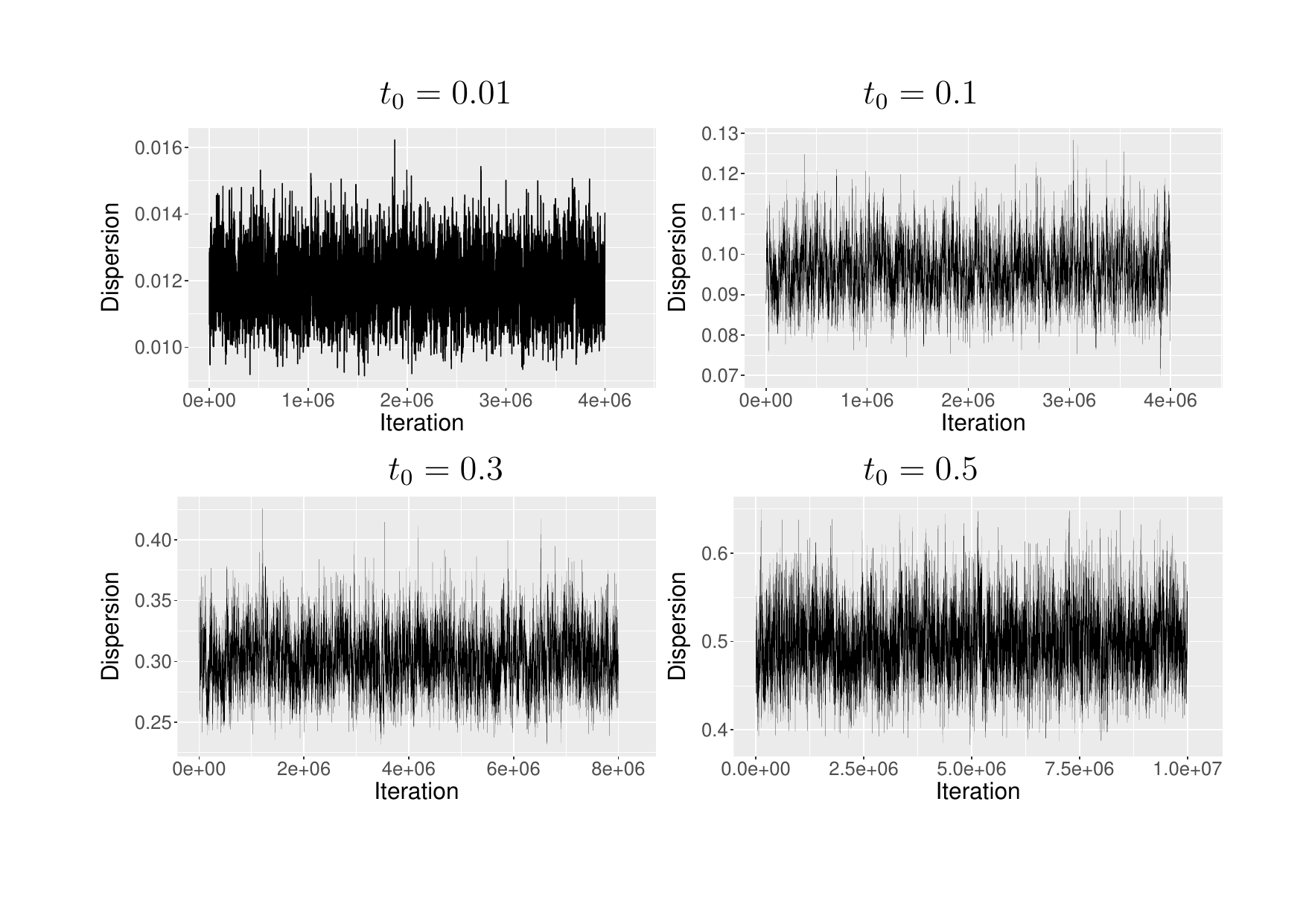}
\begin{caption}{\label{fig:inferenceDispTracePlots} 
Trace plots of the parameter $t_0$ for the inference on simulated data sets.
}
\end{caption}
\end{center}
\end{figure}

\begin{figure}[H]
\begin{center}
\includegraphics[width=0.9\textwidth]{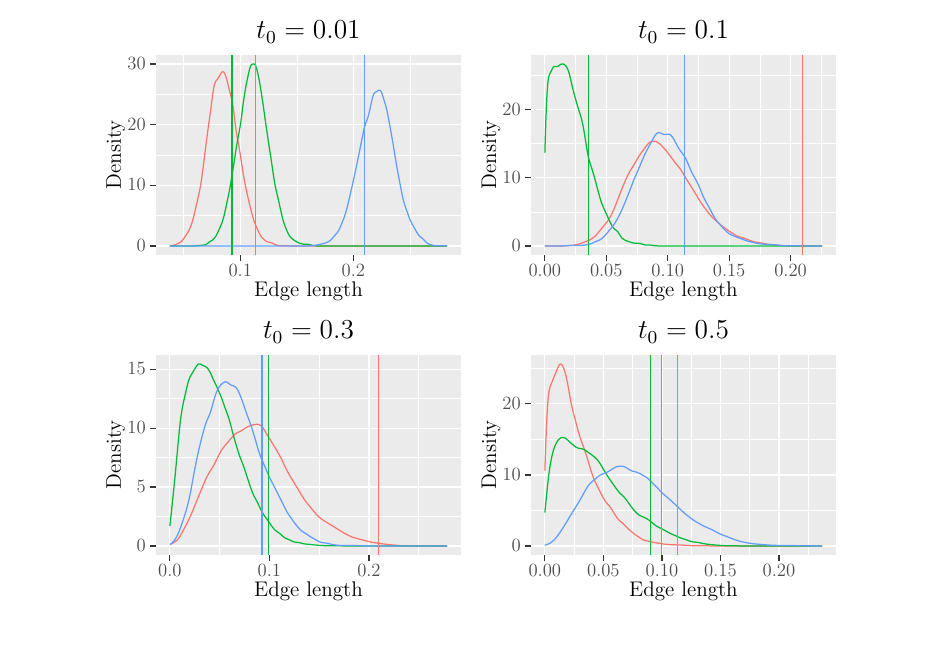}
\begin{caption}[Kernel density estimates of the lengths of three representative splits in the marginal posterior of $x_0$, conditional on the modal topology for $x_0$.]{\label{fig:inferenceEdgeKDEs} Kernel density estimates of the lengths of three representative splits in the marginal posterior of $x_0$, conditional on the modal topology for $x_0$ when the inference is tested on simulated data sets. Blue lines represent the best fitting split, whilst green lines are for an average fitting split and red lines are for the split that has the worst fit by the posterior sample. Vertical lines show the true length of the split in the source tree.
}
\end{caption}
\end{center}
\end{figure}

\subsection{Plots for the biological example}
This section contains the following figures and tables:

\begin{itemize}
    \item Figure~\ref{fig:yeastLikelihoodPlot}: \nameref{fig:yeastLikelihoodPlot}.
    \item Figure~\ref{fig:yeastDispTraceplot}: \nameref{fig:yeastDispTraceplot}.
    \item Figure~\ref{fig:yeastDispKDEplot}: \nameref{fig:yeastDispKDEplot}.
    \item Figure~\ref{fig:yeastCumulativePropPlot}: \nameref{fig:yeastCumulativePropPlot}.
    \item Figure~\ref{fig:yeastEdgeLenghtKDEplot}: \nameref{fig:yeastEdgeLenghtKDEplot}. 
    \item Figure~\ref{fig:experimentalSummaryTrees}: \nameref{fig:experimentalSummaryTrees}. 
    \item Table~\ref{tab:yeastAcceptanceRates}: \nameref{tab:yeastAcceptanceRates}. 
    \item Figure~\ref{fig:experimentalForwardSimGDs}: \nameref{fig:experimentalForwardSimGDs}. 
\end{itemize}

\begin{figure}[H]\begin{center}
\includegraphics[width=0.6\textwidth]{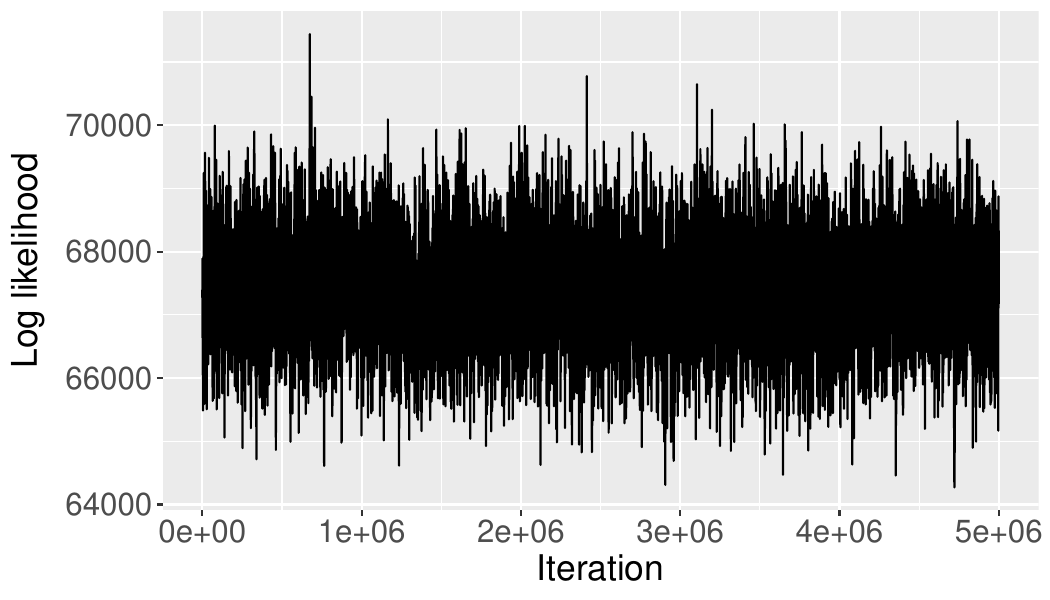}
\begin{caption}[Traceplot of the log likelihood in the inference on the data set of yeast gene trees.]{\label{fig:yeastLikelihoodPlot} 
Traceplot of the log likelihood in the inference on the data set of yeast gene trees, excluding the burn-in period.
}
\end{caption}
\end{center}
\end{figure}

\begin{figure}[H]
\begin{center}
\includegraphics[width=0.6\textwidth]{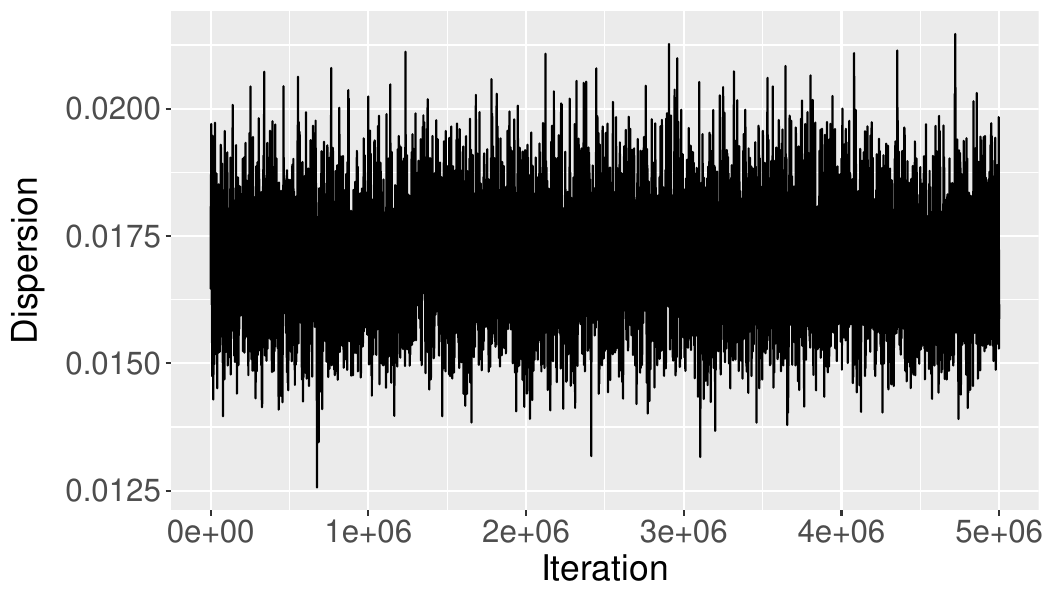}
\begin{caption}[Traceplot of the dispersion parameter in the inference on the data set of yeast gene trees.]{\label{fig:yeastDispTraceplot} 
Traceplot of the dispersion parameter in the inference on the data set of yeast gene trees, excluding the burn-in period.
}
\end{caption}
\end{center}
\end{figure}

\begin{figure}[H]
\begin{center}
\includegraphics[width=0.6\textwidth]{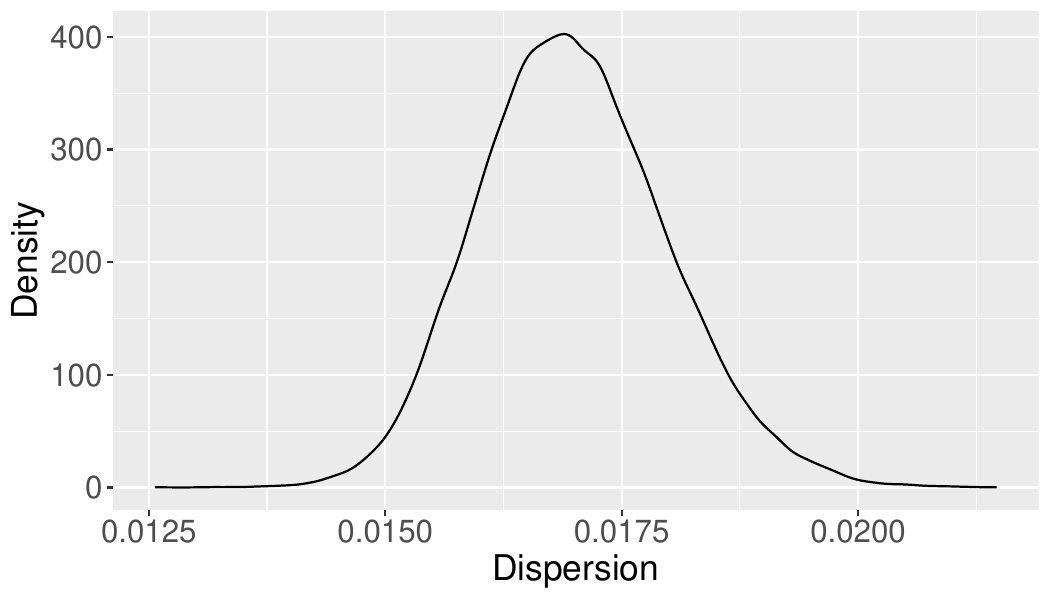}
\begin{caption}{\label{fig:yeastDispKDEplot} 
Kernel density estimate of the posterior for $t_0$ for the yeast gene trees.
}
\end{caption}
\end{center}
\end{figure}

\begin{figure}[H]
\begin{center}
\includegraphics[width=0.6\textwidth]{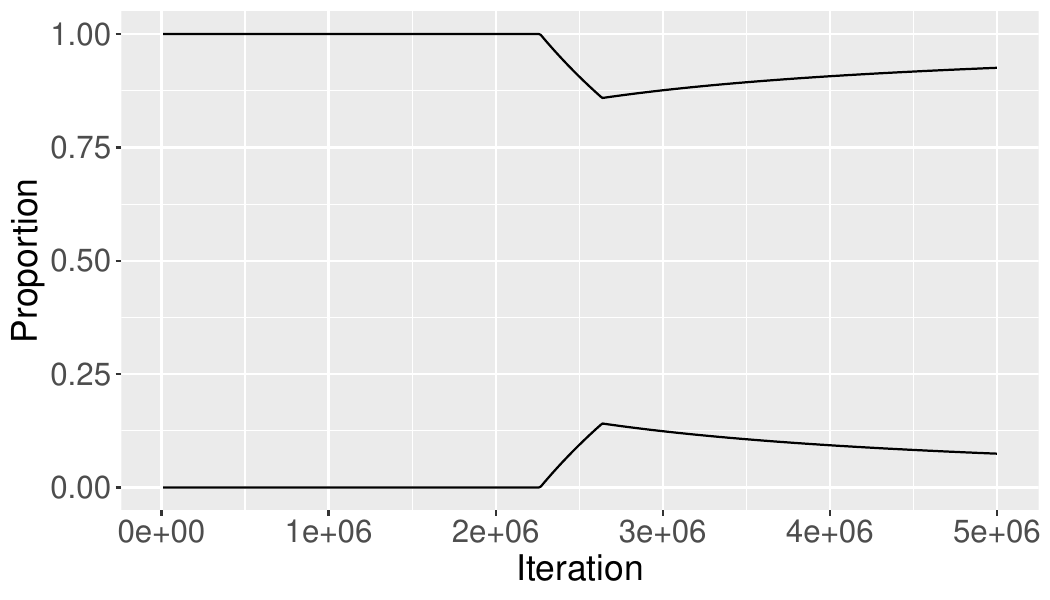}
\begin{caption}{\label{fig:yeastCumulativePropPlot} 
Plot of the cumulative proportion of the topologies observed in the posterior sample for $x_0$ for the yeast data set.
}
\end{caption}
\end{center}
\end{figure}

\begin{figure}[H]
\begin{center}
\includegraphics[width=0.6\textwidth]{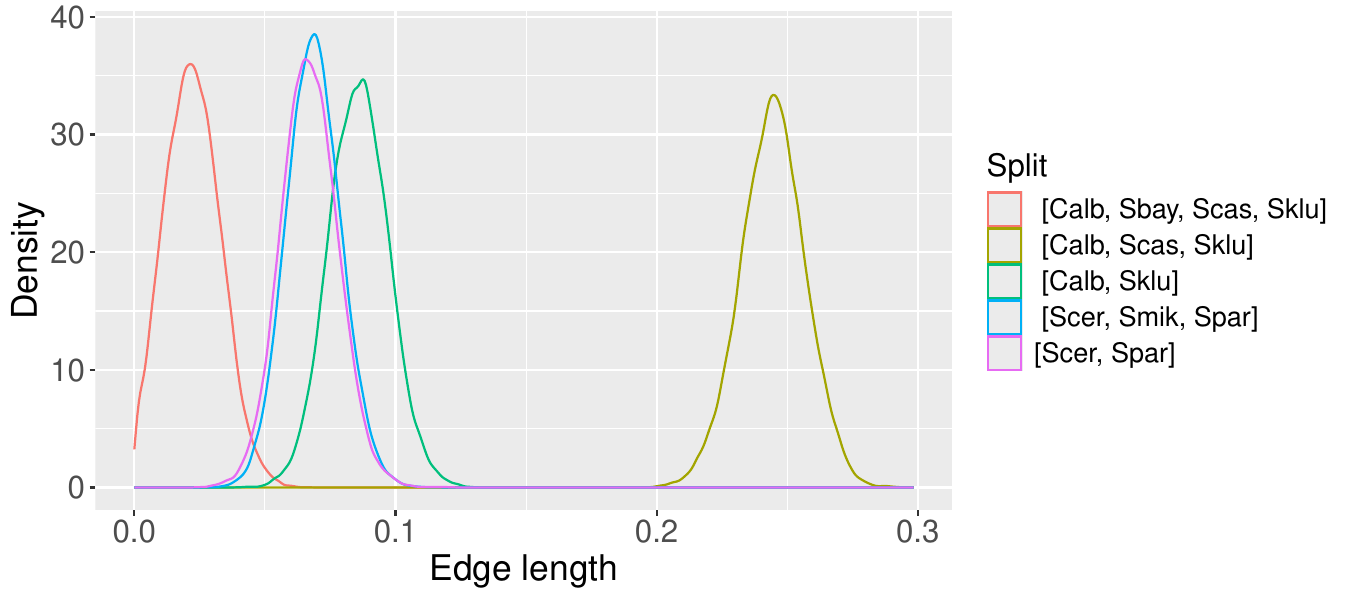}
\begin{caption}{\label{fig:yeastEdgeLenghtKDEplot} 
Plot of the kernel density estimates of the split lengths in the modal topology for the yeast data set.
}
\end{caption}
\end{center}
\end{figure}

\begin{figure}[H]
\begin{center}
\includegraphics[width=0.6\textwidth]{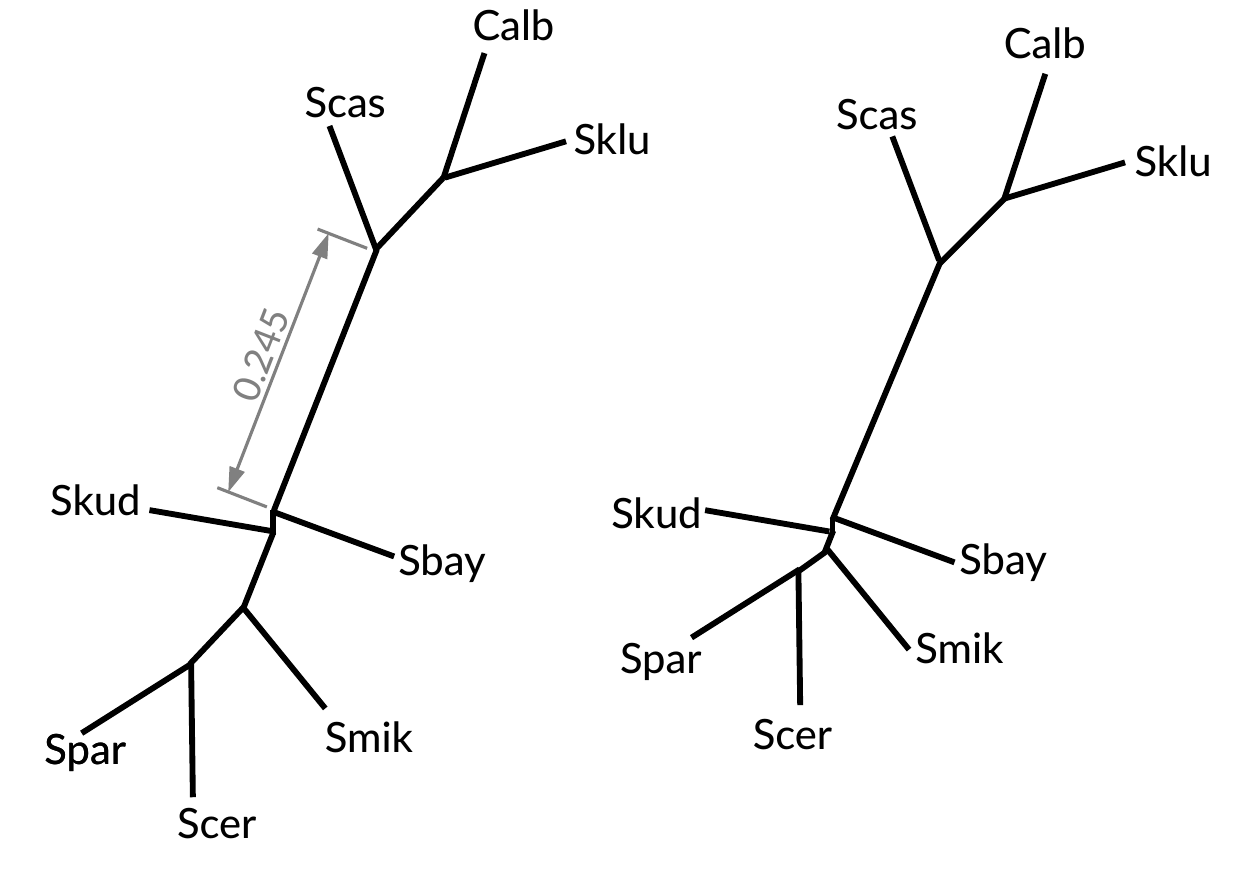}
\begin{caption}[The posterior modal source tree and the Fr\'echet mean of the yeast gene trees.]{\label{fig:experimentalSummaryTrees} The posterior modal source tree (left) and the Fr\'echet mean (right) of the yeast gene trees. 
The posterior modal source tree has the modal topology, and conditional on this, modal edge lengths.  
}
\end{caption}
\end{center}
\end{figure}

\begin{center}
\begin{table}[H]
\centering
\begin{tabular}{lllllll}
\hline 
 \multicolumn{4}{c}{Parameter values}&\multicolumn{3}{c}{Acceptance rate} \\ 
  \hline
 $\alpha_b$ & $\alpha_0$ & $\lambda_0$ & $\sigma_0 $ & Mean bridge & $x_0$ & $t_0$  \\ 
  \hline
 0.08 & 0.9 & 0.002 & 0.1 &  63.2\% & 24.8\% & 11.0\% \\ 
   \hline
\end{tabular}
\begin{caption}{\label{tab:yeastAcceptanceRates}Parameter values and proposal acceptance rates for the inference on the data set of $106$ yeast gene trees.
}
\end{caption}
\end{table}
\end{center}

\begin{figure}[H]
\begin{center}
\includegraphics[width=0.6\textwidth]{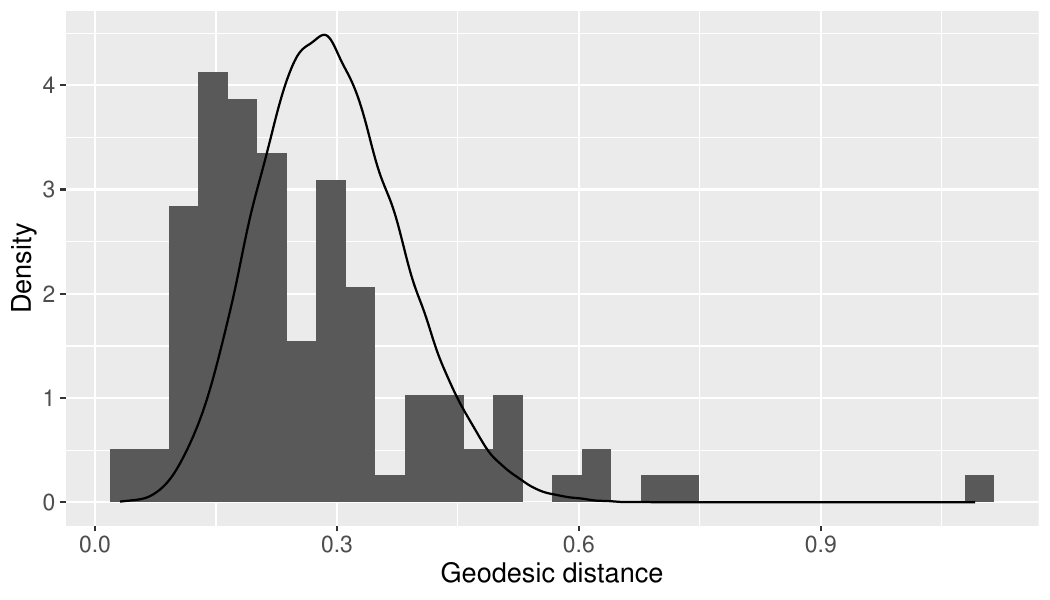}
\begin{caption}[Distribution of geodesic distances between the posterior modal source tree and (i) the data set, (ii) a the set of particles simulated under the model.]{\label{fig:experimentalForwardSimGDs} Distribution of geodesic distances between the posterior modal source tree and (i) the data set (bars), (ii) a the set of particles simulated under the model with source and dispersion parameters fixed at the posterior mode (continuous kernel density estimate plotted).
}
\end{caption}
\end{center}
\end{figure}


\end{document}